\newtheorem{theorem}{Theorem}[]
\newtheorem{lemma}{Lemma}[]
\newtheorem{corollary}{Corollary}[]
\newtheorem{definition}{Definition}[]
\newtheorem{proposition}{Proposition}[]
\newtheorem{assumption}{Assumption}[]
\providecommand{\keywords}[1]
{
  \small	
  \textit{Keywords:} #1
}
\renewcommand{\section}{
	\@startsection
	{section}
	{1}
	{0pt}
	{1.1\baselineskip}
	{0.2\baselineskip}
	{\sc \centering}
}
\renewcommand{\subsection}{
	\@startsection
	{subsection}
	{1}
	{0pt}
	{1.1\baselineskip}
	{0.2\baselineskip}
	{\sc \centering}
}
\renewcommand{\subsubsection}{
	\@startsection
	{subsubsection}
	{1}
	{0pt}
	{1.1\baselineskip}
	{0.2\baselineskip}
	{\sc \centering}
}
\pgfplotsset{width=8cm,compat=1.9}
\newcommand{\norm}[1]{\lVert #1 \rVert}
\newcommand{\bnorm}[1]{\bigg \lVert #1 \bigg \rVert}
\newcommand{\abs}[1]{\lvert #1 \rvert}
\newcommand{\babs}[1]{\bigg \lvert #1 \bigg \rvert}
\newcommand{\probability}[1]{\mathbb{P}{#1}}
\newcommand{\expectation}[1]{\mathbb{E}{#1}}
\begin{document}
\title{\large\sc Multilevel Monte Carlo in Sample Average Approximation: Convergence, Complexity and Application}
\normalsize
\author{\sc{Devang Sinha} \thanks{Department of Mathematics, Indian Institute of Technology Guwahati, Guwahati-781039, Assam, India, e-mail: dsinha@iitg.ac.in}
\and \sc{Siddhartha P. Chakrabarty} \thanks{Department of Mathematics, Indian Institute of Technology Guwahati, Guwahati-781039, Assam, India, e-mail: pratim@iitg.ac.in,
Phone: +91-361-2582606, Fax: +91-361-2582649}}
\date{}
\maketitle

\begin{abstract}
In this paper, we examine the Sample Average Approximation (SAA) procedure within a framework where the Monte Carlo estimator of the expectation is biased. We also introduce Multilevel Monte Carlo (MLMC) in the SAA setup to enhance the computational efficiency of solving optimization problems. In this context, we conduct a thorough analysis, exploiting Cramér's large deviation theory, to establish uniform convergence, quantify the convergence rate, and determine the sample complexity for both standard Monte Carlo and MLMC paradigms. Additionally, we perform a root-mean-squared error analysis utilizing tools from empirical process theory to derive sample complexity without relying on the finite moment condition typically required for uniform convergence results. Finally, we validate our findings and demonstrate the advantages of the MLMC estimator through numerical examples, estimating Conditional Value-at-Risk (CVaR) in the Geometric Brownian Motion and nested expectation framework.
\end{abstract}
\keywords{Sample Average Approximation, Multilevel Monte Carlo, Complexity, CVaR}
\section{Introduction}
\label{section:introduction}
The Sample Average Approximation (SAA) method is a numerical algorithm grounded in Monte Carlo principles, designed to address optimization problems, where the input data is uncertain. Specifically, it targets approximating the optimal solution for optimization problems, formulated as follows:

\begin{equation}
\label{saa:eq1}
\min_{x \in \mathcal{X}}\left \{ F(x) := \mathbb{E}[f(x,\zeta)] \right \} 
\end{equation}

Here, $\mathcal{X}\subseteq \mathbb{R}^d$ is assumed to be a finite-dimensional compact set, and $f(\cdot,\zeta)$ denotes the cost function, with $\zeta$ being a random vector. Typically, this optimization doesn't lend itself to analytical solutions, necessitating a Monte Carlo-based approach for approximation. The SAA method, known for its simplicity and robustness, has become a preferred tool among practitioners. SAA operates by solving an approximation of the original problem, defined as:
\begin{equation}
\label{saa:eq2}
    \min_{x \in \mathcal{X}}\left \{ F_{N}(x):= \frac{1}{N}\sum_{k = 1}^{N}f(x,\zeta_k)\right\}
\end{equation}
where, $\zeta_1,\dots,\zeta_N$ represent independent and identically distributed (i.i.d) samples of the random vector $\zeta$, drawn from its distribution. Extensive literature has documented the convergence of the approximate solution to the optimal one, with numerous references discussing the convergence and providing comprehensive surveys on SAA. In this domain the seminal work by in \cite{kleywegt2002sample} demonstrated that if $f(\cdot,\zeta)$ is Lipschitz continuous, SAA requires a computational complexity of $\mathcal{O}((d+\gamma)\epsilon^{-2}\log(\epsilon^{-1}))$ to achieve an $\epsilon$-optimal solution with probability $\epsilon^{\gamma}$, for some $\gamma >0$. Similar results were obtained for constrained stochastic programs and two-stage stochastic optimization problems in subsequent studies \cite{shapiro2021lectures}.\\
However, existing studies inherently assumed that the Monte Carlo estimator is unbiased \textit{i.e.}
\[
\expectation{\left[\frac{1}{N}\sum_{k = 1}^N f(x,\zeta_k)\right]} = F(x) ~~\forall ~x\in \mathcal{X}.
\]
This paper explores the impact on the computational complexity of the SAA procedure due to bias introduced in the estimator while sampling the random variable from its approximate distribution. 
The primary motivation for studying the SAA in the biased framework bears testimony to the fact that in most practical scenarios, the sampling of the random variable from its exact distribution is not always possible. For example, portfolio selection problems in finance \cite{hong2017kernel}, robust supervised learning in computer vision and speech recognition \cite{hu2020sample}, reinforcement learning in policy evaluation \cite{hu2020sample}, all belong to a class of conditional stochastic optimization problem, defined as,
\begin{equation}
    \label{saa:eq3}
    \min_{x \in \mathcal{X}}\bigg\{F(x) := \mathbb{E}_{\zeta}\bigg[f\bigg(\mathbb{E}_{\eta|\zeta}[g_{\eta}(x,\zeta)]\bigg)\bigg]\bigg \}.
\end{equation}
In the above setup, approximating the inner expectation using a Monte Carlo procedure introduces a bias in estimating the expectation. As the problem belonging to the class of stochastic composition optimization has been a long-standing challenge in science and engineering, extensive research studying SAA to solve them is available in the literature, see, e.g.\cite{wang2017stochastic,yermol1971general,ermoliev2013sample,dentcheva2017statistical}. However, for the most part, the primary focus of their discussion is towards studying the asymptotic properties of the estimator, deriving central limit formulae \cite{dentcheva2017statistical} and establishing the rate of convergence \cite{ermoliev2013sample}. The study performed in \cite{hu2020sample} shows that the computational complexity required for solving the conditional stochastic optimization problem is $\mathcal{O}((d+\gamma)\epsilon^{-4}\log(\epsilon^{-1}))$, if the cost function is Lipschtiz continuous and is $\mathcal{O}((d+\gamma)\epsilon^{-3}\log(\epsilon^{-1}))$, if the cost function is smooth, where the increase in the computational complexity can be attributed to the Monte Carlo approximation of the inner expectation, which in turn induces bias in the estimator. Another problem of practical importance falls in the realm of risk management. It has been well-documented in the literature that problems related to risk estimation can be composed as the optimization problem, see, e.g.\cite{rockafellar2000optimization,ruszczynski2006optimization}. For example, Conditional Value-at-Risk (CVaR), a highly sought-after coherent risk measure, can be formulated as,
\begin{equation}
    \label{saa:eq4}
    \text{CVaR}_{\theta}(\zeta) = \min_{x \in \mathbb{R}}\left\{F(x) = \expectation{\left[x + \frac{1}{1-\theta}(\zeta - x)_{+}\right]}\right\}.
\end{equation}
Taking cues from the studies performed by in \cite{gordy2010nested,broadie2011efficient}, the above problem can be seen as a conditional stochastic optimization problem. Moreover, from the perspective of finance, if we assume that the underlying asset driving the loss function is modelled on a stochastic differential equation, then using the numerical approximation technique to approximate the loss would induce bias in the Monte Carlo estimation, affecting the performance of the SAA procedure. To our knowledge, a generalised study on the biased approximation of the random variable in the SAA paradigm has not been explored. We intend to fill this gap in the paper. The significant contribution of the paper is presented below.
\begin{itemize}
    \item[\textbf{(1)}] In section \ref{section:monte_carlo_saa}, we conduct a thorough analysis of the SAA framework that operates by solving the approximation of the original problem \eqref{saa:eq1} given by,
    \begin{equation}
    \label{saa:eq5}
         \min_{x \in \mathcal{X}}\left \{ F_h^N(x):= \frac{1}{N}\sum_{k = 1}^{N}f(x,\zeta_h^k)\right\}
    \end{equation}
    where $\zeta_h^k$'s are i.i.d sample of $\zeta_h$ and $h$ is a bias parameter. Gaining insights from the study in \cite{hu2020sample}, we state and prove the uniform convergence result and analyse the sampling complexity associated with achieving $\epsilon-$optimal solution.  
    
    \item[\textbf{(2)}] We further extend the analysis to the MLMC paradigm in section \ref{section:multilevel_saa}. The supremacy of MLMC over standard Monte Carlo has well been established in the literature, see, e.g.\cite{giles2008multilevel,giles2008improved,giles2013antithetic,pages2018numerical}. Therefore, the extension to the multilevel framework appears natural. The underlying idea behind the MLMC-SAA is solving the approximation to the original optimization problem given by
    \begin{equation}
    \label{saa:eq6}
        \min_{x \in \mathcal{X}}\left\{F_L(x) = \sum_{\ell = 0}^L\frac{1}{N_{\ell}}\sum_{j = 1}^{N_{\ell}}\left(f(x,\zeta^j_{\ell})- f(x,\zeta^j_{\ell-1})\right)\right\},
    \end{equation}where, $\zeta_{\ell}$ is the level $\ell$ approximation of the random variable $\zeta$. We observe that under some mild regularity conditions, MLMC-SAA appears to have computational complexity equivalent to that achieved under the unbiased framework. 
    
    \item[\textbf{(3)}] In section \ref{section:rmse_analysis}, we undertake a root-mean-squared error (RMSE) analysis of the optimal values obtained by solving the SAA problem both in the standard Monte Carlo and MLMC context. To our knowledge, such analysis is not available in the existing literature. In this regard, we borrow tools from empirical process theory, which we briefly review in section \ref{section: preliminaries}. We also extended the result to estimate the root-mean-squared error of the optimal gap estimator, which deals with assessing the quality of a candidate solution. Consequently, we also derive the sample complexity necessary to obtain the optimal value with $\epsilon-$RMSE.
\end{itemize}
In section \ref{section: preliminaries}, we recall the assumptions, definitions and results required to analyse the remainder of the work. Finally, in section \ref{section:numerical_illustration}, we perform numerical illustrations studying the efficacy of our discussed procedure both in the standard Monte Carlo and MLMC paradigm.

\section{Assumptions and Preliminaries}
\label{section: preliminaries}
Let $(\Omega, \mathcal{F},\mathbb{P})$ be the complete probability space. In this space, let us consider the following stochastic optimization problem,
\begin{equation}
    \label{saa:eq6}
    \min_{x \in \mathcal{X}}\left \{ F(x) = \mathbb{E}[f(x,\zeta)] \right \}
\end{equation}
where $\mathcal{X} \subset \mathbb{R}^d$ is a finite-dimensional compact set, $\zeta$ is a random vector whose distribution is supported on the set $\Theta \in \mathbb{R}^s$, and $f \colon \mathcal{X} \times \Theta \rightarrow \mathbb{R}$ is the cost function. Throughout our discussion, we assume that  for all $x \in \mathcal{X}$, $f(\cdot,\zeta)$ is Borel-measurable in $\zeta$ and is also Lipschitz continuous, \textit{i.e.},
\begin{assumption}
\label{lip_assumption}
For all $x_1,x_2 \in \mathcal{X}$,
    \begin{equation}
    \label{eq:7}
   \abs{f(x_1,\zeta)- f(x_2,\zeta)}\leq L_f\norm{x_1-x_2}
\end{equation}

where $L_f$ is the Lipschitz constant for any given $\zeta$. 
\end{assumption}
Since we will be working in the bias framework, we let $\mathcal{B}$ be the set of bias parameters with $\mathcal{B}\cup\{0\}$ being the compact set and
\[
\forall m \in \mathbb{N}, \quad \frac{\mathcal{B}}{m} \subset \mathcal{B}.
\]
We let $\zeta_h$ be an approximation of the random variable $\zeta$ for some $h \in \mathcal{B}$ defined on the same probability space. Also, for a given $x \in \mathcal{X}$, we have the random variable $f(x,\zeta_h)$ and $f(x,\zeta)$ such that $\displaystyle{\expectation{[f(x,\zeta_h)]}\rightarrow \expectation{[f(x,\zeta)]}}$ as $h \rightarrow 0$. In order to strengthen the bias condition, for a $x \in \mathcal{X}$, we define the \textit{weak expansion error} as,
\begin{equation}
\label{eq:8}
    \expectation{[f(x,\zeta_h)]} = \expectation{[f(x,\zeta)]} + c_1h^{\alpha} +o(h^{\alpha})
\end{equation}
where $\alpha >0$ and we assume consistency of the expansion \textit{i.e.} $c_1 \neq 0$. Lastly, we assume the existence of a unique solution to the optimization problem \ref{saa:eq6}. Below, we present the definitions that would be relevant throughout our analysis and discussion.
\begin{definition}
    Let $\displaystyle{\mathfrak{p}^* := \min_{x \in \mathcal{X}}\left \{ F(x) = \mathbb{E}[f(x,\zeta)] \right \}}$, then $x_{\epsilon} \in \mathcal{X}$ is said to the $\epsilon$-optimal solution if 
    \[
    F(x_{\epsilon})\leq \mathfrak{p}^*+\epsilon
    \]
\end{definition}
\begin{definition}
    For $v \in (0,1)$ and $\norm{\cdot}$ defined on $\mathcal{X}$, $\{x_k\}_{k = 1}^{Q(v,\norm{\cdot},\mathcal{X})}$ is said to be a $v$-net of $\mathcal{X}$ if
    \begin{itemize}
        \item $x_k \in \mathcal{X}$ for all $k \in \{1,\dots,Q(v,\norm{\cdot},\mathcal{X})\}$.
        \item $\forall x \in \mathcal{X}$ there exists $k(x) \in \{ 1, \dots, Q(v,\norm{\cdot},\mathcal{X})\}$, such that $\lVert x - x_{k(x)} \rVert \leq v$.
    \end{itemize}
\end{definition}
In empirical process theory, $Q(v,\norm{\cdot},\mathcal{X})$ is considered as the covering number, with ball size $v$, on the space $\mathcal{X}$. The following result gives an upper bound of the covering number.
\begin{lemma}
   Let $\mathcal{X}$ has a finite diameter $\mathcal{D}_{\mathcal{X}}$, then for any $v \in (0,1)$, there exists a $v$-net of $\mathcal{X}$ and size of that $v$-net is bounded, \textit{i.e.}, $Q(v,\norm{\cdot},\mathcal{X}) \leq \mathcal{O}((\mathcal{D}_{\mathcal{X}}/v)^d)$. 
\end{lemma}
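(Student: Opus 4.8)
The plan is to prove this via a standard \emph{packing} (maximal separated set) argument, which simultaneously yields the existence of a $v$-net and the cardinality bound. First I would fix $v \in (0,1)$ and consider a maximal subset $\{x_1,\dots,x_Q\} \subseteq \mathcal{X}$ that is $v$-separated, meaning $\norm{x_i - x_j} > v$ for all $i \neq j$. Such a maximal set exists: since $\mathcal{X}$ is compact, hence totally bounded, any $v$-separated subset is finite, so one may take one of maximum cardinality. The key observation is that maximality forces this set to be a $v$-net: if some $x \in \mathcal{X}$ satisfied $\norm{x - x_k} > v$ for every $k$, then $x$ could be adjoined to the set without destroying separation, contradicting maximality. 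Hence every $x \in \mathcal{X}$ admits an index $k(x)$ with $\norm{x - x_{k(x)}} \le v$, which verifies the $v$-net property and shows $Q(v,\norm{\cdot},\mathcal{X}) \le Q$.

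It then remains to bound $Q$ by a volumetric comparison. I would consider the open balls $B(x_k, v/2)$ of radius $v/2$ centred at the net points. Because the centres are pairwise more than $v$ apart, these balls are pairwise disjoint. On the other hand, since $\mathcal{X}$ has diameter $\mathcal{D}_{\mathcal{X}}$, it is contained in $B(x_1, \mathcal{D}_{\mathcal{X}})$, and consequently each $B(x_k, v/2)$ lies inside $B(x_1, \mathcal{D}_{\mathcal{X}} + v/2)$. Writing $\mathrm{vol}(\cdot)$ for Lebesgue measure on $\mathbb{R}^d$ and using that a ball of radius $r$ has volume $c_d r^d$ for a dimensional constant $c_d$, disjointness together with containment gives
\begin{equation*}
Q \, c_d \left(\frac{v}{2}\right)^d \;=\; \sum_{k=1}^{Q} \mathrm{vol}\big(B(x_k, v/2)\big) \;\le\; \mathrm{vol}\big(B(x_1, \mathcal{D}_{\mathcal{X}} + v/2)\big) \;=\; c_d\left(\mathcal{D}_{\mathcal{X}} + \frac{v}{2}\right)^d.
\end{equation*}

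The dimensional constant $c_d$ cancels, leaving $Q \le \left(1 + 2\mathcal{D}_{\mathcal{X}}/v\right)^d$. Finally, since $v \in (0,1)$ and $\mathcal{D}_{\mathcal{X}}$ is fixed, the term $2\mathcal{D}_{\mathcal{X}}/v$ dominates as $v \to 0$, so $Q(v,\norm{\cdot},\mathcal{X}) \le Q = \mathcal{O}\big((\mathcal{D}_{\mathcal{X}}/v)^d\big)$, which is exactly the claimed bound. I do not anticipate a genuine obstacle here, as the argument is elementary; the only points requiring care are justifying the disjointness and the enclosing-ball containment precisely, so that $c_d$ cancels cleanly, and ensuring the maximal separated set is well defined, which rests on the total boundedness of the compact set $\mathcal{X}$.
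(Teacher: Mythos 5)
Your proof is correct. The paper states this lemma without proof, treating it as a standard fact from empirical process theory, so there is no in-paper argument to compare against; your maximal-packing construction followed by the volumetric comparison is precisely the canonical way this bound is established (and it is valid for any norm on $\mathbb{R}^d$, since all norm balls of radius $r$ have volume $c_d r^d$). The only cosmetic caveat is at the very end: the bound $\left(1+2\mathcal{D}_{\mathcal{X}}/v\right)^d = \mathcal{O}\left((\mathcal{D}_{\mathcal{X}}/v)^d\right)$ requires $v \leq \mathcal{D}_{\mathcal{X}}$ so that the constant term does not dominate, but in the opposite regime a single point is already a $v$-net, so the claim holds trivially there.
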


Moreover, the Cram\'{e}r's large deviation theorem will frequently used throughout our analysis and we state it here as a lemma based on the discussion in \cite{ruszczynski2003stochastic,hu2020sample}. 
\begin{lemma}
\label{cramer_ld}
    Let $X_1,\dots,X_N$ be i.i.d samples of zero-mean random variable $X$ with finite variance $\sigma^2$. For any $\epsilon>0$, it holds that 
    \begin{equation}
        \probability{\left(\frac{1}{N}\sum_{j = 1}^N X_j\geq \epsilon\right)}\leq \exp(-NI(\epsilon)),\nonumber
    \end{equation}
    where $I(\epsilon)$ is the rate function defined as $\displaystyle{I(\epsilon):= \sup_{t \in \mathbb{R}}\{t\epsilon - \log(M(t))\}}$, with $\displaystyle{M(t):=\expectation{[e^{tX}]}}$ being the moment generating function of $X$. Further, for any $\delta>0$, there exists $\epsilon_1>0$, such that for any $\epsilon \in (0,\epsilon_1)$, $\displaystyle{I(\epsilon)\geq \frac{\epsilon^2}{(2+\delta)\sigma^2}}$.
\end{lemma}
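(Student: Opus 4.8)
The plan is to establish the two assertions separately: the exponential tail bound follows from the classical Chernoff--Markov argument, while the quadratic lower bound on the rate function $I(\epsilon)$ near the origin follows from a second-order Taylor expansion of the cumulant generating function.

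For the first assertion, I would fix $t>0$ and apply Markov's inequality to the nonnegative random variable $\exp(t\sum_{j=1}^N X_j)$. Rewriting the event $\{\tfrac1N\sum_j X_j \ge \epsilon\}$ as $\{\exp(t\sum_j X_j)\ge \exp(tN\epsilon)\}$ gives
\[
\mathbb{P}\!\left(\tfrac1N\textstyle\sum_{j} X_j \ge \epsilon\right) \le e^{-tN\epsilon}\,\mathbb{E}\big[e^{t\sum_j X_j}\big] = e^{-tN\epsilon}\,M(t)^N,
\]
where the last equality uses independence to factor the joint moment generating function and identical distribution to identify each factor with $M(t)$. Rearranging yields the upper bound $\exp(-N(t\epsilon-\log M(t)))$ for every $t>0$; since for $\epsilon>0$ and a zero-mean variable the supremum defining $I(\epsilon)$ is attained at some $t\ge 0$, optimizing over $t$ produces $\exp(-NI(\epsilon))$.

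For the second assertion, set $\Lambda(t):=\log M(t)$ and note $\Lambda(0)=0$, $\Lambda'(0)=\mathbb{E}[X]=0$, and $\Lambda''(0)=\mathrm{Var}(X)=\sigma^2$, so that $\Lambda(t)=\tfrac{\sigma^2}{2}t^2+o(t^2)$ as $t\to 0$. Given $\delta>0$, this expansion supplies a neighborhood $\abs{t}\le\eta$ on which $\Lambda(t)\le \tfrac{(2+\delta)\sigma^2}{4}\,t^2$. Because $I(\epsilon)$ is itself a supremum, it suffices to lower-bound $t\epsilon-\Lambda(t)$ at a single convenient point. Choosing $t^\star=\tfrac{2\epsilon}{(2+\delta)\sigma^2}$ — the maximizer of the quadratic upper bound — gives $t^\star\epsilon-\Lambda(t^\star)\ge \tfrac{\epsilon^2}{(2+\delta)\sigma^2}$, and the admissibility requirement $t^\star\le\eta$ translates into $\epsilon<\epsilon_1:=\tfrac{(2+\delta)\sigma^2\eta}{2}$, which is exactly the threshold appearing in the statement.

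The hard part will be analytic rather than combinatorial: the Taylor step presupposes that $M(t)$ is finite in a neighborhood of the origin and that $\Lambda$ may be differentiated twice there, whereas the stated hypotheses only grant a finite variance. I would therefore either invoke the standing assumption (implicit in defining $I$ through the moment generating function) that $M$ is finite near $0$, or restrict attention to the light-tailed regime in which this holds; dominated convergence then justifies differentiating under the expectation and validates $\Lambda'(0)=0$ and $\Lambda''(0)=\sigma^2$. The only other point requiring care is confirming that the chosen $t^\star$ indeed lies in the admissible window $\abs{t}\le\eta$, which is precisely what forces $\epsilon$ into the interval $(0,\epsilon_1)$.
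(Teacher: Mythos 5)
The paper does not prove this lemma at all; it is stated as a citation to the literature (Ruszczy\'nski--Shapiro and Hu et al.), so there is no in-paper argument to compare against. Your proposal is the standard Cram\'er--Chernoff proof and it is correct: the Markov-inequality/optimization step gives the exponential bound (with the minor bookkeeping that $\sup_{t\in\mathbb{R}}$ reduces to $\sup_{t\ge 0}$ because $t\epsilon-\log M(t)\le t\epsilon\le 0$ for $t\le 0$ by Jensen, and the case $I(\epsilon)=0$ is vacuous), and the second-order expansion of $\Lambda=\log M$ with the evaluation at $t^\star=2\epsilon/((2+\delta)\sigma^2)$ delivers the quadratic lower bound on $I$. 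The one genuine hypothesis gap --- that finite variance alone does not guarantee $M(t)<\infty$ near $0$, which is needed both for $I$ to be nontrivial and for the Taylor step --- is real, but you flag it correctly, and the paper itself concedes in Section 3 that the ``existence of a finite valued moment-generating function in the neighbourhood of zero'' is a standing (if unstated in the lemma) assumption. So your proof supplies exactly the argument the paper outsources, with the right caveat attached.
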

Further, since RMSE analysis relies heavily on results from the empirical process theory, we present below a brief overview and state the pertinent results without proof.
\subsection{Empirical Process theory}
In this section, we present some results from the empirical process theory that would be pertinent to RMSE analysis. To begin with, let us denote by.
\[
\mathfrak{F}:=\{f(x,\cdot) - \expectation{[f(x,\cdot)]}: x \in \mathcal{X}\} 
\]
the class of centred cost function indexed on $x \in \mathcal{X}$. Under the assumption that the function $f(x,\cdot)$ is Lipschitz with respect to $x$, it is fairly easy to observe that the centred cost function $f(x,\cdot) - \expectation{[f(x,\cdot)]}$ is also Lipschitz albeit with a larger Lipschitz constant. Further, let $\ell^{\infty}(\mathcal{X})$ be a metric space of all bounded functions from $\mathcal{X}$ to $\mathbb{R}$ endowed with the supremum norm, \textit{i.e.}, $\norm{f-g}_{\infty}:=\sup_{x\in \mathcal{X}}\abs{f(x)-g(x)}$ for all $f,g \in \ell^{\infty}(\mathcal{X})$. If we define, 
\[
\mathbb{F}_{N}(\cdot) = \sqrt{N}\left(\frac{1}{N}\sum_{k = 1}^{N}(f(\cdot,\bar{\zeta}^k) - \expectation{[f(\cdot,\bar{\zeta})]})\right)
\]
then $\mathbb{F}_{N}$ is an empirical process indexed on the decision set $\mathcal{X}$ and $\mathbb{F}_{N}\in\ell^{\infty}(\mathcal{X})$.
Since our study heavily relies on moment bounds, we present the results below that provide the necessary bounds. As the concept of covering numbers and bracketing numbers are relevant to these results, we present the definition of the bracketing numbers, where the covering number is already defined above.
\begin{definition}[Bracketing numbers \cite{wellner2013weak}]
    Given two functions $f_1$ and $f_2$, the bracket $[f_1,f_2]$ is the set of all functions $f$ with $f_1\leq f\leq f_2$. An $v-$bracket is a bracket $[f_1 ,f_2]$ such that $\norm{f_1-f_2}<v$. The bracketing number $Q_{[]}(v,\mathfrak{F},\norm{\cdot})$ is the minimum number of $v-$brackets needed to cover $\mathfrak{F}$.
\end{definition}
The next result gives an upper bound for the bracketing number.
\begin{lemma}[Theorem 2.7.1 \cite{wellner2013weak} and Lemma EC.7 \cite{lam2018bounding}]
    If the cost function $f(x,\bar{\zeta})$ is Lipschtiz with respect to $x$, and the decision space $\mathcal{X}\subseteq\mathbb{R}^d$ is compact, then for any $v>0$
    \[
    Q_{[]}(4v\norm{L^{\zeta}_{f}}_2,\mathfrak{F},\norm{\cdot})\leq Q(v,\mathcal{X},\norm{\cdot})
    \]
\end{lemma}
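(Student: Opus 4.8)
The plan is to construct an explicit family of brackets for $\mathfrak{F}$ directly from a net of the decision space $\mathcal{X}$, so that the number of brackets is controlled by the covering number $Q(v,\mathcal{X},\norm{\cdot})$. First I would fix $v>0$ and take a minimal $v$-net $\{x_1,\dots,x_Q\}$ of $\mathcal{X}$ with $Q=Q(v,\mathcal{X},\norm{\cdot})$; by definition every $x\in\mathcal{X}$ lies within distance $v$ of some net point $x_{k(x)}$. The idea is to anchor one bracket at each net point and let the Lipschitz property absorb the residual variation of the centred cost function as $x$ ranges over the ball of radius $v$ about that point.

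Next I would build the brackets. Writing $\tilde f(x,\zeta):=f(x,\zeta)-\expectation{[f(x,\zeta)]}$ for the centred cost function, Assumption \ref{lip_assumption} gives $\abs{f(x,\zeta)-f(x_k,\zeta)}\leq v\,L^{\zeta}_f$ whenever $\norm{x-x_k}\leq v$, and passing the same bound through the expectation yields $\abs{\expectation{[f(x,\zeta)]}-\expectation{[f(x_k,\zeta)]}}\leq v\,\expectation{[L^{\zeta}_f]}$. Combining these gives $\abs{\tilde f(x,\zeta)-\tilde f(x_k,\zeta)}\leq v\big(L^{\zeta}_f+\expectation{[L^{\zeta}_f]}\big)$, so I would define the bracket $[f_k^-,f_k^+]$ by $f_k^{\pm}(\zeta):=\tilde f(x_k,\zeta)\pm v\big(L^{\zeta}_f+\expectation{[L^{\zeta}_f]}\big)$. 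This bracket contains $\tilde f(x,\cdot)$ for every $x$ with $\norm{x-x_k}\leq v$, and since the net points cover $\mathcal{X}$, the resulting $Q$ brackets together cover all of $\mathfrak{F}$.

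It then remains to bound the $L_2$-width of each bracket. Here I would compute $\norm{f_k^+-f_k^-}_2=2v\,\norm{L^{\zeta}_f+\expectation{[L^{\zeta}_f]}}_2$ and control the right-hand side with the triangle inequality, noting that $\expectation{[L^{\zeta}_f]}$ is a constant, to get $\norm{L^{\zeta}_f+\expectation{[L^{\zeta}_f]}}_2\leq\norm{L^{\zeta}_f}_2+\expectation{[L^{\zeta}_f]}$. The observation that makes the stated constant clean is the Jensen (equivalently Cauchy--Schwarz) inequality $\expectation{[L^{\zeta}_f]}\leq\norm{L^{\zeta}_f}_2$, which upgrades the width bound to $\norm{f_k^+-f_k^-}_2\leq 4v\,\norm{L^{\zeta}_f}_2$. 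Having $Q$ brackets, each of $L_2$-width at most $4v\norm{L^{\zeta}_f}_2$, that cover $\mathfrak{F}$ is precisely the claim $Q_{[]}(4v\norm{L^{\zeta}_f}_2,\mathfrak{F},\norm{\cdot})\leq Q(v,\mathcal{X},\norm{\cdot})$.

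The main subtlety I would watch for is the treatment of the centring term: a naive argument brackets only $f(x_k,\cdot)$ and overlooks that the subtracted expectation $\expectation{[f(x,\zeta)]}$ also varies with $x$. Accounting for this contribution is what produces the two terms $L^{\zeta}_f$ and $\expectation{[L^{\zeta}_f]}$, and it is their combination together with the Jensen bound that accounts for the factor $4$ in the bracket radius rather than the factor $2$ one would obtain for the uncentred class.
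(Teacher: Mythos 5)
The paper states this lemma without proof, citing Theorem 2.7.1 of van der Vaart and Wellner and Lemma EC.7 of Lam, so there is no in-paper argument to compare against; your construction is the standard one underlying those references and it is correct and complete. In particular, you correctly identify the one nontrivial point — that the centring term $\expectation{[f(x,\zeta)]}$ also varies with $x$, contributing the extra $\expectation{[L^{\zeta}_f]}$ to the bracket radius — and the Jensen step $\expectation{[L^{\zeta}_f]}\leq\norm{L^{\zeta}_f}_2$ is exactly what yields the factor $4$ in the statement.
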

Further, because of Lemma 1, we have an upper bound of the bracketing numbers. The following two results provide the relevant moment bound necessary for our analysis.
\begin{lemma}[Lemma EC.9 \cite{lam2018bounding}]
Let $\Bar{f}(x,\bar{\zeta}): = \sup_{x\in\mathcal{X}}\vert f(x,\bar{\zeta}) - \expectation[f(x,\bar{\zeta})]\vert$. Then, for all $N$, we have
\[
\sqrt{N}\expectation {\left[ \sup_{x \in \mathcal{X}}\left\vert \frac{1}{N}\sum_{k = 1}^{N}f(x,\bar{\zeta}^k) - \expectation{[f(x,\bar{\zeta})]}\right\vert \right]} \leq  C\norm{\bar{f}(x,\bar{\zeta})}_2\int_{0}^1\sqrt{1 + \log(Q_{[ ]}(v\norm{\bar{f}(x,\bar{\zeta})}_2,\mathfrak{F},\norm{\cdot}_2))}
\]
\end{lemma}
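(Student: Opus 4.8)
The plan is to read the left-hand side as the expected supremum of the (unnormalised) empirical process over the centred class $\mathfrak{F}$, and then to control that supremum by a chaining argument driven directly by the bracketing numbers. Writing $g_x(\cdot) := f(x,\cdot) - \expectation{[f(x,\bar{\zeta})]}$ so that $\mathfrak{F} = \{g_x : x \in \mathcal{X}\}$, each $g_x$ is centred, hence $\frac{1}{N}\sum_{k=1}^N g_x(\bar{\zeta}^k) = (P_N - P)f(x,\cdot)$ and the quantity inside the expectation is exactly $\sup_{x \in \mathcal{X}} \abs{\frac{1}{N}\sum_{k=1}^N g_x(\bar{\zeta}^k)}$, the sup-norm over $\mathfrak{F}$ of the centred sample mean. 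By construction $\bar{f}(x,\bar{\zeta})$ is a measurable envelope for $\mathfrak{F}$, so $\norm{\bar f}_2$ (abbreviating $\norm{\bar f(x,\bar\zeta)}_2$) is the natural scale against which every bracket size is measured; this is why it multiplies both the resolution $v\norm{\bar f}_2$ inside the bracketing number and the whole right-hand side. The target is therefore the classical bracketing maximal inequality, with the Dudley integral $\int_0^1 \sqrt{1 + \log Q_{[]}(v\norm{\bar f}_2, \mathfrak{F}, \norm{\cdot}_2)}\,dv$ on the right; unlike the covering-number route, no symmetrisation is needed, since brackets provide deterministic two-sided control.

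First I would fix dyadic resolutions $v_j = 2^{-j}$ for $j = 0,1,2,\dots$ and, at each level $j$, select a minimal family of $v_j\norm{\bar f}_2$-brackets covering $\mathfrak{F}$, of cardinality $Q_{[]}(v_j \norm{\bar f}_2, \mathfrak{F}, \norm{\cdot}_2)$. Each $g_x$ is then approximated, level by level, by (an endpoint of) a bracket containing it, producing a telescoping chain $g_x = \pi_0 g_x + \sum_{j\geq 1}(\pi_j g_x - \pi_{j-1} g_x)$ whose successive increments are, in $L_2(P)$, of order $v_{j-1}\norm{\bar f}_2$. Feeding this decomposition into the empirical process reduces the supremum over $\mathcal{X}$ to a sum over levels of maxima taken over finitely many increment functions, the number at level $j$ being bounded by the product of bracketing numbers at levels $j-1$ and $j$.

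Second, on each level I would invoke Bernstein's inequality (the requisite finite variance being supplied through the envelope bound by $\norm{\bar f}_2$) to get a maximal inequality over the finite increment set. The sub-Gaussian part of Bernstein contributes, at level $j$, a term of order $v_{j-1}\norm{\bar f}_2 \sqrt{1 + \log Q_{[]}(v_j \norm{\bar f}_2, \mathfrak{F},\norm{\cdot}_2)}$, and summing these over $j$ is precisely a Riemann-sum approximation of the claimed entropy integral, giving the right-hand side up to the universal constant $C$. Lemmas 1 and 3 guarantee the bracketing numbers are finite (bounded via the covering numbers of the compact set $\mathcal{X}$), so the integral is well defined, while the $1+$ inside the square root keeps each summand bounded away from zero and makes the estimate valid for every $N$.

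The main obstacle is the linear (heavy-tailed) term of Bernstein's inequality, which does not by itself telescope into the entropy integral. To control it I would truncate each bracket increment at a threshold calibrated to $\sqrt{N}\,v_j\norm{\bar f}_2$: the bounded part feeds the sub-Gaussian estimate above, while the tail part is handled by a direct first-moment bound exploiting the square-integrability of the envelope $\bar f$. Choosing the truncation level across scales is the delicate step, since it must be done so that the residual contributions are dominated by the \emph{same} integral rather than accumulating; this is exactly where $L_2$ control of the envelope, as opposed to a uniform bound, is indispensable. Once the truncated and tail pieces are both absorbed into $C\norm{\bar f}_2 \int_0^1 \sqrt{1 + \log Q_{[]}(v\norm{\bar f}_2,\mathfrak{F},\norm{\cdot}_2)}\,dv$, the proof is complete.
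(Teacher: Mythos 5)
The paper does not prove this statement at all: it is imported verbatim as Lemma EC.9 of \cite{lam2018bounding} (which in turn rests on the bracketing maximal inequality of van der Vaart and Wellner), and the surrounding text explicitly says these empirical-process results are ``stated\dots without proof.'' So there is no in-paper argument to compare against; what you have written is, in effect, a sketch of the proof of the cited result itself. Your outline is the correct and standard one --- dyadic bracketing resolutions $v_j\norm{\bar f}_2$, a chaining decomposition of each centred $g_x$ through bracket representatives, a Bernstein-type maximal inequality over the finitely many increments at each level, and a truncation device to prevent the linear term of Bernstein from spoiling the entropy integral --- and you correctly identify why no symmetrisation is needed and why the envelope must be controlled in $L_2$ rather than uniformly. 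Two technical points remain named but not executed, and they are where the real work lies: first, the chain cannot be the infinite telescoping sum you write down, but must be stopped at a level depending on $N$, with the residual ``local'' term controlled by the envelope functions of the finest brackets (this is the step that produces a bound valid for every $N$); second, the truncation thresholds must be calibrated level by level, roughly as $a_j \asymp v_j\norm{\bar f}_2/\sqrt{1+\log Q_{[]}(v_{j+1}\norm{\bar f}_2,\mathfrak{F},\norm{\cdot}_2)}$, for the tail contributions to be absorbed into the same integral. As a blind reconstruction of the proof behind the citation your proposal is sound in approach, but it is a roadmap to the argument in \cite{wellner2013weak} rather than a self-contained proof.
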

\begin{lemma}[Lemma EC.10 \cite{lam2018bounding}]
For any $p\geq 2$ it holds that 
\begin{align*}
\sqrt{N}\left(\expectation {\left[ \sup_{x \in \mathcal{X}}\left\vert \frac{1}{N}\sum_{k = 1}^{N}f(x,\bar{\zeta}^k) - \expectation{[f(x,\bar{\zeta})]}\right\vert^p \right]}\right)^{\frac{1}{p}} \leq C\left(\sqrt{N}\expectation {\left[ \sup_{x \in \mathcal{X}}\left\vert \frac{1}{N}\sum_{k = 1}^{N}f(x,\bar{\zeta}^k) - \expectation{[f(x,\bar{\zeta})]}\right\vert \right]}\right)\\+ C\left(N^{\frac{1}{p}-\frac{1}{2}}\norm{\bar{f}(x,\bar{\zeta})}_p\right) 
\end{align*}
\end{lemma}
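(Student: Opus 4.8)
The plan is to multiply the asserted inequality through by $N$ and recast it as a moment inequality for the norm of a sum of independent, centred random elements of the Banach space $\ell^{\infty}(\mathcal{X})$. Writing $X_k := f(\cdot,\bar{\zeta}^k) - \mathbb{E}[f(\cdot,\bar{\zeta})] \in \ell^{\infty}(\mathcal{X})$, these are i.i.d. and mean zero, with $\norm{X_k}_{\infty} = \sup_{x\in\mathcal{X}}\abs{f(x,\bar{\zeta}^k) - \mathbb{E}[f(x,\bar{\zeta})]} = \bar{f}(x,\bar{\zeta}^k)$ in the notation of the previous lemma. Since $\sum_{k=1}^N X_k = N\cdot\big(\frac1N\sum_k(\dots)\big)$ as elements of $\ell^{\infty}(\mathcal{X})$, we have $\big(\mathbb{E}\sup_x\abs{\frac1N\sum_k(\dots)}^p\big)^{1/p} = \frac1N\big(\mathbb{E}\norm{\sum_k X_k}_\infty^p\big)^{1/p}$, and similarly for the first moment. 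Thus it suffices to establish the Banach-space moment bound
\[
\Big(\mathbb{E}\Big\lVert\textstyle\sum_{k=1}^N X_k\Big\rVert_{\infty}^p\Big)^{1/p} \le C\,\mathbb{E}\Big\lVert\textstyle\sum_{k=1}^N X_k\Big\rVert_{\infty} + C\,\Big(\mathbb{E}\max_{1\le k\le N}\norm{X_k}_{\infty}^p\Big)^{1/p},
\]
and then to reinsert the normalisation.

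The central tool is the Hoffmann--J\o rgensen inequality. First I would symmetrise: introducing i.i.d. Rademacher signs $\epsilon_k$ independent of the data, the standard $L^p$ symmetrisation inequality gives $\norm{\sum_k X_k}_p \le 2\norm{\sum_k\epsilon_k X_k}_p$ (norms taken first in $\ell^{\infty}(\mathcal{X})$ and then in $L^p(\Omega)$). The summands $\epsilon_k X_k$ are now independent and symmetric, so the Hoffmann--J\o rgensen inequality applies and bounds their $p$-th moment by the first moment of the symmetrised sum plus $\big(\mathbb{E}\max_k\norm{X_k}_\infty^p\big)^{1/p}$. Finally, desymmetrising the first moment via $\mathbb{E}\norm{\sum_k\epsilon_k X_k}_\infty \le 2\,\mathbb{E}\norm{\sum_k X_k}_\infty$ returns the displayed bound, with $C$ a constant depending only on $p$.

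It then remains to control the maximal term and to undo the scaling. Because the $X_k$ are identically distributed, a crude union bound yields $\mathbb{E}\max_{k\le N}\norm{X_k}_\infty^p \le \sum_{k=1}^N\mathbb{E}\norm{X_k}_\infty^p = N\,\norm{\bar{f}(x,\bar{\zeta})}_p^p$, so that $\big(\mathbb{E}\max_k\norm{X_k}_\infty^p\big)^{1/p}\le N^{1/p}\norm{\bar{f}(x,\bar{\zeta})}_p$. Dividing the moment bound by $N$ and multiplying by $\sqrt{N}$ converts the two terms into $\sqrt{N}\,\mathbb{E}\sup_x\abs{\frac1N\sum_k(\dots)}$ and $\sqrt{N}\cdot N^{1/p-1}\norm{\bar{f}(x,\bar{\zeta})}_p = N^{1/p-1/2}\norm{\bar{f}(x,\bar{\zeta})}_p$, which are precisely the two terms on the right-hand side of the claim.

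The main obstacle, and the only point requiring genuine care, is the justification of the Hoffmann--J\o rgensen step in the infinite-dimensional setting: one must ensure the relevant suprema over the (possibly uncountable) index set $\mathcal{X}$ are measurable and that $\norm{\sum_k X_k}_\infty$ is genuinely the norm of a sum of independent Banach-valued elements. This is handled by the Lipschitz continuity of $f(\cdot,\bar{\zeta})$ together with compactness of $\mathcal{X}$: the empirical process has a.s. continuous sample paths, so the supremum reduces to a countable dense subset of $\mathcal{X}$, restoring separability and legitimising both the symmetrisation and the Hoffmann--J\o rgensen moment comparison. The remaining bookkeeping --- tracking that $C$ may depend on $p$ and that the envelope $\bar{f}$ has finite $p$-th moment --- is routine.
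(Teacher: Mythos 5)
Your argument is correct and is essentially the standard proof of this moment bound: the paper itself states this lemma without proof (quoting Lemma EC.10 of the cited reference), and that source establishes it exactly by your route --- recasting the average as a normalised sum of i.i.d.\ centred elements of $\ell^{\infty}(\mathcal{X})$, symmetrising, applying the Hoffmann--J{\o}rgensen moment inequality, desymmetrising, and bounding the maximal term by the crude union bound $\mathbb{E}\max_k\norm{X_k}_\infty^p\leq N\norm{\bar{f}(x,\bar{\zeta})}_p^p$, after which the $N^{1/p-1/2}$ scaling falls out. Your measurability remark (separability of the process via Lipschitz continuity of $f(\cdot,\bar{\zeta})$ and compactness of $\mathcal{X}$) is the right way to legitimise the Banach-space machinery here. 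One small point in your favour: you correctly note that the Hoffmann--J{\o}rgensen constant depends on $p$, whereas the paper's blanket remark that ``$C$ is a universal constant'' is loose on this score.
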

In both the lemmas above $C$ is a universal constant.

\section{Uniform Convergence and Sample Complexity}
In this section, we state and prove the uniform convergence of the biased Monte Carlo SAA and the MLMC-SAA. Further, we derive the sample complexity results to illustrate the preponderance of MLMC-SAA over Monte Carlo SAA and estimate the desired probability.
\subsection{Monte Carlo SAA}
\label{section:monte_carlo_saa}
This section starts by stating the assumptions necessary for our discussion.
\begin{assumption}
  \label{bias_assmpution}
  $\displaystyle{\sup_{x \in \mathcal{X}}\abs{\expectation[f(x,\zeta_h) - f(x,\zeta)]}\leq c_1h^{\alpha}}$ for some $h \in \mathcal{B}$ and $\alpha>0$.
\end{assumption}
\begin{assumption}\
\label{variance_assumption}
    $\displaystyle{\sigma^2 := \sup_{h\in \mathcal{B}}\mathbb{E}}[\sup_{x \in \mathcal{X}}|f(x,\zeta_h)|^2]< \infty$.
\end{assumption}
The above two assumptions illustrate the boundedness of the bias and variance, respectively. In the literature, these assumptions are commonly used for computational complexity analysis. Now, recall the optimization problem \eqref{saa:eq1}, \textit{i.e.}, 
\[
\min_{x \in \mathcal{X}}\left \{ F(x) = \mathbb{E}[f(x,\zeta)] \right \},
\]
the Monte Carlo SAA approximation of the above problem is given  as
\begin{equation}
\label{saa:eq9}
\min_{x \in \mathcal{X}}\left\{F_h^N(x):=\frac{1}{N}\sum_{j = 1}^{N}f(x,\zeta_h^j)\right\}.
\end{equation}
Suppose we let $x^*$ and $x_h^*$ be optimal solutions to \eqref{saa:eq1} and \eqref{saa:eq9} respectively, then we are interested in determining the probability of $x_h^*$ being an $\epsilon$-optimal solution to \eqref{saa:eq1}. More specifically, we intend to determine $\displaystyle{\probability{(F(x_h^*) - F(x^*)\leq \epsilon)}}$ for some $\epsilon>0$. In order to do so, we begin by establishing the uniform convergence property based on concentration inequality. For the proof of the next result, we follow the approach in \cite{shapiro2021lectures,hu2020sample}
\begin{theorem}(Uniform Convergence)
\label{theorem:uniform_convergence_mc_saa}
Suppose assumptions \ref{lip_assumption}, \ref{bias_assmpution} and \ref{variance_assumption} are satisfied, then for any $\delta>0$, there exists an $\epsilon_1>0$ such that for all $\epsilon \in (0,\epsilon_1)$,
\begin{equation}
\label{saa:eq10}
 \probability{\bigg(\sup_{x \in \mathcal{X}}\abs{F_h^N(x) - F(x)}>\epsilon\bigg)} \leq \mathcal{O}(1)\bigg(\frac{4L_f\mathcal{D}_{\mathcal{X}}}{\epsilon}\bigg)^d \exp{\bigg(\frac{-Nc_1^2h^{2\alpha}}{(\delta+2)\sigma^2}\bigg)}
\end{equation}
\end{theorem}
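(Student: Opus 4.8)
The plan is to reduce the uniform (supremum) control to a finite union bound over a net of $\mathcal{X}$, and then to treat each net point by separating the statistical fluctuation from the deterministic bias, applying Cramér's large deviation bound (Lemma \ref{cramer_ld}) to the former. First I would exploit Lipschitz continuity. Since $f(\cdot,\zeta)$ is $L_f$-Lipschitz uniformly in the sample (Assumption \ref{lip_assumption}), both the empirical objective $F_h^N(\cdot)=\frac1N\sum_j f(\cdot,\zeta_h^j)$ and the true objective $F(\cdot)=\mathbb{E}[f(\cdot,\zeta)]$ inherit $L_f$-Lipschitz continuity (the latter by passing the expectation through the Lipschitz bound), so $F_h^N-F$ is $2L_f$-Lipschitz.

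Fixing a resolution $v$ and invoking the covering-number bound of Lemma 1, take a $v$-net $\{x_k\}_{k=1}^{Q}$ with $Q \le \mathcal{O}((\mathcal{D}_{\mathcal{X}}/v)^d)$. For any $x$ within $v$ of its nearest net point $x_{k(x)}$ we obtain $\abs{F_h^N(x)-F(x)} \le \abs{F_h^N(x_{k(x)})-F(x_{k(x)})} + 2L_f v$, so the supremum over $\mathcal{X}$ is controlled by the maximum over the finite net plus the slack $2L_f v$. Choosing $v=\epsilon/(4L_f)$ makes the slack equal to $\epsilon/2$ and turns the covering number into $\mathcal{O}((4L_f\mathcal{D}_{\mathcal{X}}/\epsilon)^d)$, which is exactly the polynomial prefactor in \eqref{saa:eq10}. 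A union bound then reduces the claim to bounding, for a single $x_k$, the probability that $\abs{F_h^N(x_k)-F(x_k)}$ exceeds the remaining half of $\epsilon$.

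The crux is the per-point estimate, where the bias must be handled explicitly, this being the feature that distinguishes the present setting from the classical unbiased analysis. Here I would re-center: write $F_h^N(x_k)-F(x_k) = \big(F_h^N(x_k)-\mathbb{E}[f(x_k,\zeta_h)]\big) + \big(\mathbb{E}[f(x_k,\zeta_h)]-F(x_k)\big)$, where the first bracket is an average of i.i.d. zero-mean terms and the second is the deterministic bias, bounded by $c_1 h^{\alpha}$ via Assumption \ref{bias_assmpution}. Assumption \ref{variance_assumption} bounds the second moment, hence the variance, of each summand $f(x_k,\zeta_h)$ by $\sigma^2$, so Lemma \ref{cramer_ld} applies to the re-centered average; the rate-function lower bound $I(t)\ge t^2/((2+\delta)\sigma^2)$, valid for $t$ below some $\epsilon_1$ (the source of the restriction $\epsilon\in(0,\epsilon_1)$), yields an exponential factor of the form $\exp(-N t^2/((2+\delta)\sigma^2))$. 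The two-sided deviation contributes only a factor of $2$, absorbed into $\mathcal{O}(1)$.

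The main obstacle, and the step requiring the most care, is the bookkeeping that fixes the exponent: one must track how the allocation of $\epsilon$ between the net slack and the concentration threshold interacts with the bias $c_1 h^{\alpha}$, so that the surviving deviation scale fed into the rate function reproduces the stated exponent $-Nc_1^2 h^{2\alpha}/((\delta+2)\sigma^2)$. I expect this to require applying Cramér's bound at the threshold dictated by the bias level $c_1 h^{\alpha}$ and verifying that the asymptotic regime ($\epsilon$ and $h$ small) is consistent with the validity range of the quadratic rate-function estimate; everything else (the Lipschitz reduction, the union bound, and the moment control) is routine once this scale is pinned down.
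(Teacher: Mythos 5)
Your proposal matches the paper's proof essentially step for step: the $v$-net with $v=\epsilon/(4L_f)$ giving the prefactor $\mathcal{O}(1)(4L_f\mathcal{D}_{\mathcal{X}}/\epsilon)^d$, the union bound over net points, the re-centering that splits off the deterministic bias $c_1h^{\alpha}$, and the application of Cram\'er's bound at the threshold $c_1h^{\alpha}$ (valid under the implicit requirement $c_1h^{\alpha}\leq\epsilon/4$, which the paper likewise invokes inside the proof), which is exactly how the exponent $-Nc_1^2h^{2\alpha}/((\delta+2)\sigma^2)$ arises. The bookkeeping you flag as the delicate step is resolved in the paper precisely as you anticipate, so the proposal is correct and follows the same route.
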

\begin{proof}
    We begin with the construction of a $v$-net in order to get rid of the supremum over $x$. For this consider a $v$-net on $\mathcal{X}$ such that, $\displaystyle{v = \frac{\epsilon}{4L_f}}$ and thus $\displaystyle{Q \leq \mathcal{O}(1)\bigg(\frac{4L_f\mathcal{D}_{\mathcal{X}}}{\epsilon}\bigg)^d}$. Further, by invoking the Lipschitz continuity of $f(\cdot,\zeta)$, we have
\[
\abs{f(x,\zeta_h) - f(x_k,\zeta_h)}\leq \frac{\epsilon}{4} \text{ and } \abs{F(x) - F(x_k)}\leq \frac{\epsilon}{4}
\]
Therefore, for any $x \in \mathcal{X}$ we have,
\begin{equation}
\abs{F_h^N(x) - F(x)} \leq \frac{\epsilon}{2} + \abs{F_h(x_k) - F(x_k)}
\leq \frac{\epsilon}{2} + \max_{k = 1,\dots,Q}\abs{F_h(x_k) - F(x_k)} \nonumber
\end{equation}
Consequently, we have,
\begin{equation}
\begin{split}
    \probability{\bigg(\sup_{x \in \mathcal{X}}\abs{F_h^N(x) - F(x)}>\epsilon\bigg)} &\leq \probability{\bigg(\max_{k = 1,\dots,Q}\abs{F_h^N(x_k) - F(x_k)}>\frac{\epsilon}{2}\bigg)}\\
    &\leq \sum_{k =1}^Q\probability{\bigg(\abs{F_h^N(x_k) - F(x_k)}>\frac{\epsilon}{2}\bigg)} \nonumber
\end{split}
\end{equation}
In order to examine the $\displaystyle{\probability{\bigg(\abs{F_h^N(x_k) - F(x_k)}>\frac{\epsilon}{2}\bigg)}}$, we define the random variable $Z_h^j(k)$ as $\displaystyle{Z_h^j(k) := Y_h^j(x_k) - F(x_k)}$, and let $\displaystyle{\expectation{[Z_h(k)]}}$ be its respective expectation. It is easy to observe that $\displaystyle{Z_h^j(k) - \expectation{[Z_h(k)]}}$ is zero-mean random variable. Now if $\displaystyle{0< \expectation{[Z_h(k)]}\leq c_1h^{\alpha} \leq \epsilon/4}$, then 
\begin{equation}
\begin{split}
    \probability{\bigg(F_h^N(x_k) - F(x_k)>\frac{\epsilon}{2}\bigg)} &=  \probability{\bigg(\frac{1}{N} \sum_{j = 0}^N Z_h^j(k)>\frac{\epsilon}{2}\bigg)}\\
    &\leq  \probability{\bigg(\frac{1}{N} \sum_{j = 0}^N (Z_h^j(k)-\expectation{[Z_h(k)])}>\frac{\epsilon}{4}\bigg)}\\
    &\leq \probability{\bigg(\frac{1}{N} \sum_{j = 0}^N (Z_h^j(k)-\expectation{[Z_h(k)])}>c_1h^{\alpha}\bigg)}.\nonumber
\end{split}    
\end{equation}
Now, as a consequence of lemma \ref{cramer_ld} and our assumption on the variance, we have, 
\begin{equation}
    \begin{split}
        \probability{\bigg(\frac{1}{N} \sum_{j = 0}^N (Z_h^j(k)-\expectation{[Z_h(k)])}>c_1h^{\alpha}\bigg)} &\leq \exp{\bigg(\frac{-Nc_1^2h^{2\alpha}}{(\delta+2)\mathbb{V}[Z_h(k)]}\bigg)}\\
        &\leq \exp{\bigg(\frac{-Nc_1^2h^{2\alpha}}{(\delta+2)\sigma^2}\bigg)} \nonumber
    \end{split}
\end{equation}
Similarly, if $\displaystyle{0<-\expectation{[Z_h(k)]}\leq c_1h^{\alpha} \leq \epsilon/4}$, then,
\[
\probability{\bigg(\frac{1}{N} \sum_{j = 0}^N (\expectation{[Z_h(k)]-Z_h^j(k))}>c_1h^{\alpha}\bigg)} \leq \exp{\bigg(\frac{-Nc_1^2h^{2\alpha}}{(\delta+2)\sigma^2}\bigg)} \nonumber
\]
Finally, putting everything together, we have,
\[
 \probability{\bigg(\sup_{x \in \mathcal{X}}\abs{F_h^N(x) - F(x)}>\epsilon\bigg)} \leq \mathcal{O}(1)\bigg(\frac{4L_f\mathcal{D}_{\mathcal{X}}}{\epsilon}\bigg)^d \exp{\bigg(\frac{-Nc_1^2h^{2\alpha}}{(\delta+2)\sigma^2}\bigg)}
\]

\end{proof}
The following corollary is the immediate consequence of the above result,
\begin{corollary}
\label{corollary_1}
    Let assumptions of Theorem \ref{theorem:uniform_convergence_mc_saa} hold, and additionally assume $c_1h^{\alpha}\leq \epsilon/8$, then for any $\delta>0$, there exists $\epsilon_1 >0 $ such that for all $\epsilon \in (0,\epsilon_1)$,
    \begin{equation}
        \probability{\bigg(F(x_h^*) - F(x^*)>\epsilon\bigg)} \leq \mathcal{O}(1)\bigg(\frac{8L_f\mathcal{D}_{\mathcal{X}}}{\epsilon}\bigg)^d \exp{\bigg(\frac{-N c_1^2h^{2\alpha}}{(\delta+2)\sigma^2}\bigg)}
    \end{equation}
\end{corollary}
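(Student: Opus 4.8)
The plan is to treat this exactly as a direct corollary of Theorem~\ref{theorem:uniform_convergence_mc_saa}: the whole task is to convert the uniform deviation bound on $\sup_{x\in\mathcal{X}}\abs{F_h^N(x)-F(x)}$ into a bound on the optimality gap $F(x_h^*)-F(x^*)$. First I would decompose the gap by inserting the empirical objective evaluated at both the true optimizer $x^*$ and the empirical optimizer $x_h^*$, writing
\begin{equation}
F(x_h^*)-F(x^*) = \big(F(x_h^*)-F_h^N(x_h^*)\big) + \big(F_h^N(x_h^*)-F_h^N(x^*)\big) + \big(F_h^N(x^*)-F(x^*)\big). \nonumber
\end{equation}

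Next I would use the optimality of $x_h^*$ for the empirical problem \eqref{saa:eq9}, namely $F_h^N(x_h^*)\le F_h^N(x^*)$, which makes the middle term nonpositive. The gap is then controlled by the first and third terms alone, and each of these is at most $\sup_{x\in\mathcal{X}}\abs{F_h^N(x)-F(x)}$, giving the pointwise (in $\omega\in\Omega$) bound $F(x_h^*)-F(x^*)\le 2\sup_{x\in\mathcal{X}}\abs{F_h^N(x)-F(x)}$. Monotonicity of probability then yields
\begin{equation}
\probability{\big(F(x_h^*)-F(x^*)>\epsilon\big)} \le \probability{\Big(\sup_{x\in\mathcal{X}}\abs{F_h^N(x)-F(x)}>\tfrac{\epsilon}{2}\Big)}. \nonumber
\end{equation}

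Finally I would invoke Theorem~\ref{theorem:uniform_convergence_mc_saa} at deviation level $\epsilon/2$ rather than $\epsilon$. Here the extra hypothesis $c_1h^{\alpha}\le\epsilon/8$ is precisely what is required: it is the condition $c_1h^{\alpha}\le(\epsilon/2)/4$ that drives the step ``$0<\expectation{[Z_h(k)]}\le c_1h^{\alpha}\le\epsilon/4$'' in the proof of the theorem once $\epsilon$ there is replaced by $\epsilon/2$, so the bias can still be absorbed into the Cram\'er exponent. Substituting $\epsilon/2$ into the theorem's bound converts the prefactor $(4L_f\mathcal{D}_{\mathcal{X}}/\epsilon)^d$ into $(8L_f\mathcal{D}_{\mathcal{X}}/\epsilon)^d$ while leaving the exponential factor $\exp(-Nc_1^2h^{2\alpha}/((\delta+2)\sigma^2))$ untouched, which is exactly the claimed estimate. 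I do not expect any genuine obstacle in this argument; the only care needed is the bookkeeping of the halving of $\epsilon$ and checking that the added bias hypothesis is exactly the one that licenses applying the theorem at level $\epsilon/2$.
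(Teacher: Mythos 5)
Your proposal is correct and follows essentially the same route as the paper: the same three-term decomposition, dropping the middle term by optimality of $x_h^*$, and invoking Theorem \ref{theorem:uniform_convergence_mc_saa} at level $\epsilon/2$, which is exactly why the hypothesis $c_1h^{\alpha}\leq\epsilon/8$ appears and why the prefactor becomes $(8L_f\mathcal{D}_{\mathcal{X}}/\epsilon)^d$. The only cosmetic difference is that the paper applies a union bound to the two remaining deviation terms directly rather than passing through $2\sup_{x\in\mathcal{X}}\abs{F_h^N(x)-F(x)}$, but the two arguments are equivalent.
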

\begin{proof}
    To begin with, observe that $F_h(x_h^*) - F_h(x^*)\leq 0$. Now
    \begin{equation}
    \label{saa:eq11}
        \begin{split}
            \probability{\bigg(F(x_h^*) - F(x^*)\geq\epsilon\bigg)}
            &=\probability{\bigg([F(x_h^*)-F_h^N(x_h^*)]+[F_h^N(x_h^*)-F_h(x^*)]+[F_h^N(x^*)-F(x^*)]\geq \epsilon\bigg)}\\
            &\leq \probability{\bigg(F(x_h^*)-F_h^N(x_h^*)\geq \epsilon/2\bigg)}+\probability{\bigg(F_h^N(x^*)-F(x^*)\geq \epsilon/2\bigg)}. \nonumber
        \end{split}
    \end{equation}
    Invoking Theorem \ref{theorem:uniform_convergence_mc_saa} with the condition $c_1 h^{\alpha}\leq \epsilon/8$, we get the desired result.
\end{proof}
Now, let us assume the cost of generating a single sample of $Y_h(x)$ is $\Bar{\eta}/h$ with $\Bar{\eta}$ begin some proportionality constant. As the simulation requires generating $N$ samples, the total computational cost will be $\mathcal{C}:=N\Bar{\eta}/h$. Also, let the probability of the solution to the Monte Carlo SAA problem being $\epsilon$-optimal to the original problem defined in \eqref{saa:eq1} be at least $(1 - \epsilon^{\gamma})$ for some $\gamma >0$. The next result illustrates the computational complexity required to achieve an $\epsilon$-optimal solution. 
\begin{theorem}
\label{theorem:computational_complexity_mc_saa}(Computational Complexity)
     Let $\displaystyle{\epsilon<1/e}$ and $\gamma>0$. Then the computational complexity for achieving the $\epsilon$-optimal solution with the probability at least $\displaystyle{1 - \epsilon^{\gamma}}$ is $\displaystyle{\mathcal{O}\left((d+\gamma)\log(\epsilon^{-1})\epsilon^{-\left(2+\frac{1}{\alpha}\right)}\right)}$.
\end{theorem}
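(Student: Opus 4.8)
The plan is to convert the concentration bound from Corollary \ref{corollary_1} into a lower bound on the sample size $N$, and then optimize the total cost $\mathcal{C}=N\bar{\eta}/h$ over the bias parameter $h$ subject to the constraints already in force. First I would impose that the failure probability be at most $\epsilon^{\gamma}$, which is exactly the requirement that $x_h^*$ be $\epsilon$-optimal with probability at least $1-\epsilon^{\gamma}$. Writing $\beta:=c_1^2/((\delta+2)\sigma^2)$ and keeping the $\mathcal{O}(1)$ prefactor and the constant $8L_f\mathcal{D}_{\mathcal{X}}$, this demands
\[
\mathcal{O}(1)\bigg(\frac{8L_f\mathcal{D}_{\mathcal{X}}}{\epsilon}\bigg)^d \exp\big(-N\beta h^{2\alpha}\big)\leq \epsilon^{\gamma}.
\]

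Taking logarithms and using $\log(\epsilon^{-1})=-\log\epsilon>0$ (valid since $\epsilon<1/e<1$), I would isolate $N$. The covering-number exponent contributes the $d\log(\epsilon^{-1})$ term and the prescribed confidence contributes $\gamma\log(\epsilon^{-1})$, so the two combine into a $(d+\gamma)$ factor, yielding
\[
N\;\geq\;\frac{(d+\gamma)\log(\epsilon^{-1})+\mathcal{O}(1)}{\beta\, h^{2\alpha}},
\]
so that it suffices to take $N=\mathcal{O}\big((d+\gamma)\log(\epsilon^{-1})\,h^{-2\alpha}\big)$. Here $\beta$ is a fixed positive constant independent of $\epsilon$ and $h$, which is where Assumption \ref{variance_assumption} is used, and it is absorbed harmlessly into the $\mathcal{O}$-notation.

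Next I would fix $h$ through the bias constraint $c_1 h^{\alpha}\leq \epsilon/8$ already imposed in Corollary \ref{corollary_1}. Since the per-sample cost is proportional to $1/h$, it is cheapest to saturate the bias budget, i.e. take $h\asymp \epsilon^{1/\alpha}$, whence $h^{2\alpha}\asymp \epsilon^{2}$ and $1/h\asymp \epsilon^{-1/\alpha}$. Substituting the former gives $N=\mathcal{O}\big((d+\gamma)\log(\epsilon^{-1})\epsilon^{-2}\big)$, and multiplying by the per-sample cost $\bar{\eta}/h=\mathcal{O}(\epsilon^{-1/\alpha})$ produces the total complexity
\[
\mathcal{C}=\frac{N\bar{\eta}}{h}=\mathcal{O}\Big((d+\gamma)\log(\epsilon^{-1})\,\epsilon^{-\left(2+\frac{1}{\alpha}\right)}\Big),
\]
as claimed.

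This is a balancing computation rather than a deep obstacle, and the only delicate point is the dual role played by $\epsilon$: it controls the statistical error through $N$ while simultaneously forcing the bias, and hence $h$, to scale like $\epsilon^{1/\alpha}$. It is precisely this second effect that degrades the exponent from the unbiased $\epsilon^{-2}$ to $\epsilon^{-(2+1/\alpha)}$, and isolating it cleanly is the step that warrants care. I would also verify that the chosen $\epsilon$ lies in the admissible range $(0,\epsilon_1)$ of Theorem \ref{theorem:uniform_convergence_mc_saa} so that the underlying Cramér bound applies, and confirm that the $\mathcal{O}(1)$ additive term from the constants $8L_f\mathcal{D}_{\mathcal{X}}$ is dominated by $(d+\gamma)\log(\epsilon^{-1})$ as $\epsilon\to 0$, so that it does not affect the stated rate.
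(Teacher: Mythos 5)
Your proposal is correct and follows essentially the same route as the paper: invert the concentration bound of Corollary \ref{corollary_1} to get $N=\mathcal{O}\big((d+\gamma)\log(\epsilon^{-1})h^{-2\alpha}\big)$, saturate the bias constraint with $h\asymp\epsilon^{1/\alpha}$, and multiply by the per-sample cost $\bar{\eta}/h$. The paper merely makes the constants ($c_3,c_4,c_5$) and the ceiling correction explicit, and likewise relies on $\epsilon<1/e$ to absorb the additive $\mathcal{O}(1)$ terms into $(d+\gamma)\log(\epsilon^{-1})$.
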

\begin{proof}
    Since we require $\displaystyle{\probability{\bigg(F(x_h^*) - F(x^*)\leq\epsilon\bigg)}\leq 1 - \epsilon^{\gamma}}$, we need
    $\displaystyle{\probability{\bigg(F(x_h^*) - F(x^*)>\epsilon\bigg)}< \epsilon^{\gamma}}$. Observe that if we take, 
\[    
    N = \left \lceil \frac{\mathcal{O}(1)(\delta+2)\sigma^2}{c_1^2 h^{2\alpha}}\bigg[d \log\left(\frac{8L_f\mathcal{D}_{\mathcal{X}}}{\epsilon}\right) + \log\left(\frac{1}{\epsilon^{\gamma}}\right)\bigg]\right \rceil,
\]then 
    \[
    \mathcal{O}(1)\bigg(\frac{8L_f\mathcal{D}_{\mathcal{X}}}{\epsilon}\bigg)^d \exp{\bigg(\frac{-N c_1^2h^{2\alpha}}{(\delta+2)\sigma^2}\bigg)}\leq \epsilon^{\gamma}
    \]
    and as a consequence of corollary \ref{corollary_1}, we have the desired probability. Now, the computational cost is defined as $
    \displaystyle{\mathcal{C} = N\Bar{\eta}/h}$. Therefore, taking,
    \[
    N = \frac{\mathcal{O}(1)(\delta+2)\sigma^2}{c_1^2 h^{2\alpha}}\bigg[d \log\left(\frac{8L_f\mathcal{D}_{\mathcal{X}}}{\epsilon}\right) + \log\left(\frac{1}{\epsilon^{\gamma}}\right)\bigg] + 1,
    \] we get
    \[
    \mathcal{C} = \Bar{\eta}\left( \frac{\mathcal{O}(1)(\delta+2)\sigma^2}{c_1^2 h^{2\alpha+1}}\bigg[d \log\left(\frac{8L_f\mathcal{D}_{\mathcal{X}}}{\epsilon}\right) + \log\left(\frac{1}{\epsilon^{\gamma}}\right)\bigg] + \frac{1}{h}\right).
    \] If $h = c_2\epsilon^{1/\alpha}$, so that $c_1h^{\alpha}\leq\epsilon/8$ (required to achieve the bound in corollary\ref{corollary_1}), then for $\displaystyle{\epsilon < \frac{1}{e}}$, observe that
    \[
    \frac{1}{h}<c_3(\gamma+d)\log(\epsilon^{-1})\epsilon^{-\left(2 + \frac{1}{\alpha}\right)},
    \]
    where, $\displaystyle{c_3 = \frac{1}{c_2}}$. Further, observe that, for $\displaystyle{\epsilon < \frac{1}{e}}$,
    \[
    d \log\left(\frac{8L_f\mathcal{D}_{\mathcal{X}}}{\epsilon}\right) + \log\left(\frac{1}{\epsilon^{\gamma}}\right)< c_4 (d+\gamma)\log(\epsilon^{-1})
    \]
    where $\displaystyle{c_4 = (1 + \max\{0,\log(8L_f\mathcal{D}_{\mathcal{X}})\})}$. Finally, putting everything together, we get
    \[
    \mathcal{C}<\Bar{\eta}\left( \frac{\mathcal{O}(1)(\delta+2)\sigma^2}{c_1^2 h^{2\alpha+1}}c_4 (d+\gamma)\log(\epsilon^{-1}) + c_3(\gamma+d)\log(\epsilon^{-1})\epsilon^{-\left(2 + \frac{1}{\alpha}\right)} \right).
    \]Substituting $h = c_2 \epsilon^{1/\alpha}$ in the above inequality, we get
    \[
    \mathcal{C}<c_5\left((\gamma+d)\log(\epsilon^{-1})\epsilon^{-\left(2 + \frac{1}{\alpha}\right)}\right)
    \] with $\displaystyle{c_5 = \Bar{\eta}\left( \frac{\mathcal{O}(1)(\delta+2)\sigma^2}{c_1^2 c_2^{2\alpha+1}}c_4 + c_3\right)}$.
\end{proof}
The above analysis shows the effect bias parameter $h$ and the convergence rate $\alpha$ have on the computational complexity of the Monte Carlo SAA. It is easy to observe that as $\alpha \rightarrow \infty$, \textit{i.e.} as the bias converges to zero, the computational complexity tends to $\displaystyle{\mathcal{O}\left((\gamma + d)\log(\epsilon^{-1})\epsilon^{-2}\right)}$, same as the one observed with an unbiased estimator. The complexity results for the conditional stochastic optimization can also be recreated in the context of the above analysis. For instance, the mean-squared-error analysis performed by in section 3.2 of \cite{hu2020sample} shows that the order of bias convergence is $1/2$, \textit{i.e.} $\alpha = 1/2$. Substituting this $\alpha$ in our above analysis, we obtained the computational complexity similar to the one achieved by authors, \textit{i.e.}, $\displaystyle{\mathcal{O}\left((\gamma + d)\log(\epsilon^{-1})\epsilon^{-4}\right)}$.

\subsection{Multilevel Monte Carlo SAA}
\label{section:multilevel_saa}

The idea of Multilevel Monte Carlo (MLMC) originated from the work of Heinrich \cite{heinrich2001multilevel} and was furthered in the seminal work in \cite{giles2008multilevel}. The primary aim of the MLMC algorithm is to improve the computational complexity required by the standard Monte Carlo to achieve the desired root-mean-square error. The idea is to use the various levels of approximation of the random variable and construct a telescoping sum so that the approximation of the coarser level acts as a control variate for the finer level. Mathematically speaking, if we are to estimate $\displaystyle{\expectation{f(\zeta)}}$, where the level $\ell$ approximation of the random variable $\zeta$ is available for $\ell = 0,\dots, L$, then the MLMC approximation of the expectation is given as,
\begin{equation}
\label{saa:eq12}
     \expectation{[f(\zeta)]} \approx \sum_{\ell = 0}^L\frac{1}{N_{\ell}}\sum_{j =1}^{N_{\ell}}\left(f(\zeta_{\ell}) - f(\zeta_{\ell-1})\right), \text{ where } f(\zeta_{-1})\equiv 0.
\end{equation}
The analysis undertaken in \cite{giles2008multilevel} shows the effect of variance convergence, \textit{i.e.} $\mathbb{V}(f(\zeta_{\ell})-f(\zeta_{\ell-1}))$, on the computational complexity of the multilevel estimator. In particular, suppose $\beta$ denotes the order of variance convergence. The authors proved that the MLMC achieve the complexity of $\mathcal{O}(\epsilon^{-2})$ if $\beta>1$, $\displaystyle{\mathcal{O}(\epsilon^{-2}\left(\log(\epsilon)\right)^2)}$ if $\beta =1$, and $\displaystyle{\mathcal{O}(\epsilon^{-\left(2+\frac{1-\beta}{\alpha}\right)})}$ if $\beta <1$, where $\alpha$ is weak error rate.
This research was paramount as the subsequent investigations were directed toward constructing estimators with $\beta>1$. In this regard, various studies were performed to improve the computational complexity for the applications other than one discussed in \cite{giles2008multilevel}. For example, \cite{giles2019multilevel,haji2022adaptive} extended the nested expectation algorithm examined in \cite{broadie2011efficient} to the multilevel framework. From the optimization perspective, studies in \cite{dereich2021general,frikha2016multi,crepey2023multilevel} developed a multilevel extension of the stochastic approximation algorithm. All these studies showed how extending the multilevel framework can help with computational savings. Drawing motivation from these studies, we develop a multilevel extension of the Monte Carlo SAA to improve the computational cost associated with achieving the $\epsilon$-optimal solution. In this paper, we do not aim to construct an estimator with higher order variance convergence but rather study the effect of MLMC approximation of the expectation in the optimization problem \eqref{saa:eq1}. In the subsequent section, we define the MLMC-SAA optimization problem and extend the analysis from section \ref{section:monte_carlo_saa} to the MLMC framework.\\
\\
The basic idea behind the multilevel extension of the Monte Carlo SAA deals with the multilevel approximation of expectation associated with the minimisation problem. To this end, we assume the availability of level $\ell$ approximation of the random variable $\zeta$, denoted by $\zeta_{\ell}$, such that as $\ell \rightarrow \infty$, $\zeta_{\ell} \rightarrow \zeta$. Therefore, the multilevel extension of the optimization problem is given as
\begin{equation}
     \min_{x \in \mathcal{X}}\left\{F_L(x) = \sum_{\ell = 0}^L\frac{1}{N_{\ell}}\sum_{j = 1}^{N_{\ell}}\left(f(x,\zeta^j_{\ell})- f(x,\zeta^j_{\ell-1})\right)\right\}. \nonumber
\end{equation}
Let $x_L^*$ solve the optimization problem stated above and $x^*$ solve the original problem \textit{i.e.},\eqref{saa:eq1}. As before, we aim to determine the probability of $x_L^*$ being the $\epsilon$-optimal solution. We conduct an analysis similar to the one performed in section \ref{section:monte_carlo_saa} to study the uniform convergence and computational complexity in the MLMC paradigm. We begin our discussion by stating some technical assumptions,
\begin{assumption}
\label{assumption_mlmc_bias}
    $\displaystyle{E_{\ell} := \sup_{x \in \mathcal{X}} \abs{\mathbb{E}\left(f(x, \zeta_{\ell}) - f(x, \zeta)\right)}}\leq c_1h_{\ell}^{\alpha}$, 
\end{assumption}
\begin{assumption}
\label{assumption_mlmc_var}
     $\displaystyle{V_{\ell} := \expectation[\sup_{x \in \mathcal{X}}\left\vert f(x, \zeta_{\ell}) - f(x, \zeta_{\ell - 1 })\right\vert^2]}\leq c_2h_{\ell}^{\beta}$.
\end{assumption}
\begin{theorem}(Uniform Convergence)
\label{theorem:uniform_convergence_mlmc_saa}
Suppose \ref{lip_assumption},\ref{assumption_mlmc_bias} and \ref{assumption_mlmc_var} holds. Then for any $r>0$,
\begin{equation}
\label{saa:eq13}
    \probability{\bigg(\sup_{x \in \mathcal{X}}\abs{F_L(x) - F(x)}>\epsilon\bigg)} \leq \ \mathcal{O}(1)\bigg(\frac{4(2L+1)L_f\mathcal{D}_{\mathcal{X}}}{\epsilon}\bigg)^d \sum_{\ell = 0}^L \exp{\left(\frac{-N_{\ell}\epsilon^2 (m^{r}-1)^2}{16(m^{2r} m^{2 r\ell}(\delta+2)c_2 h_{\ell}^{\beta}}\right)}.
\end{equation}
\end{theorem}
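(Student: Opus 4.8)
The plan is to mirror the argument of Theorem \ref{theorem:uniform_convergence_mc_saa}, adapting it to the telescoping structure of the multilevel estimator. First I would pass from the supremum over $\mathcal{X}$ to a maximum over a finite $v$-net. The key observation is that, pathwise, the map $x \mapsto F_L(x)$ is Lipschitz with constant at most $(2L+1)L_f$: the level-$0$ increment $f(\cdot,\zeta_0)$ contributes $L_f$, while each of the remaining $L$ increments $f(\cdot,\zeta_\ell) - f(\cdot,\zeta_{\ell-1})$ contributes $2L_f$ by Assumption \ref{lip_assumption}. Choosing $v = \epsilon/(4(2L+1)L_f)$ then forces $\abs{F_L(x) - F_L(x_k)} \leq \epsilon/4$ and $\abs{F(x) - F(x_k)} \leq \epsilon/4$ on each cell, and Lemma 1 bounds the net size by $\mathcal{O}(1)(4(2L+1)L_f\mathcal{D}_{\mathcal{X}}/\epsilon)^d$, which is exactly the prefactor in \eqref{saa:eq13}. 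A union bound over the net reduces matters to controlling $\mathbb{P}(\abs{F_L(x_k) - F(x_k)} > \epsilon/2)$ at a single point.

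At a fixed net point I would split $F_L(x_k) - F(x_k)$ into a bias part and a centred fluctuation. Because the increments telescope, $\mathbb{E}[F_L(x_k)] = \mathbb{E}[f(x_k,\zeta_L)]$, so the bias equals $\mathbb{E}[f(x_k,\zeta_L) - f(x_k,\zeta)]$, which Assumption \ref{assumption_mlmc_bias} controls by $c_1 h_L^\alpha$; absorbing it (as in the monotone cases of Theorem \ref{theorem:uniform_convergence_mc_saa}, under $c_1 h_L^\alpha \leq \epsilon/4$) leaves the centred sum $\sum_{\ell=0}^L S_\ell$, where $S_\ell := \frac{1}{N_\ell}\sum_{j=1}^{N_\ell}(D_\ell^j - \mathbb{E}[D_\ell])$ and $D_\ell := f(x_k,\zeta_\ell) - f(x_k,\zeta_{\ell-1})$, to be bounded by $\epsilon/4$ with high probability.

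The crux — and the step I expect to be the main obstacle — is distributing the residual budget $\epsilon/4$ across the $L+1$ levels. I would introduce a geometric allocation $\epsilon_\ell := \frac{\epsilon(m^r-1)}{4m^r}\, m^{-r\ell}$ for a free parameter $r>0$ (with $m>1$ the geometric refinement factor), chosen so that, via $\sum_{\ell=0}^L m^{-r\ell} \leq m^r/(m^r-1)$, one has $\sum_{\ell=0}^L \epsilon_\ell = \frac{\epsilon}{4}\bigl(1 - m^{-r(L+1)}\bigr) \leq \epsilon/4$. Since $\sum_\ell S_\ell > \epsilon/4 \geq \sum_\ell \epsilon_\ell$ forces $S_\ell > \epsilon_\ell$ for at least one $\ell$, a second union bound gives $\mathbb{P}(\sum_\ell S_\ell > \epsilon/4) \leq \sum_\ell \mathbb{P}(S_\ell > \epsilon_\ell)$. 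To each level I apply Cramér's bound (Lemma \ref{cramer_ld}) to the $N_\ell$ i.i.d. centred variables $D_\ell^j - \mathbb{E}[D_\ell]$, whose variance obeys $\mathbb{V}[D_\ell] \leq \mathbb{E}[D_\ell^2] \leq V_\ell \leq c_2 h_\ell^\beta$ by Assumption \ref{assumption_mlmc_var}. This yields $\mathbb{P}(S_\ell > \epsilon_\ell) \leq \exp\bigl(-N_\ell \epsilon_\ell^2/((\delta+2)c_2 h_\ell^\beta)\bigr)$, and substituting $\epsilon_\ell^2 = \epsilon^2(m^r-1)^2 m^{-2r\ell}/(16 m^{2r})$ produces precisely the per-level exponent in \eqref{saa:eq13}.

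Finally I would reassemble the estimates: the two tails (positive and negative deviations, handled symmetrically as in Theorem \ref{theorem:uniform_convergence_mc_saa}) merely double the bound, the net-wise union supplies the $\mathcal{O}(1)(4(2L+1)L_f\mathcal{D}_{\mathcal{X}}/\epsilon)^d$ prefactor, and the level-wise union leaves the sum $\sum_{\ell=0}^L$ over the exponentials, with all numerical constants swept into $\mathcal{O}(1)$. The only delicate point beyond the geometric bookkeeping is ensuring the bias is dominated by $\epsilon/4$, i.e. that the finest level $L$ is taken large enough that $c_1 h_L^\alpha \leq \epsilon/4$; this is the multilevel analogue of the smallness hypothesis on $c_1 h^\alpha$ invoked in Theorem \ref{theorem:uniform_convergence_mc_saa} and Corollary \ref{corollary_1}.
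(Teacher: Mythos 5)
Your proposal is correct and follows essentially the same route as the paper's proof: a $v$-net with $v=\epsilon/(4(2L+1)L_f)$, a union bound over the net, absorption of the telescoped bias under $c_1h_L^{\alpha}\le\epsilon/4$, a geometric allocation of the remaining $\epsilon/4$ budget across levels (the paper phrases this as the inclusion $O\subseteq\bigcup_{\ell}O_{\ell}$), and a per-level application of Cram\'er's bound with variance $c_2h_{\ell}^{\beta}$. Your explicit justification of the Lipschitz constant $(2L+1)L_f$ for $F_L$ is a detail the paper leaves implicit, but the argument is otherwise the same.
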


\begin{proof}
We begin with the construction of a $v$-net in order to get rid of the supremum over $x$. We first pick a $v$-net on $\mathcal{X}$ such that, $\displaystyle{v = \frac{\epsilon}{4(2L+1)L_f}}$; thus $\displaystyle{Q \leq \mathcal{O}(1)\bigg(\frac{4(2L+1)L_f\mathcal{D}_{\mathcal{X}}}{\epsilon}\bigg)^d}$. Further, by invoking the Lipschitz continuity of $h(\cdot,\zeta)$, we have
\[
\abs{f_{L}(x) - f_{L}(x_k)}\leq \frac{\epsilon}{4} \text{ and } \abs{F(x) - F(x_k)}\leq \frac{\epsilon}{4}
\]
Therefore, for any $x \in \mathcal{X}$ we have,
\begin{equation}
\abs{F_L(x) - F(x)} \leq \frac{\epsilon}{2} + \abs{F_L(x_k) - F(x_k)}
\leq \frac{\epsilon}{2} + \max_{k = 1,\dots,Q}\abs{F_L(x_k) - F(x_k)} \nonumber
\end{equation}
Consequently, we have,
\begin{equation}
    \probability{\bigg(\sup_{x \in \mathcal{X}}\abs{F_L(x) - F(x)}>\epsilon\bigg)} \leq \probability{\bigg(\max_{k = 1,\dots,Q}\abs{F_L(x_k) - F(x_k)}>\frac{\epsilon}{2}\bigg)}\leq \sum_{k =1}^Q\probability{\bigg(\abs{F_L(x_k) - F(x_k)}>\frac{\epsilon}{2}\bigg)} \nonumber.
\end{equation}
Let us suppose $\displaystyle{\expectation{[F_L(x_k) - F(x_k)]}\leq \epsilon/4}$, then
\begin{equation}
   \probability{\bigg(F_L(x_k) -F(x_k)>\frac{\epsilon}{2}\bigg)} \leq \probability{\bigg(F_L(x_k) -F(x_k) - \expectation{[F_L(x_k) - F(x_k)] >\frac{\epsilon}{4}\bigg)}}\nonumber
\end{equation}
Now, under the multilevel paradigm, we define the random variable $Z^j_{\ell}(k)$ as follows,
\begin{equation}
    Z^j_{\ell}(k) = \begin{cases}
        f(x_k,\zeta^j_0) - F(x_k), \ell = 0\\
        f(x_k,\zeta^j_{\ell}) - f(x_k,\zeta^j_{\ell-1}), \ell = 1,\dots,L.
    \end{cases}\nonumber
\end{equation}
and denote $\expectation{[Z_{\ell}(k)]}$ as its expectation. Now, with the above notational considerations, we have
\begin{equation}
    \probability{\bigg(F_L(x_k) -F(x_k) - \expectation{[F_L(x_k) - F(x_k)] >\frac{\epsilon}{4}\bigg)}} = \probability{\bigg(\sum_{\ell = 0}^L \frac{1}{N_{\ell}}\sum_{j = 1}^{N_{\ell}} (Z_{\ell}^j(k) - \expectation{[Z_{\ell}(k)]})>\frac{\epsilon}{4}\bigg)}\nonumber.
\end{equation}
Now, consider the following sets,
\begin{equation}
    O = \left\{ \sum_{\ell = 0}^L \frac{1}{N_{\ell}}\sum_{j = 1}^{N_{\ell}} (Z_{\ell}^j(k) - \expectation{[Z_{\ell}(k)]})>\frac{\epsilon}{4}\right\} \text{ and } O_{\ell} = \left\{\frac{1}{N_{\ell}}\sum_{j = 1}^{N_{\ell}} (Z_{\ell}^j(k) - \expectation{[Z_{\ell}(k)]})>\frac{\epsilon }{4}\frac{(m^r -1)}{m^r}\frac{1}{m^{r\ell}}\right\} \nonumber
\end{equation}
and further observe that $\displaystyle{O \subseteq \bigcup_{\ell = 0}^L O_{\ell}}$. Then, by finite sub-additivity of the probability measure, we have,  $\displaystyle{\probability{(O)}\leq \sum_{\ell = 0}^L \probability{(O_{\ell})}}$. As $Z_{\ell} - \expectation{[Z_{\ell}(k)]}$ is a zero-mean random variable, and observing that $\mathbb{V}[Z_{\ell}(k)]\leq c_2h_{\ell}^{\beta}$, we have,
\begin{equation}
    \probability{(O_{\ell})}\leq \exp{\left(\frac{-N_{\ell}\epsilon^2 (m^{r}-1)^2}{16(m^{2r} m^{2 r\ell}(\delta+2)c_2 h_{\ell}^{\beta}}\right)} \nonumber
\end{equation}
Therefore, we have,
\begin{equation}
    \probability{\left(F_{L}(x_k) - F(x_k) >\frac{\epsilon}{2}\right)} \leq \sum_{\ell = 0}^L \exp{\left(\frac{-N_{\ell}\epsilon^2 (m^{r}-1)^2}{16m^{2r} m^{2 r\ell}(\delta+2)c_2 h_{\ell}^{\beta}}\right)}.\nonumber
\end{equation}
Similarly if $\displaystyle{\expectation{[F_L(x_k) - F(x_k)]}\geq -\epsilon/4}$, then
\begin{equation}
     \probability{\left(F(x_k)-F_{L}(x_k) >\frac{\epsilon}{2}\right)} \leq \sum_{\ell = 0}^L \exp{\left(\frac{-N_{\ell}\epsilon^2 (m^{r}-1)^2}{16m^{2r} m^{2 r\ell}(\delta+2)c_2 h_{\ell}^{\beta}}\right)},\nonumber
\end{equation}
and hence, we have,
\begin{equation}
    \probability{\left(|F_{L}(x_k) - F(x_k)| >\frac{\epsilon}{2}\right)} \leq \sum_{\ell = 0}^L \exp{\left(\frac{-N_{\ell}\epsilon^2 (m^{r}-1)^2}{16m^{2r} m^{2 r\ell}(\delta+2)c_2 h_{\ell}^{\beta}}\right)}.\nonumber
\end{equation}
Putting everything together, we have,
\begin{equation}
    \probability{\bigg(\sup_{x \in \mathcal{X}}\abs{F_L(x) - F(x)}>\epsilon\bigg)} \leq \ \mathcal{O}(1)\bigg(\frac{4(2L+1)L_f\mathcal{D}_{\mathcal{X}}}{\epsilon}\bigg)^d \sum_{\ell = 0}^L \exp{\left(\frac{-N_{\ell}\epsilon^2 (m^{r}-1)^2}{16m^{2r} m^{2 r\ell}(\delta+2)c_2 h_{\ell}^{\beta}}\right)}.\nonumber
\end{equation}
\end{proof}
The following corollary is the immediate consequence of the above result.
\begin{corollary}
\label{corollary_2}
     Let assumptions of Theorem \ref{theorem:uniform_convergence_mlmc_saa} holds, and additionally assume $\displaystyle{\sup_{x \in \mathcal{X}}\abs{\expectation{[F_L(x) - F(x)]}}}\leq \epsilon/8$, then for any $\delta>0$, there exists $\epsilon_1 >0 $ such that for all $\epsilon \in (0,\epsilon_1)$ and for some $r > 0$,
    \begin{equation}
    \label{saa:eq14}
        \probability{\bigg(F(x_L^*) - F(x^*)>\epsilon\bigg)} \leq \mathcal{O}(1)\bigg(\frac{8(2L+1)L_f\mathcal{D}_{\mathcal{X}}}{\epsilon}\bigg)^d \sum_{\ell = 0}^L \exp{\left(\frac{-N_{\ell}\epsilon^2 (m^{r}-1)^2}{64m^{2r} m^{2 r\ell}(\delta+2)c_2 h_{\ell}^{\beta}}\right)}.
    \end{equation}
\end{corollary}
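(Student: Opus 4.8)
The plan is to follow exactly the template of Corollary~\ref{corollary_1}, exploiting the optimality of $x_L^*$ to kill a cross term and then reducing everything to the uniform deviation already controlled by Theorem~\ref{theorem:uniform_convergence_mlmc_saa}. First I would record the defining inequality $F_L(x_L^*) - F_L(x^*) \leq 0$, which holds because $x_L^*$ minimizes $F_L$ over $\mathcal{X}$. Inserting and subtracting $F_L$ at the two points, I would write the telescoping decomposition
\[
F(x_L^*) - F(x^*) = \big[F(x_L^*) - F_L(x_L^*)\big] + \big[F_L(x_L^*) - F_L(x^*)\big] + \big[F_L(x^*) - F(x^*)\big],
\]
and discard the middle bracket since it is non-positive. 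This yields $F(x_L^*) - F(x^*) \leq [F(x_L^*) - F_L(x_L^*)] + [F_L(x^*) - F(x^*)]$, so that if the left side is at least $\epsilon$ then at least one of the two remaining brackets is at least $\epsilon/2$.

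Next I would apply a union bound to obtain
\[
\probability{\big(F(x_L^*) - F(x^*) \geq \epsilon\big)} \leq \probability{\big(F(x_L^*) - F_L(x_L^*) \geq \epsilon/2\big)} + \probability{\big(F_L(x^*) - F(x^*) \geq \epsilon/2\big)},
\]
and then bound each summand by $\probability{(\sup_{x \in \mathcal{X}}\abs{F_L(x) - F(x)} > \epsilon/2)}$. Invoking Theorem~\ref{theorem:uniform_convergence_mlmc_saa} with the threshold $\epsilon$ replaced by $\epsilon/2$ then does the rest: the net radius becomes $v = \epsilon/(8(2L+1)L_f)$, turning the covering-number prefactor into $\mathcal{O}(1)(8(2L+1)L_f\mathcal{D}_{\mathcal{X}}/\epsilon)^d$, while replacing $\epsilon^2$ by $\epsilon^2/4$ in the exponent changes the constant $16$ in the denominator into $64$, matching \eqref{saa:eq14}. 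Crucially, the internal hypothesis that Theorem~\ref{theorem:uniform_convergence_mlmc_saa} requires, namely $\expectation{[F_L(x_k) - F(x_k)]} \leq \epsilon/4$, becomes $\leq \epsilon/8$ at the halved threshold, which is precisely the added assumption $\sup_{x\in\mathcal{X}}\abs{\expectation{[F_L(x) - F(x)]}} \leq \epsilon/8$ of the corollary. Summing the two identical tail bounds merely contributes a factor of $2$, absorbed into the $\mathcal{O}(1)$.

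The one conceptual point I would be careful to flag, rather than a genuine obstacle, is that $x_L^*$ is \emph{sample-dependent}: it is a random element of $\mathcal{X}$ determined by the same draws used to form $F_L$. Consequently the bracket $F(x_L^*) - F_L(x_L^*)$ cannot be controlled by a pointwise Cram\'er estimate at a fixed point, and this is exactly why the supremum-over-$x$ statement of Theorem~\ref{theorem:uniform_convergence_mlmc_saa} is indispensable here; bounding the deviation at the random argument $x_L^*$ by the uniform deviation is the step that legitimately dispenses with the randomness of the optimizer. The second bracket $F_L(x^*) - F(x^*)$ is at the fixed point $x^*$ and could in principle be handled pointwise, but subsuming it under the same uniform bound keeps the argument symmetric and the constants aligned. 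Everything else is the routine constant bookkeeping indicated above.
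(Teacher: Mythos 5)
Your proposal is correct and follows essentially the same route as the paper, which itself states that Corollary \ref{corollary_2} is proved by the same argument as Corollary \ref{corollary_1}: exploit $F_L(x_L^*)-F_L(x^*)\leq 0$, telescope, union-bound, and invoke Theorem \ref{theorem:uniform_convergence_mlmc_saa} at threshold $\epsilon/2$, which accounts for the prefactor $8(2L+1)L_f\mathcal{D}_{\mathcal{X}}/\epsilon$ and the constant $64$ in the exponent. Your remark about the sample-dependence of $x_L^*$ being handled precisely by the uniform (supremum) bound is a correct and worthwhile observation that the paper leaves implicit.
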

The proof of the above corollary follows the line of argument similar to the proof of corollary \ref{corollary_1} and is therefore skipped.\\

Now that we have established the convergence results, we pivot our focus towards the central motif of this article, namely, computational cost (sampling complexity). As previously indicated, let $\Bar{\eta}/h_{\ell}$ denote the cost associated with generating a single sample of $f(x,\zeta_{\ell})$ at level $\ell$. Thus, the computational expenditure for generating $N_{\ell}$ samples would amount to $\Bar{\eta}N_{\ell}/h_{\ell}$. Given that we generate samples across levels $\ell = 0,\dots,L$, the aggregate computational cost is delineated as:
\begin{equation}
\label{saa:eq15}
    \mathcal{C}_{mlmc}^{saa}:=\Bar{\eta}\sum_{\ell = 0}^L \frac{N_{\ell}}{h_{\ell}}.
\end{equation}
The following result estimates the computational complexity associated with the multilevel estimator.
\begin{theorem}(Computational Complexity)
    \label{theorem:compuational_complexity_mlmc_saa}
    Let the probability of the solution to the multilevel SAA problem being $\epsilon$-optimal to the original problem be at least $1-\epsilon^{\gamma}$, for some $\gamma >0$. Further, let the assumptions of Theorem \ref{theorem:uniform_convergence_mlmc_saa} and corollary \ref{corollary_2} hold. Also, assume the existence of positive constants $\alpha, \beta$,$r$ and $m$ with $\alpha\geq 1/2$ and $m>2$. Then, the computational complexity associated with achieving the $\epsilon$-optimal solution is
    \begin{equation}
    \label{saa:eq16}
        \mathcal{C}_{mlmc}^{saa} = \begin{cases}
            \mathcal{O}\left((\gamma+d)\epsilon^{-2}\log(\epsilon^{-1})\right),&\text{ for } \beta >1 .\\
            \mathcal{O}\left((\gamma+d)\epsilon^{-\left(2+\frac{\log(2)}{\alpha\log(m)}\right)}\log(\epsilon^{-1})\right), &\text{ for } \beta = 1.\\
           \mathcal{O}\left((\gamma+d)\epsilon^{-\left(2+\frac{1-\beta}{\alpha}\right)}\log(\epsilon^{-1})\right),& \text{ for } \beta < 1.\\
            
        \end{cases}
    \end{equation}
\end{theorem}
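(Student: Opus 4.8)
The plan is to follow the template of Theorem~\ref{theorem:computational_complexity_mc_saa}, now distributing the sampling budget across levels. Starting from the tail bound of Corollary~\ref{corollary_2}, the argument has three ingredients, assembled in order: (i) fix the number of levels $L$ so the MLMC bias meets the hypothesis $\sup_{x}\abs{\expectation{[F_L(x)-F(x)]}}\le\epsilon/8$ of Corollary~\ref{corollary_2}; (ii) choose the per-level sample sizes $N_\ell$ so the right-hand side of \eqref{saa:eq14} is at most $\epsilon^{\gamma}$; and (iii) insert these $N_\ell$ into the cost \eqref{saa:eq15} and evaluate the resulting geometric sum, whose growth rate produces the three regimes in $\beta$.

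For step (i) I would take $h_\ell=m^{-\ell}h_0$, so the telescoping structure gives $\expectation{[F_L(x)]}=\expectation{[f(x,\zeta_L)]}$ and Assumption~\ref{assumption_mlmc_bias} bounds the bias by $c_1 h_L^{\alpha}$; imposing $c_1 h_L^{\alpha}\le \epsilon/8$ forces $L=\lceil \tfrac{1}{\alpha\log m}\log(\epsilon^{-1})\rceil+\mathcal{O}(1)=\mathcal{O}(\log(\epsilon^{-1}))$, the multilevel analogue of $h=c_2\epsilon^{1/\alpha}$. For step (ii) I would make each of the $L+1$ summands in \eqref{saa:eq14}, after multiplication by the polynomial prefactor, at most $\epsilon^{\gamma}/(L+1)$; inverting the exponent and using $h_\ell^{\beta}=m^{-\beta\ell}$ gives
\[
N_\ell=\left\lceil \frac{64\,m^{2r}(\delta+2)c_2}{(m^{r}-1)^2}\,\epsilon^{-2}\,m^{(2r-\beta)\ell}\,\mathcal{L}\right\rceil,\quad \mathcal{L}:=d\log\!\Big(\tfrac{8(2L+1)L_f\mathcal{D}_{\mathcal{X}}}{\epsilon}\Big)+\gamma\log(\epsilon^{-1})+\log(L+1).
\]
Since $L=\mathcal{O}(\log(\epsilon^{-1}))$, the term $\log(L+1)$ is of lower order and $\mathcal{L}=\mathcal{O}((d+\gamma)\log(\epsilon^{-1}))$, exactly as in the single-level proof.

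For step (iii), substituting into $\mathcal{C}_{mlmc}^{saa}=\Bar{\eta}\sum_{\ell=0}^L N_\ell/h_\ell=\Bar{\eta}\sum_{\ell=0}^L N_\ell m^{\ell}$ yields
\[
\mathcal{C}_{mlmc}^{saa}=\mathcal{O}\!\big((d+\gamma)\epsilon^{-2}\log(\epsilon^{-1})\big)\,\frac{m^{2r}}{(m^{r}-1)^2}\sum_{\ell=0}^{L} m^{(2r+1-\beta)\ell},
\]
so the order is governed by the sign of $2r+1-\beta$. When $\beta>1$ I would fix any $r\in(0,(\beta-1)/2)$, making the series converge to $\mathcal{O}(1)$ and giving $\mathcal{O}((d+\gamma)\epsilon^{-2}\log(\epsilon^{-1}))$. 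When $\beta<1$ I would let $r\downarrow 0$, so the sum is dominated by its last term $m^{(1-\beta)L}=\epsilon^{-(1-\beta)/\alpha}$ (using $L$ from step (i)), producing $\mathcal{O}((d+\gamma)\epsilon^{-(2+(1-\beta)/\alpha)}\log(\epsilon^{-1}))$. The boundary $\beta=1$ is where $2r+1-\beta=2r$, and choosing $r$ with $m^{2r}=2$ turns the sum into $\sum_\ell 2^{\ell}\sim 2^{L}=\epsilon^{-\log 2/(\alpha\log m)}$, yielding the stated exponent $2+\tfrac{\log 2}{\alpha\log m}$.

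The principal obstacle is the bookkeeping of the free parameter $r$, which enters both the admissibility of the union bound underlying Theorem~\ref{theorem:uniform_convergence_mlmc_saa} (through the summable weights $m^{-r\ell}$) and, via the constant $m^{2r}/(m^{r}-1)^2$ together with the exponent $2r+1-\beta$, the final cost. The delicate case is $\beta=1$: one cannot simply send $r\downarrow 0$ to recover $\epsilon^{-2}$, because the prefactor $m^{2r}/(m^{r}-1)^2$ diverges, so $r$ must stay bounded away from $0$ and the concentration-driven allocation (with its $m^{2r\ell}$ weighting rather than the variance-balancing $m^{-\ell/2}$ of classical MLMC) incurs a genuine power-of-$\epsilon$ penalty. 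I would therefore verify that, for each regime, a single admissible choice of $r$ (with $m>2$, $\alpha\ge 1/2$) simultaneously secures the bias bound of step (i), the probability bound $\le\epsilon^{\gamma}$ of step (ii), and the claimed cost order, and confirm that rounding the $N_\ell$ up to integers does not alter the stated orders.
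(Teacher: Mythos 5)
Your proposal reproduces the paper's argument essentially step for step: the same choice $L=\lceil\log(8c_1h_0^{\alpha}\epsilon^{-1})/(\alpha\log m)\rceil$ to force $c_1h_L^{\alpha}\le\epsilon/8$, the same per-level allocation $N_{\ell}\propto m^{2r\ell}h_{\ell}^{\beta}\epsilon^{-2}\log(\mathcal{A}(L+1)\epsilon^{-\gamma})$ obtained by inverting each summand of \eqref{saa:eq14}, the same cost decomposition into $\sum_{\ell}m^{2r\ell}h_{\ell}^{\beta-1}$ plus $\sum_{\ell}h_{\ell}^{-1}$ (the latter controlled using $\alpha\ge 1/2$), and the same choice $m^{2r}=2$ at $\beta=1$ yielding the exponent $2+\log 2/(\alpha\log m)$. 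For $\beta>1$ your ``any $r\in(0,(\beta-1)/2)$'' matches the paper's fixed $r=(\beta-1)/2-\log 2/\log m$.

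The one place your write-up is not sound as stated is the case $\beta<1$: you propose to ``let $r\downarrow 0$'' so that the last term of the sum is $m^{(1-\beta)L}$, but $r$ must be a fixed positive constant (it also sits in the prefactor $m^{2r}/(m^{r}-1)^2$, which diverges as $r\downarrow 0$ --- precisely the obstruction you yourself identify in the $\beta=1$ discussion). For any admissible fixed $r>0$ the geometric sum is of order $m^{(2r+1-\beta)L}=\Theta\bigl(\epsilon^{-(2r+1-\beta)/\alpha}\bigr)$, which carries an extra factor $\epsilon^{-2r/\alpha}$ relative to the claimed rate; an $\epsilon$-dependent $r\sim 1/\log(\epsilon^{-1})$ would tame that factor but only at the price of extra logarithmic powers from $(m^{r}-1)^{-2}$. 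The paper handles this case by fixing $r=(1-\beta)/2-\log 2/\log m$ and rewriting the sum around $h_L^{-(1-\beta)}$, so to complete your proof you should either adopt that concrete choice and track the resulting constant, or make explicit how the $m^{2rL}$ contribution is absorbed; as written, ``$r\downarrow 0$'' does not deliver the stated exponent $2+(1-\beta)/\alpha$.
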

\begin{proof}
    Following the same line of argument as in Theorem \ref{theorem:computational_complexity_mc_saa}, observe that if 
\begin{equation}
    N_{\ell} = \left \lceil \frac{64 m^{2r} m^{2 r\ell}(\delta+2)c_2 h_{\ell}^{\beta}}{\epsilon^2 (m^{r}-1)^2}\log\left(\frac{\mathcal{A}}{\epsilon^{\gamma}}(L+1)\right)\right \rceil, \text{ where } \mathcal{A} = \mathcal{O}(1)\bigg(\frac{8(2L+1)L_f\mathcal{D}_{\mathcal{X}}}{\epsilon}\bigg)^d,\nonumber
\end{equation}
then $ \probability{\bigg(F(x_L^*) - F(x^*)>\epsilon\bigg)} \leq \epsilon^{\gamma}$. Therefore, we now have a formulation for the number of samples required on various levels of resolution. As for the number of levels and computational complexity, we separately analyse different cases. We present below a general expression for the computational cost based on the formulation in \eqref{saa:eq15} that would be relevant throughout our analysis.
\begin{equation}
\label{saa:eq17}
\mathcal{C}_{mlmc}^{saa}\leq\Bar{\eta}\left(\frac{64 m^{2r} (\delta+2)c_2 }{\epsilon^2 (m^{r}-1)^2}\log\left(\frac{\mathcal{A}}{\epsilon^{\gamma}}(L+1)\right)\sum_{\ell = 0}^Lm^{2 r\ell}h_{\ell}^{\beta-1} + \sum_{\ell = 0}^L \frac{1}{h_{\ell}}\right)
\end{equation}
To begin with, let,
\[
L = \left \lceil \frac{\log(8c_1 h_0 ^{\alpha}\epsilon^{-1})}{\alpha\log(m)}\right \rceil,
\]
then, 
\begin{align*}
L+1 &\leq \frac{\log(8c_1h_0^{\alpha}\epsilon^{-1})}{\alpha\log(m)} + 2 
\leq c_3(\log(\epsilon^{-1}))
\end{align*}
where $\displaystyle{c_3 = \left(\frac{1}{\alpha\log(m)}+\max \left\{ 0,\frac{\log(8c_1h_0^{\alpha})}{\alpha\log(m)}\right\}+2\right)}$. Also, for the above $L$, it is easy to observe that,
\[
\frac{\epsilon}{8c_1m^{\alpha}}\leq h_L^{\alpha} < \frac{\epsilon}{8c_1},
\]
thereby satisfying the bias conditions required for uniform convergence. Further, 
\[
\log\left(\frac{\mathcal{A}}{\epsilon^{\gamma}}(L+1)\right) \leq c_4(\gamma+d)\log(\epsilon^{-1}) 
\]
where $c_4 = (3+\log(c_316L_f\mathcal{D}_{\mathcal{X}}))$. And also, for $\alpha\geq 1/2$ and $\epsilon < 1/e$, 
\[
    \sum_{\ell = 0}^L \frac{1}{h_{\ell}}< \epsilon^{-2}c_5 < (\gamma + d)\epsilon^{-2}\log(\epsilon^{-1})c_5 
\]
with $\displaystyle{c_5 = \frac{(8c_1)^{1/\alpha}m^2}{m-1}}$. Now all that is left for us to analyse is $\displaystyle{\sum_{\ell = 0}^Lm^{2 r\ell}h_{\ell}^{\beta-1}}$. We will handle this case by case.\\
\textbf{Case 1 ($\beta = 1$):} For $\beta  = 1$, we have $\displaystyle{\sum_{\ell = 0}^Lm^{2 r\ell}h_{\ell}^{\beta-1} = \frac{m^{2r(L+1)}-1}{m^{2r}-1} \leq \frac{m^{2r(L+1)}}{m^{2r}-1}}$. Substituting the upper bound of $L+1$, we get 
\[
\frac{m^{2r(L+1)}}{m^{2r}-1} \leq (8c_1h_0^{\alpha})^{2r/\alpha} \epsilon^{-2r/\alpha}\frac{m^{2r}}{m^{2r}-1}.
\]
Choosing $\displaystyle{r = \frac{\log(2)}{2\log(m)}}$ we get that $\displaystyle{\frac{m^{2r}}{m^{2r}-1} = 2}$. Consequently, we have,
\[
\frac{m^{2r(L+1)}}{m^{2r}-1} \leq c_6\epsilon^{-\frac{\log(2)}{\log(m)\alpha}} 
\]
where $\displaystyle{c_6 = 2(8c_1h_0^{\alpha})^{\frac{\log(2)}{\log(m)\alpha}} }$.\\
\textbf{Case 2 ($\beta > 1$):} For this case we assume $2r<\beta-1$. Then we have
\begin{align*}
    \sum_{\ell = 0}^Lm^{2r\ell}h_{\ell}^{\beta - 1} &= h_0^{\beta-1}\sum_{\ell = 0}^L \frac{1}{m^{(\beta-1-2r)\ell}}\\
        &\leq h_0^{\beta-1}(1 - m^{-(\beta-1-2r)})^{-1}.
\end{align*}
As before, choosing $\displaystyle{r = \frac{\beta-1}{2} - \frac{\log(2)}{\log(m)}}$, we get $\displaystyle{(1 - m^{-(\beta-1-2r)})^{-1} = 2}$. Consequently, we have  
\[\displaystyle{\sum_{\ell = 0}^Lm^{2r\ell}h_{\ell}^{\beta - 1}\leq2h_0^{\beta-1}}.\]
\textbf{Case 3 ($\beta<1$):} In this case choosing $\displaystyle{r = \frac{1-\beta}{2} - \frac{\log(2)}{\log(m)}}$, we have,
\begin{align*}
        \sum_{\ell = 0}^Lm^{2r\ell}h_{\ell}^{\beta - 1} &= h_L^{-(1-\beta)}\sum_{\ell = 0}^L \frac{1}{m^{(1-\beta-2r)\ell}}\\
        &< h_L^{-(1-\beta)}(1 - m^{-(1-\beta-2r)})^{-1}\\
        &\leq 2\epsilon^{-(1-\beta)/\alpha}(8c_1)^{(1-\beta)/\alpha}
\end{align*}
By collating everything together, we get the desired result.
\end{proof}
The above analysis illustrates how the Multilevel approximation of expectations can yield enhancements in computational complexity. We observe how the MLMC-SAA performs better than the Monte Carlo SAA, achieving the unbiased level of performance for $\beta>1$. As in the case of the standard MLMC analysis, we see the effect of variance convergence on the computational cost, \textit{i.e.}, introducing the estimator with high order variance convergence can affect the overall performance of the MLMC-SAA. For instance, the variance analysis carried out in  \cite{hu2020sample} shows that $\beta = 1/2$ and as $\alpha = 1/2$, consequently we observe the computational complexity of $\displaystyle{\mathcal{O}\left((\gamma+d)\epsilon^{-3}(\log(\epsilon^{-1}))\right)}$, similar to the one achieved by a smooth function.\\
\\
Despite the theoretical analysis elucidating the uniform convergence of both standard and multilevel Monte Carlo in the biased framework, the practical realisation of such integration poses notable challenges. One challenge stems from relatively strong but unavoidable assumptions of the existence of a finite valued moment-generating function in the neighbourhood of zero. This assumption is not necessary to prove the convergence but is essential in order to determine the sample complexity required to achieve an exponential rate of convergence. Although in \cite{banholzer2022rates} studied the convergence in the almost sure sense, the extension to the biased setup is beyond the scope of this paper. For now, we direct our attention to RMSE analysis, consequently deriving sample complexity without such strong assumptions.

\section{Root Mean Squared Error Analysis}
\label{section:rmse_analysis}
In this section, we undertake the RMSE analysis of the biased SAA problem both in the context of standard as well as multilevel Monte Carlo setting. We start our discussion with a technical lemma that would be essential in the remainder of the study.
\begin{lemma}
    Suppose $0<\gamma<\infty$ and $c,d >0 \in \mathbb{R}$. Then for all $x \in (0,d]$, there exist a constant $\mathfrak{K}$ such that
    \[
    x\left(1 + \sqrt{c\log\left(\frac{d}{x}\right)}\right)\leq \mathfrak{K} x^{\frac{1}{1+\gamma}}
    \]
\end{lemma}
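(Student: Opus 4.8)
The plan is to reduce the claimed inequality to a statement about the competition between a positive power of $x$ and a logarithm. Since $x>0$, I would divide both sides of the desired bound by $x^{\frac{1}{1+\gamma}}$ and use $1-\frac{1}{1+\gamma}=\frac{\gamma}{1+\gamma}$; the claim is then equivalent to showing that
\[
\psi(x) := x^{\frac{\gamma}{1+\gamma}}\left(1 + \sqrt{c\log\left(\frac{d}{x}\right)}\right)
\]
is bounded above on $(0,d]$, the sought constant being any $\mathfrak{K}\geq \sup_{x\in(0,d]}\psi(x)$.

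Next I would substitute $t=d/x$, so that $t\geq 1$ on $(0,d]$ and $x=d/t$, which recasts the problem as bounding
\[
\psi(x) = d^{\frac{\gamma}{1+\gamma}}\, t^{-\frac{\gamma}{1+\gamma}}\left(1 + \sqrt{c\log t}\right)
\]
uniformly over $t\in[1,\infty)$. The heart of the matter is the elementary fact that a positive power of $t$ eventually dominates $\sqrt{\log t}$, and this is precisely where the hypothesis $\gamma>0$ enters, since it guarantees that the decay exponent $\frac{\gamma}{1+\gamma}$ is strictly positive.

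To make this quantitative and produce an explicit constant, I would invoke the standard inequality $\log t \leq \frac{1}{\theta}\,t^{\theta}$, valid for all $t\geq 1$ and every $\theta>0$. Choosing $\theta=\frac{\gamma}{1+\gamma}$ gives $\sqrt{c\log t}\leq \sqrt{c/\theta}\,t^{\theta/2}$, and since $t\geq 1$ forces $1\leq t^{\theta}$ and $t^{\theta/2}\leq t^{\theta}$, one gets $1+\sqrt{c\log t}\leq \bigl(1+\sqrt{c/\theta}\bigr)t^{\theta}$. Multiplying through by $t^{-\theta}$ then bounds the bracketed factor by a constant, so that one may take
\[
\mathfrak{K} = d^{\frac{\gamma}{1+\gamma}}\left(1 + \sqrt{\frac{c(1+\gamma)}{\gamma}}\,\right).
\]
Alternatively, a non-constructive argument suffices: $\psi$ is continuous on $(0,d]$ and, by the same power-versus-logarithm comparison, $\psi(x)\to 0$ as $x\to 0^{+}$, so $\psi$ extends continuously to the compact interval $[0,d]$ and attains a finite maximum there.

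I do not expect a genuine obstacle here; the only points meriting care are to record explicitly that $\gamma>0$ is essential (the estimate fails for $\gamma=0$, where $x^{\frac{1}{1+\gamma}}=x$ while $\sqrt{\log(d/x)}$ is unbounded), and to note that at the endpoint $x=d$ the logarithm vanishes, so $\psi(d)=d^{\frac{\gamma}{1+\gamma}}$ is finite and no singularity arises at the right end of the interval.
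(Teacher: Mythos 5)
Your proof is correct. In fact it contains two complete arguments, and the second one — extending $\psi(x)=x^{\gamma/(1+\gamma)}\bigl(1+\sqrt{c\log(d/x)}\bigr)$ continuously to the compact interval $[0,d]$ by setting $\psi(0)=0$ and taking $\mathfrak{K}$ to be its maximum — is precisely the proof given in the paper. Your primary route goes further: by substituting $t=d/x$ and invoking $\log t\leq \tfrac{1}{\theta}t^{\theta}$ for $t\geq 1$ with $\theta=\tfrac{\gamma}{1+\gamma}$, you obtain the explicit admissible constant
\[
\mathfrak{K} = d^{\frac{\gamma}{1+\gamma}}\left(1+\sqrt{\tfrac{c(1+\gamma)}{\gamma}}\right),
\]
whereas the paper's compactness argument only asserts the existence of $\mathfrak{K}=g(x^*)$ without quantifying it. Since the lemma is later used to control the constants $\mathfrak{c}_{\ell}$ (and hence $\bar{\mathfrak{c}}=\max_{0\leq\ell\leq L}\mathfrak{c}_{\ell}$) appearing in the multilevel RMSE bounds — constants that the numerical section notes are needed, and hard to estimate, when choosing $N_{\ell}$ — an explicit value is a genuine, if modest, improvement. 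Your side remarks are also accurate: the bound indeed degenerates as $\gamma\to 0$ (the constant blows up like $\gamma^{-1/2}$, consistent with the failure at $\gamma=0$), and the endpoint $x=d$ is harmless since the logarithm vanishes there.
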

\begin{proof}
Let $\displaystyle{\Bar{\gamma} = 1 -\frac{1}{1+\gamma}}$ then $0<\Bar{\gamma}<1$. Consider the function,
\[
g(x) = \begin{cases}
    x^{\Bar{\gamma}}\left(1 + \sqrt{c\log\left(\frac{d}{x}\right)}\right)&,x \in (0,d]\\
    0 &,x = 0
\end{cases}
\]
then it is easy to see that $g(x)\geq 0$ is continuous on $[0,d]$. Since $[0,d]$ is compact, we have $x^* \in [0,d]$ such that $\displaystyle{g(x^*)  = \max_{x \in [0,d]}g(x)}$. The results follows by letting $\mathfrak{K} = g(x^*)$.

\end{proof}

\subsection{Standard Monte Carlo Paradigm}
The next result provides an RMSE bound for the optimal value of the SAA problem defined in equation \eqref{saa:eq5}, defined as,
\begin{equation}
    \text{RMSE}:=\bnorm{\min_{x \in \mathcal{X}}\frac{1}{N}\sum_{k = 1}^N f(x,\zeta_h^k) -\mathfrak{p}^*}_2 
\end{equation}
where $\mathfrak{p}^*$ is the optimal value of the original SAA problem.
\begin{theorem}
    Suppose assumptions \ref{lip_assumption}, \ref{bias_assmpution} and \ref{variance_assumption} holds. Then for any $N\in \mathbb{N}$.
\[
 \bnorm{\min_{x \in \mathcal{X}}\frac{1}{N}\sum_{k = 1}^N f(x,\zeta_h^k) -\mathfrak{p}^*}_2 \leq c_1h^{\alpha} + \mathfrak{c}_3\frac{\sigma}{\sqrt{N}}.
\]
\end{theorem}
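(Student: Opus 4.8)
The plan is to split the error into a deterministic bias piece and a stochastic fluctuation piece, and to handle the second with the empirical-process machinery assembled in Section~\ref{section: preliminaries}. Writing $\hat{\mathfrak{p}}_N := \min_{x\in\mathcal{X}} \tfrac1N\sum_{k=1}^N f(x,\zeta_h^k)$ and introducing the biased mean objective $F_h(x) := \expectation{[f(x,\zeta_h)]}$, I would first apply the triangle inequality in $L^2$,
\[
\bnorm{\hat{\mathfrak{p}}_N - \mathfrak{p}^*}_2 \le \bnorm{\min_{x\in\mathcal{X}}F_h^N(x) - \min_{x\in\mathcal{X}} F_h(x)}_2 + \babs{\min_{x\in\mathcal{X}} F_h(x) - \min_{x\in\mathcal{X}} F(x)}.
\]
The second term is deterministic. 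Using the elementary contraction property $\abs{\min_x g(x) - \min_x h(x)}\le \sup_x\abs{g(x)-h(x)}$ together with Assumption~\ref{bias_assmpution} gives $\babs{\min_x F_h(x) - \min_x F(x)} \le \sup_{x}\abs{\expectation[f(x,\zeta_h)-f(x,\zeta)]} \le c_1 h^{\alpha}$, which already produces the first term of the claimed bound.

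For the stochastic term, the same contraction property reduces the problem to controlling a uniform deviation in second moment:
\[
\bnorm{\min_{x}F_h^N(x) - \min_{x} F_h(x)}_2 \le \bnorm{\sup_{x\in\mathcal{X}}\babs{F_h^N(x) - F_h(x)}}_2 = \left(\expectation{\Big[\sup_{x\in\mathcal{X}}\babs{\tfrac1N\textstyle\sum_{k=1}^N f(x,\zeta_h^k) - \expectation{[f(x,\zeta_h)]}}^2\Big]}\right)^{1/2}.
\]
Here the empirical-process lemmas enter. I would invoke Lemma~5 (the $L^p$-to-$L^1$ comparison) with $p=2$, so that the factor $N^{1/p-1/2}$ collapses to $1$, bounding the root second moment by $C$ times the first moment plus $C\norm{\bar f(x,\zeta_h)}_2/\sqrt N$. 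Lemma~4 then bounds that first moment by $C\norm{\bar f(x,\zeta_h)}_2$ multiplied by the bracketing-entropy integral $\int_0^1\sqrt{1+\log Q_{[]}(v\norm{\bar f}_2,\mathfrak{F},\norm{\cdot}_2)}\,dv$, all divided by $\sqrt N$.

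The hard part will be certifying that this entropy integral is a finite constant depending only on $d$. This follows by chaining Lemma~3 and Lemma~1: the bracketing number is dominated by the covering number $Q(v,\mathcal{X},\norm{\cdot})=\mathcal{O}((\mathcal{D}_{\mathcal{X}}/v)^d)$, so the integrand grows only like $\sqrt{1+d\log(1/v)}$, which is integrable on $(0,1)$. The last step is to convert $\norm{\bar f(x,\zeta_h)}_2$ into $\sigma$: since centering at most doubles the envelope and $\expectation[\sup_x\abs{f(x,\zeta_h)}]\le (\expectation[\sup_x\abs{f(x,\zeta_h)}^2])^{1/2}\le \sigma$ by Jensen's inequality and Assumption~\ref{variance_assumption}, one obtains $\norm{\bar f(x,\zeta_h)}_2\le 2\sigma$. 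Absorbing the universal constant, the dimensional entropy constant, and this factor into a single $\mathfrak{c}_3$ yields $\bnorm{\sup_x\abs{F_h^N - F_h}}_2 \le \mathfrak{c}_3\sigma/\sqrt N$, and adding the bias bound completes the argument. The one subtlety worth flagging is that $\mathfrak{c}_3$ must be independent of the bias parameter $h$; this is precisely guaranteed because Assumption~\ref{variance_assumption} takes the supremum of the envelope over all $h\in\mathcal{B}$.
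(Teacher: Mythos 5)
Your proposal is correct and follows essentially the same route as the paper: the same triangle-inequality split through the intermediate quantity $\mathfrak{p}^{*,h}=\min_{x}\expectation[f(x,\zeta_h)]$, the same contraction of min-differences to a supremum, and the same chain of Lemmas 3--5 to bound the uniform second moment by $\mathfrak{c}_3\sigma/\sqrt{N}$. If anything, you supply slightly more detail than the paper on why the bracketing-entropy integral is finite and why the constant is uniform in $h$, which are worthwhile clarifications rather than deviations.
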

\begin{proof}
Let $\displaystyle{\mathfrak{p}^{*,h} =\min_{x \in \mathcal{X}}\expectation[f(x,\zeta_h)] }$. Observe that,
\begin{align*}
    \bnorm{\min_{x \in \mathcal{X}}\frac{1}{N}\sum_{k = 1}^N f(x,\zeta_h^k) -\mathfrak{p}^*}_2 &= \bnorm{\min_{x \in \mathcal{X}}\frac{1}{N}\sum_{k = 1}^N f(x,\zeta_h^k) -\mathfrak{p}^{*,h}+ \mathfrak{p}^{*,h} - \mathfrak{p}^*}_2\\
    &\leq  \bnorm{\min_{x \in \mathcal{X}}\frac{1}{N}\sum_{k = 1}^N f(x,\zeta_h^k) -\mathfrak{p}^{*,h}}+ \bnorm{\mathfrak{p}^{*,h} - \mathfrak{p}^*}_2
\end{align*}
For the second term in the above inequality, we have,
\[\displaystyle{\bnorm{\mathfrak{p}^{*,h} - \mathfrak{p}^{*}}_2\leq \left(\expectation{\left[\sup_{x\in\mathcal{X}}|\expectation{[f(x,\zeta_h)]}-\expectation{[f(x,\zeta)]}|^2\right]}\right)^{1/2}}\leq c_1h^{\alpha}.\]
As for the first term, we have,
\begin{align*}
    \bnorm{\min_{x \in \mathcal{X}}\frac{1}{N}\sum_{k = 1}^N f(x,\zeta_h^k) -\min_{x \in \mathcal{X}}\expectation[f(x,\zeta_h)]}_2 &= \left(\expectation{\left[\left(\min_{x \in \mathcal{X}}\frac{1}{N}\sum_{k = 1}^N f(x,\zeta_h^k) -\min_{x \in \mathcal{X}}\expectation[f(x,\zeta_h)]\right)^2\right]}\right)^{1/2}\nonumber\\
    &\leq \left(\expectation{\left[\sup_{x \in \mathcal{X}}\left(\babs{\frac{1}{N}\sum_{k = 1}^N f(x,\zeta_h^k) -\expectation[f(x,\zeta_h)]}\right)^2\right]}\right)^{1/2}\nonumber
\end{align*}
Now for a given $h \in \mathcal{B}$, we define $\mathbb{F}_{N}^h$ as,
\[
\mathbb{F}^h_{N}(\cdot) = \sqrt{N}\left(\frac{1}{N}\sum_{k = 1}^{N}(f(\cdot,\zeta_h^k) - \expectation{[f(\cdot,\zeta_h)]})\right),
\] 
is an empirical process. Then, under the assumptions \ref{bias_assmpution} and \ref{variance_assumption} and a direct application of Lemma 3,4 and 5, we have
\[
\sqrt{\expectation{\left[\sup_{x \in \mathcal{X}}\left(\babs{\frac{1}{N}\sum_{k = 1}^N f(x,\zeta_h^k) -\expectation[f(x,\zeta_h)]}\right)^2\right]}} \leq \mathfrak{c}_3\frac{\sigma}{\sqrt{N}}.
\]
where $\displaystyle{\mathfrak{c}_3 = C\int_{0}^1\sqrt{1 + \log(Q_{[ ]}(v\norm{\bar{f}(x,\bar{\zeta})}_2,\mathfrak{F},\norm{\cdot}_2))}<\infty}$. By collating everything and reassigning constants, we get the desired result.
\end{proof}
\begin{corollary}
    The computational complexity required for $\displaystyle{\bnorm{\min_{x \in \mathcal{X}}\frac{1}{N}\sum_{k = 1}^N f(x,\zeta_h^k) -\mathfrak{p}^*}_2\leq \epsilon}$ is $\mathcal{O}\left(\epsilon^{-\left(2+\frac{1}{\alpha}\right)}\right)$.
\end{corollary}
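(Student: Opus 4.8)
The plan is to read off the result directly from the RMSE bound established in the preceding theorem, namely
\[
\bnorm{\min_{x \in \mathcal{X}}\frac{1}{N}\sum_{k = 1}^N f(x,\zeta_h^k) -\mathfrak{p}^*}_2 \leq c_1h^{\alpha} + \mathfrak{c}_3\frac{\sigma}{\sqrt{N}},
\]
and then to enforce that the right-hand side be at most $\epsilon$ by balancing the two contributions. The bound splits cleanly into a bias term $c_1 h^{\alpha}$, governed by the approximation parameter $h$, and a statistical term $\mathfrak{c}_3\sigma/\sqrt{N}$, governed by the sample size $N$. The natural strategy is to drive each of the two terms below $\epsilon/2$ separately, exactly as in the proof of Theorem \ref{theorem:computational_complexity_mc_saa}.

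Concretely, first I would choose the bias parameter so that $c_1 h^{\alpha}\leq \epsilon/2$, for instance $h = (\epsilon/(2c_1))^{1/\alpha}$, which forces $h = \Theta(\epsilon^{1/\alpha})$. Second, I would choose the sample size so that $\mathfrak{c}_3 \sigma/\sqrt{N}\leq \epsilon/2$, for instance $N = \lceil (2\mathfrak{c}_3\sigma/\epsilon)^2\rceil$, which gives $N = \Theta(\epsilon^{-2})$; here finiteness of $\sigma$ is guaranteed by Assumption \ref{variance_assumption} and finiteness of $\mathfrak{c}_3$ was established in the theorem. With these two selections the bound yields $\text{RMSE}\leq \epsilon$. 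Then I would invoke the cost model from Section \ref{section:monte_carlo_saa}: a single sample of $f(x,\zeta_h)$ costs $\Bar{\eta}/h$, so the total expenditure is $\mathcal{C} = N\Bar{\eta}/h$. Substituting the chosen orders gives
\[
\mathcal{C} = N\Bar{\eta}/h = \Theta(\epsilon^{-2})\cdot\Theta(\epsilon^{-1/\alpha}) = \mathcal{O}\left(\epsilon^{-\left(2+\frac{1}{\alpha}\right)}\right),
\]
which is the claimed complexity.

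I do not expect a genuine obstacle here: once the RMSE bound is in hand, this is a routine balancing-and-accounting argument. The only points requiring care are that the constants $c_1$, $\mathfrak{c}_3$ and $\sigma$ are indeed finite (so that the prescribed choices of $h$ and $N$ are well defined and the orders are as stated), and that the per-sample cost model $\Bar{\eta}/h$ is identical to the one underlying the Monte Carlo SAA complexity theorem, so that the same accounting applies verbatim. It is worth remarking that the resulting exponent $2+\tfrac{1}{\alpha}$ coincides with the complexity obtained in Theorem \ref{theorem:computational_complexity_mc_saa} for the $\epsilon$-optimal solution, confirming that the RMSE criterion and the probabilistic $\epsilon$-optimality criterion lead to the same order of computational effort in the standard Monte Carlo paradigm.
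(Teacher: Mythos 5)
Your proof is correct and follows essentially the same route as the paper: choose $h=\Theta(\epsilon^{1/\alpha})$ to control the bias term, $N=\Theta(\epsilon^{-2})$ to control the statistical term, and multiply through the per-sample cost $\bar{\eta}/h$. The only difference is that you spell out the balancing explicitly (and write the exponent of $h$ with the correct sign, where the paper's two-line proof has an apparent typo $h=\mathcal{O}(\epsilon^{-1/\alpha})$).
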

\begin{proof}
    Take $\displaystyle{h = \mathcal{O}(\epsilon^{-\frac{1}{\alpha}}})$ and $\displaystyle{N = \mathcal{O}(\epsilon^{-2})}$. Since the computational cost is given as $\displaystyle{\frac{N}{h}}$, we get the desired result.
\end{proof}
\subsection{Multilevel Monte Carlo Paradigm}
In the multilevel setting, we let, let $g(x,\bar{\zeta}_{\ell}):= f(x,\zeta_{\ell}) - f(x,\zeta_{\ell-1})$, and define $\mathfrak{p}^{*,L}: = \min_{x\in \mathcal{X}}\expectation{[f(x,\zeta^L)]} = \min_{x\in \mathcal{X}}\sum_{\ell =0}^L \expectation{[g(x,\bar{\zeta}_{\ell})]}$. Further, we define, 
\[\hat{\mathfrak{p}}^{*,L} : = \min_{x \in \mathcal{X}}\sum_{\ell = 0}^L \frac{1}{N_{\ell}}\sum_{k = 1}^{N_{\ell}}g(x,\bar{\zeta}^k_{\ell})\] 
as the Monte Carlo approximation of $\mathfrak{p}^{*,L}$. The RMSE error in this case is given as,
\begin{equation}
    \text{RMSE}_{MLMC}:= \bnorm{\min_{x \in \mathcal{X}}\sum_{\ell = 0}^L \frac{1}{N_{\ell}}\sum_{k = 1}^{N_{\ell}}g(x,\bar{\zeta}^k_{\ell}) - \mathfrak{p}^*}_2 = \bnorm{ \hat{\mathfrak{p}}^{*,L}-\mathfrak{p}^{*}}_2.
\end{equation}
The following result provides an RMSE error bound for the optimal value obtained by solving MLMC-SAA defined in \eqref{saa:eq6}.
\begin{theorem}
    Suppose assumptions \ref{lip_assumption}, \ref{assumption_mlmc_bias} and \ref{assumption_mlmc_var} holds. Then for any $0<a<\infty$, $L\geq2$.
    \[
    \bnorm{ \hat{\mathfrak{p}}^{*,L}-\mathfrak{p}^{*}}_2 \leq  c_1h_{L}^{\alpha} +2c_2\bar{\mathfrak{c}} \sum_{\ell = 0}^L\frac{ h_{\ell}^{\bar{\beta}/2}}{\sqrt{N_{\ell}}}
    \]
    where $\bar{\beta} = \beta\frac{1}{1+a}$.
\end{theorem}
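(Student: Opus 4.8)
The plan is to mirror the structure of the preceding standard Monte Carlo RMSE proof, splitting the error into a deterministic bias contribution and a statistical (empirical-process) contribution, but now tracking how the per-level variance decay $V_\ell\leq c_2h_\ell^{\beta}$ of Assumption \ref{assumption_mlmc_var} propagates through the moment bounds. First I would write
\[
\hat{\mathfrak{p}}^{*,L}-\mathfrak{p}^{*} = \left(\hat{\mathfrak{p}}^{*,L}-\mathfrak{p}^{*,L}\right) + \left(\mathfrak{p}^{*,L}-\mathfrak{p}^{*}\right)
\]
and apply the triangle inequality in $L^2$. Since $\mathfrak{p}^{*,L}$ and $\mathfrak{p}^{*}$ are deterministic, the second term equals $\abs{\min_{x}\expectation{[f(x,\zeta_L)]}-\min_{x}\expectation{[f(x,\zeta)]}}$; the elementary bound $\abs{\min_x A(x)-\min_x B(x)}\leq\sup_x\abs{A(x)-B(x)}$ together with Assumption \ref{assumption_mlmc_bias} at level $L$ controls it by $c_1h_L^{\alpha}$, yielding the first term of the claim.

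For the statistical term I would again use $\abs{\min_x A-\min_x B}\leq\sup_x\abs{A-B}$ together with the telescoping identity $\sum_{\ell=0}^{L}g(x,\bar\zeta_\ell)=f(x,\zeta_L)$ to obtain
\[
\abs{\hat{\mathfrak{p}}^{*,L}-\mathfrak{p}^{*,L}} \leq \sup_{x\in\mathcal{X}}\babs{\sum_{\ell=0}^{L}\left(\frac{1}{N_\ell}\sum_{k=1}^{N_\ell}g(x,\bar\zeta_\ell^k)-\expectation{[g(x,\bar\zeta_\ell)]}\right)},
\]
then bound the supremum of the sum by the sum of the per-level suprema and apply Minkowski's inequality in $L^2$. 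This reduces the problem to estimating each term $\bnorm{\sup_{x}\abs{\tfrac{1}{N_\ell}\sum_k g(x,\bar\zeta_\ell^k)-\expectation{[g(x,\bar\zeta_\ell)]}}}_2$ in isolation, which is exactly why the final bound is a sum over levels rather than a square root of a sum (independence of the level estimators is not needed here).

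The heart of the argument is the per-level estimate. For the centred class $\mathfrak{F}_\ell:=\{g(x,\cdot)-\expectation{[g(x,\cdot)]}:x\in\mathcal{X}\}$ I would set $\bar g_\ell:=\sup_x\abs{g(x,\bar\zeta_\ell)-\expectation{[g(x,\bar\zeta_\ell)]}}$ and note that $\norm{\bar g_\ell}_2\leq 2\sqrt{V_\ell}\leq 2\sqrt{c_2}\,h_\ell^{\beta/2}$ by Assumption \ref{assumption_mlmc_var}. Applying Lemma 5 with $p=2$ to pass from the second moment to the first, then Lemma 4 to control the first moment by the bracketing integral, produces a bound of the form $\tfrac{C}{\sqrt{N_\ell}}\norm{\bar g_\ell}_2\int_0^1\sqrt{1+\log Q_{[]}(v\norm{\bar g_\ell}_2,\mathfrak{F}_\ell,\norm{\cdot}_2)}\,dv$. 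The decisive departure from the unbiased case is that this integral is \emph{not} a fixed constant: feeding Lemma 3 (with $g$ having Lipschitz constant at most $2L_f$) and the covering-number bound of Lemma 1 into the integrand gives $\log Q_{[]}\lesssim d\log\!\big(C'/(v\,h_\ell^{\beta/2})\big)$, so integrating leaves a residual factor and the per-level bound takes the shape $\tfrac{C}{\sqrt{N_\ell}}\,h_\ell^{\beta/2}\big(1+\sqrt{c\log(d/h_\ell^{\beta/2})}\big)$.

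The final step is precisely what the preceding technical lemma was designed for: invoking it with $\gamma=a$ and $x=h_\ell^{\beta/2}$ absorbs the logarithmic factor into a slightly reduced power, giving $\tfrac{\mathfrak{K}}{\sqrt{N_\ell}}(h_\ell^{\beta/2})^{1/(1+a)}=\tfrac{\mathfrak{K}}{\sqrt{N_\ell}}h_\ell^{\bar\beta/2}$ with $\bar\beta=\beta/(1+a)$; summing over $\ell=0,\dots,L$ and collecting constants into $\bar{\mathfrak{c}}$ delivers the second term. I expect the main obstacle to lie in this per-level estimate: one must recognise that, unlike in the standard paradigm, the bracketing integral carries an $h_\ell$-dependence through the scaling $\norm{\bar g_\ell}_2\approx h_\ell^{\beta/2}$, and verify that the $L^2$-to-$L^1$ passage of Lemma 5 contributes only the lower-order remainder $\norm{\bar g_\ell}_2/\sqrt{N_\ell}=h_\ell^{\beta/2}/\sqrt{N_\ell}$, which is dominated by $h_\ell^{\bar\beta/2}/\sqrt{N_\ell}$ since $h_\ell\leq 1$. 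The free parameter $a$ and the hypothesis $L\geq 2$ serve exactly to keep $h_\ell^{\beta/2}$ inside the admissible domain $(0,d]$ of the technical lemma, so that the whole estimate collapses cleanly onto the stated form.
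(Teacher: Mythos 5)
Your proposal is correct and follows essentially the same route as the paper: the same triangle-inequality split into $\|\hat{\mathfrak{p}}^{*,L}-\mathfrak{p}^{*,L}\|_2+\|\mathfrak{p}^{*,L}-\mathfrak{p}^{*}\|_2$, the same reduction of the statistical term to a sum of per-level suprema via Minkowski, the same use of Lemmas 3--5 to reach the bracketing integral, and the same invocation of the technical lemma (with $\gamma=a$) to absorb the $h_\ell$-dependent logarithm into the reduced exponent $\bar\beta=\beta/(1+a)$. Your explicit remark that the $L^2$-to-$L^1$ remainder from Lemma 5 scales as $h_\ell^{\beta/2}/\sqrt{N_\ell}$ and is dominated by $h_\ell^{\bar\beta/2}/\sqrt{N_\ell}$ is a point the paper leaves implicit, but the argument is otherwise the same.
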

\begin{proof}
To begin with, observe that from triangle inequality, we have,
\begin{align*}
\bnorm{ \hat{\mathfrak{p}}^{*,L}-\mathfrak{p}^{*}}_2 \leq \bnorm{\hat{\mathfrak{p}}^{*,L} - \mathfrak{p}^{*,L}}_2 + \bnorm{\mathfrak{p}^{*,L} - \mathfrak{p}^*}_2. 
\end{align*}
As before, we have,
\[\displaystyle{\bnorm{\mathfrak{p}^{*,L} - \mathfrak{p}^{*}}_2\leq \left(\expectation{\left[\sup_{x\in\mathcal{X}}|\expectation{[f(x,\zeta_L)]}-\expectation{[f(x,\zeta)]}|^2\right]}\right)^{1/2}}\leq c_1h_L^{\alpha}.\]

Now let us analyse $\displaystyle{\bnorm{\hat{\mathfrak{p}}^{*,L} - \mathfrak{p}^{*,L}}_2}$. Observe that,
\begin{align*}
    \bnorm{\hat{\mathfrak{p}}^{*,L} - \mathfrak{p}^{*,L}}_2 &= \bnorm{\min_{x \in \mathcal{X}}\left( \sum_{\ell = 0}^L \frac{1}{N_{\ell}}\sum_{k = 1}^{N_{\ell}}g(x,\bar{\zeta}_{\ell}^k)\right) - \min_{x\in \mathcal{}X}\sum_{\ell =0}^L \expectation{[g(x,\bar{\zeta}_{\ell})]}}_2\\
    &\leq \bnorm{\sup_{x\in \mathcal{X}}\left(\babs{\sum_{\ell = 0}^L \frac{1}{N_{\ell}}\sum_{k = 1}^{N_{\ell}}g(x,\bar{\zeta}_{\ell}^k) - \sum_{\ell =0}^L \expectation{[g(x,\bar{\zeta}_{\ell})]}}\right)}_2\\
    &\leq \bnorm{\sup_{x\in \mathcal{X}}\left(\sum_{\ell = 0}^L \babs{\frac{1}{N_{\ell}}\sum_{k = 1}^{N_{\ell}}(g(x,\bar{\zeta}_{\ell}^k) -\expectation{[g(x,\bar{\zeta}_{\ell})]})}\right)}_2\\
    &\leq \bnorm{\sum_{\ell = 0}^L\sup_{x\in \mathcal{X}}\babs{\frac{1}{N_{\ell}}\sum_{k = 1}^{N_{\ell}}(g(x,\bar{\zeta}_{\ell}^k) -\expectation{[g(x,\bar{\zeta}_{\ell})]})}}_2\\
    &\leq \sum_{\ell = 0}^L\bnorm{\sup_{x\in \mathcal{X}}\babs{\frac{1}{N_{\ell}}\sum_{k = 1}^{N_{\ell}}(g(x,\bar{\zeta}_{\ell}^k) -\expectation{[g(x,\bar{\zeta}_{\ell})]})}}_2.
\end{align*}
In order to study $\displaystyle{\bnorm{\sup_{x\in \mathcal{X}}\babs{\frac{1}{N_{\ell}}\sum_{k = 1}^{N_{\ell}}(g(x,\bar{\zeta}_{\ell}^k) -\expectation{[g(x,\bar{\zeta}_{\ell})]})}}_2}$, we extract tools from empirical process theory. For a given $\ell\geq0$, if we define,
\[
\mathbb{F}^{\ell}_{N_{\ell}}(\cdot) = \sqrt{N_{\ell}}\left(\frac{1}{N_{\ell}}\sum_{k = 1}^{N_{\ell}}(g(\cdot,\bar{\zeta}^{\ell}_k) - \expectation{[g(\cdot,\bar{\zeta}^{\ell})]})\right)
\]
then $F_{N_{\ell}}^{\ell}$ is an empirical process. Further, since $f(\cdot,\zeta_{\ell})$ is Lipschitz continuous, we conclude that is $g(\cdot,\bar{\zeta^{\ell}})$ is also Lipschitz. Thereby applying Lemma 3,4 and 5 and under the assumption \ref{assumption_mlmc_var}, we have 
\begin{align*}
   \sqrt{N_{\ell}} \bnorm{\sup_{x\in \mathcal{X}}\babs{\frac{1}{N_{\ell}}\sum_{k = 1}^{N_{\ell}}(g(x,\bar{\zeta}_{\ell}^k) -\expectation{[g(x,\bar{\zeta}_{\ell})]})}}_2 \leq C\norm{\bar{g}(x,\bar{\zeta}^{\ell})}_2\int_{0}^1\sqrt{1 + \log(Q_{[ ]}(v\norm{\bar{g}(x,\bar{\zeta}^{\ell})}_2,\mathfrak{F},\norm{\cdot}_2))}dv
\end{align*}
where $\bar{g}(x,\bar{\zeta}^{\ell}) = \sup_{x\in\mathcal{X}}|g(x,\bar{\zeta_{\ell}})-\expectation{[g(x,\bar{\zeta_{\ell}})]}|$. Referring to the calculations in \cite{lam2018bounding} 
 (Proposition EC.3), we have
\begin{align*}
    &C\norm{\bar{g}(x,\bar{\zeta}^{\ell})}_2\int_{0}^1\sqrt{1 + \log(Q_{[ ]}(v\norm{\bar{g}(x,\bar{\zeta}^{\ell})}_2,\mathfrak{F},\norm{\cdot}_2))}dv \leq\\ &C'\left(\norm{\bar{g}(x,\bar{\zeta}^{\ell})}_2 + \sqrt{d\log\left(\max\left\{ 3,\frac{12\mathcal{D}_{\mathcal{X}}L_{f}}{\norm{\bar{g}(x,\bar{\zeta}^{\ell})}_2}\right\}\right)}\min\left(4\mathcal{D}_{\mathcal{X}}L_{f},\norm{\bar{g}(x,\bar{\zeta}^{\ell})}_2\right)\right).
\end{align*}
Now if, $\displaystyle{\frac{12\mathcal{D}_{\mathcal{X}}L_{f}}{\norm{\bar{g}(x,\bar{\zeta}^{\ell})}_2} \leq 3}$,then we have, 
\begin{align*}
    C'\left(\norm{\bar{g}(x,\bar{\zeta}^{\ell})}_2 + \sqrt{d\log\left(\max\left\{ 3,\frac{12\mathcal{D}_{\mathcal{X}}L_{f}}{\norm{\bar{g}(x,\bar{\zeta}^{\ell})}_2}\right\}\right)}\min\left(4\mathcal{D}_{\mathcal{X}}L_{f},\norm{\bar{g}(x,\bar{\zeta}^{\ell})}_2\right)\right) \\\leq C'\norm{\bar{g}(x,\bar{\zeta}^{\ell})}_2(1+\sqrt{d\log(3)})
\end{align*}
Otherwise, we have
\begin{align*}
     C'\left(\norm{\bar{g}(x,\bar{\zeta}^{\ell})}_2 + \sqrt{d\log\left(\max\left\{ 3,\frac{12\mathcal{D}_{\mathcal{X}}L_{f}}{\norm{\bar{g}(x,\bar{\zeta}^{\ell})}_2}\right\}\right)}\min\left(4\mathcal{D}_{\mathcal{X}}L_{f},\norm{\bar{g}(x,\bar{\zeta}^{\ell})}_2\right)\right)\\
     \leq C'\norm{\bar{g}(x,\bar{\zeta}^{\ell})}_2 \left(1+\sqrt{d\log\left(\frac{12\mathcal{D}_{\mathcal{X}}L_{f}}{\norm{\bar{g}(x,\bar{\zeta}^{\ell})}_2}\right)}\right)\\
     \leq \mathfrak{c}_{\ell}(\norm{\bar{g}(x,\bar{\zeta}^{\ell})}_2)^{\frac{1}{1+a}}
\end{align*}
where the last inequality is the consequence of Lemma 7 for $0<a<\infty$ and for some constant $\mathfrak{c}_{\ell}$. Observe that as a consequence of assumption \ref{assumption_mlmc_var}, we have,
\begin{align*}
    \norm{\sup_{x\in\mathcal{X}}|g(x,\bar{\zeta_{\ell}})-\expectation{[g(x,\bar{\zeta_{\ell}})]}|}_2 &\leq \norm{\sup_{x \in \mathcal{X}}|f(x,\zeta_{\ell}) - f(x,\zeta_{\ell-1})|}_2 + \norm{\sup_{x \in \mathcal{X}}|\expectation{[f(x,\zeta_{\ell}) - f(x,\zeta_{\ell-1})]}|}_2\\
    &\leq 2c_2h_{\ell}^{\beta/2}
\end{align*}
Further, letting $\displaystyle{\bar{\mathfrak{c}} = \max_{0\leq\ell\leq L}\mathfrak{c}_{\ell}}$ and collating everything together we have
\[
\bnorm{\mathfrak{p}^{*,L} - \hat{\mathfrak{p}}^{*,L}}_2\leq 2c_2\bar{\mathfrak{c}}\sum_{\ell = 0}^{L}\frac{h_{\ell}^{\bar{\beta}/2}}{\sqrt{N_{\ell}}}
\]
where $\bar{\beta} = \beta\frac{1}{1+a}$. Putting everything together, we get the desired result.
\end{proof}
Based on the above formulation of the root-mean-squared error, we give below the result that shows the computational cost associated with the multilevel optimal gap estimator. 
\begin{corollary}
    Suppose assumptions \ref{lip_assumption},\ref{assumption_mlmc_bias} and \ref{assumption_mlmc_var} holds. Then, for any $\epsilon<\frac{1}{e}$ the computational complexity required for $\displaystyle{\norm{\hat{\mathfrak{p}}^{*,L} - \mathfrak{p}^*}_2}\leq \epsilon$ is
    \begin{equation}
    \label{eq:16}
        \mathcal{C}_{mlmc}^{\mathfrak{p}} = \begin{cases}
            \mathcal{O}(\epsilon^{-2}),&\text{ for } \bar{\beta} >1 .\\
            \mathcal{O}\left(\epsilon^{-2}\log(\epsilon^{-1})\right), &\text{ for } \bar{\beta} = 1.\\
           \mathcal{O}\left(\epsilon^{-\left(2+\frac{1-\bar{\beta}}{\alpha}\right)}\right), &\text{ for } \bar{\beta} < 1.\\
            
        \end{cases}
    \end{equation}
\end{corollary}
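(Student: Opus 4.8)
The plan is to mirror the proof of Theorem \ref{theorem:compuational_complexity_mlmc_saa}, now driving the RMSE bound of the preceding theorem,
\[
\bnorm{\hat{\mathfrak{p}}^{*,L}-\mathfrak{p}^{*}}_2 \leq c_1 h_{L}^{\alpha} + 2c_2\bar{\mathfrak{c}}\sum_{\ell = 0}^L\frac{h_{\ell}^{\bar{\beta}/2}}{\sqrt{N_{\ell}}},
\]
below the prescribed tolerance $\epsilon$. I would fix a geometric refinement $h_{\ell} = h_0 m^{-\ell}$ and split the budget, requiring the bias term $c_1 h_L^{\alpha}\leq \epsilon/2$ and the statistical term $2c_2\bar{\mathfrak{c}}\sum_{\ell} h_{\ell}^{\bar{\beta}/2}/\sqrt{N_{\ell}}\leq \epsilon/2$ separately, so that their sum is at most $\epsilon$. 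The free parameter $a$ (with $\bar{\beta}=\beta/(1+a)$) can be taken small, pushing $\bar{\beta}$ arbitrarily close to $\beta$, so the three regimes of \eqref{eq:16} are effectively governed by the position of $\beta$ relative to $1$.

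First I would dispose of the bias. Choosing $L = \left\lceil \frac{\log(2c_1 h_0^{\alpha}\epsilon^{-1})}{\alpha\log m}\right\rceil$ exactly as in Theorem \ref{theorem:compuational_complexity_mlmc_saa} forces $h_L^{\alpha}\asymp \epsilon$, hence $c_1 h_L^{\alpha}\leq \epsilon/2$, and simultaneously fixes the number of levels at $L+1 = \mathcal{O}(\log(\epsilon^{-1}))$. This step is routine and reuses the bookkeeping already carried out in the earlier complexity theorem.

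For the statistical term I would treat $V_{\ell}\asymp h_{\ell}^{\bar{\beta}}$ as the level-wise variance proxy coming from Assumption \ref{assumption_mlmc_var} and $C_{\ell}\asymp h_{\ell}^{-1}$ as the per-sample cost, so that the total cost is $\sum_{\ell} N_{\ell}/h_{\ell}$ as in \eqref{saa:eq15}. The task is to minimise $\sum_{\ell} N_{\ell}/h_{\ell}$ subject to $\sum_{\ell} h_{\ell}^{\bar{\beta}/2}/\sqrt{N_{\ell}}\leq c\epsilon$. I would distribute the budget across levels with weights $w_{\ell}$, choosing $N_{\ell}$ so that $h_{\ell}^{\bar{\beta}/2}/\sqrt{N_{\ell}}\leq c\epsilon\,w_{\ell}$ with $\sum_{\ell} w_{\ell}\leq 1$; this reduces the cost, up to the factor $\epsilon^{-2}$ and constants, to a geometric sum in $h_{\ell}^{(\bar{\beta}-1)/2}$, whose growth I would then read off case by case. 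For $\bar{\beta}>1$ the exponent is positive, the sum is $\mathcal{O}(1)$, and the cost is $\mathcal{O}(\epsilon^{-2})$; for $\bar{\beta}<1$ the sum is dominated by the finest level, $h_L^{(\bar{\beta}-1)/\alpha}\asymp \epsilon^{-(1-\bar{\beta})/\alpha}$, giving $\mathcal{O}(\epsilon^{-(2+(1-\bar{\beta})/\alpha)})$; and the borderline $\bar{\beta}=1$ produces a logarithmic factor from $L+1$.

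The main obstacle I anticipate is precisely the boundary case $\bar{\beta}=1$, compounded by the fact that the statistical error enters the RMSE bound as a \emph{sum} of per-level $L^2$ norms, i.e. an $\ell^1$ aggregation over levels, rather than as the square root of a sum of variances as in the classical Giles analysis. This weaker aggregation makes the optimal sample allocation less standard and forces one to choose the level weights $w_{\ell}$ so that the budget constraint is met while the cost sum remains geometric; at $\bar{\beta}=1$ the geometric sum degenerates, and extracting the sharp power of $\log(\epsilon^{-1})$ claimed in \eqref{eq:16}, together with verifying that the integer rounding $\lceil\cdot\rceil$ in each $N_{\ell}$ contributes only lower-order terms, is where the delicate estimation must be done to land exactly on the stated rates.
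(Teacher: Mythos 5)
The paper never writes this proof out---it is ``skipped'' with a pointer to the standard multilevel complexity argument of Giles and Pag\`es---and your plan is exactly that argument: pick $L=\lceil\log(2c_1h_0^{\alpha}\epsilon^{-1})/(\alpha\log m)\rceil$ so that $c_1h_L^{\alpha}\leq\epsilon/2$ and $L+1=\mathcal{O}(\log(\epsilon^{-1}))$, then allocate $N_{\ell}$ to push the statistical term below $\epsilon/2$ at minimal cost $\sum_{\ell}N_{\ell}/h_{\ell}$. Your observation that the statistical error enters as an $\ell^1$ sum $\sum_{\ell}h_{\ell}^{\bar{\beta}/2}/\sqrt{N_{\ell}}$ rather than the classical $(\sum_{\ell}V_{\ell}/N_{\ell})^{1/2}$ is correct and important: the Lagrange/weighting step you describe then yields $N_{\ell}\propto\epsilon^{-2}h_{\ell}^{(\bar{\beta}+2)/3}\big(\sum_{j}h_{j}^{(\bar{\beta}-1)/3}\big)^{2}$ (the very allocation the paper records in Section 5), with total cost of order $\epsilon^{-2}\big(\sum_{\ell}h_{\ell}^{(\bar{\beta}-1)/3}\big)^{3}$ plus a rounding contribution $\sum_{\ell}h_{\ell}^{-1}=\mathcal{O}(\epsilon^{-1/\alpha})$, which is dominated only if $\alpha\geq 1/2$ (implicit here, as in Theorem \ref{theorem:compuational_complexity_mlmc_saa}). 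For $\bar{\beta}>1$ the inner sum is a convergent geometric series, and for $\bar{\beta}<1$ it is dominated by $h_L^{(\bar{\beta}-1)/3}\asymp\epsilon^{-(1-\bar{\beta})/(3\alpha)}$, so those two cases land exactly on the stated rates.

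The genuine gap is the boundary case $\bar{\beta}=1$, and it is more than the ``delicate estimation'' you anticipate. There the inner sum equals $L+1\asymp\log(\epsilon^{-1})$, so your allocation gives cost $\epsilon^{-2}(\log(\epsilon^{-1}))^{3}$ rather than the single logarithm claimed in \eqref{eq:16}; and no choice of level weights $w_{\ell}$ can improve on this within your scheme, because H\"older's inequality gives $\sum_{\ell}(V_{\ell}C_{\ell})^{1/3}\leq\big(\sum_{\ell}\sqrt{V_{\ell}/N_{\ell}}\big)^{2/3}\big(\sum_{\ell}N_{\ell}C_{\ell}\big)^{1/3}$, hence $\sum_{\ell}N_{\ell}C_{\ell}\geq c\,\epsilon^{-2}\big(\sum_{\ell}(V_{\ell}C_{\ell})^{1/3}\big)^{3}$ for \emph{every} allocation meeting the $\ell^1$ error budget. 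Even the classical $(\log(\epsilon^{-1}))^{2}$ of Giles requires aggregating the per-level errors in $\ell^{2}$, i.e.\ summing variances across independent levels, which the RMSE bound you are starting from (obtained by taking $\sup_{x}$ level by level and then the triangle inequality in $L^{2}$) does not supply. So to complete the proof at $\bar{\beta}=1$ you must either strengthen the statistical error bound before the allocation step or accept additional logarithmic factors; as written, your route cannot reach the exponent of $\log(\epsilon^{-1})$ stated in the corollary.
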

The proof of the above result follows the line of argument similar to the one observed by in \cite{giles2008multilevel,pages2018numerical} and is therefore skipped. 
\subsection{Optimal Gap Estimator}
Another aspect of SAA that is paramount among practitioners is the Optimal Gap estimator. As the name suggests, the primary aim of this estimator is to assess the quality of a candidate solution of the optimization problem \eqref{saa:eq1}. Mathematically, let $\hat{x}$ be a candidate solution. The quality of this solution is assessed using the optimality gap defined as
\begin{equation}
    \mathfrak{G}(\hat{x}):= F(\hat{x}) - \mathfrak{p}^*
\end{equation}
where $\displaystyle{\mathfrak{p}^* := \min_{x \in X}F(x)}$. The Monte Carlo approximation of $\mathfrak{G}(\hat{x})$ is given as,
\begin{equation}
    \hat{\mathfrak{G}}_N(\hat{x}): = \frac{1}{N}\sum_{k = 1}^Nf(\hat{x},\zeta_k) - \min_{x\in\mathcal{X}}\frac{1}{N}\sum_{k = 1}^Nf(x,\zeta_k),
\end{equation}
where $\{\zeta_k\}_{1\leq k \leq N}$ are i.i.d realisation of the random variable $\zeta$ that is common in both the terms in the above equation. In an unbiased realisation of the samples, the underlying mechanism is to statistically estimate the upper and lower bound of $\hat{\mathfrak{G}}_N(\hat{x})$ by performing $M$ independent estimation of $\hat{\mathfrak{G}}_N(\hat{x})$ and determining the one-sided confidence interval. This approach is well-documented and is readily used in various practical applications. Interested readers may refer to \cite{kleywegt2002sample, Mak1999Feb, Homem-de-Mello2014Jan} for a thorough discussion of this procedure. However, instead of estimating an upper and lower bound of the optimal value based on the optimal gap estimator, we undertake an RMSE analysis formulating the RMSE bound as a function of the bias parameter $h$ and the number of samples $N$. The following two propositions provide the RMSE bound in the context of Monte Carlo and MLMC paradigms, respectively. 

Let $\hat{x}$ be a candidate solution obtained by solving an SAA problem. Further let, $\displaystyle{\mathfrak{p}^{*,h} = \min_{x \in X}F^h(x)}$ and define,
$$
\mathfrak{G}^h(\hat{x}) := F_h(\hat{x}) - \mathfrak{p}^{*,h}
$$
$$\mathfrak{G}^h_N(\hat{x}) := \frac{1}{N}\sum_{k = 1}^N f(\hat{x},\zeta_h^k) - \min_{x \in \mathcal{X}}\frac{1}{N}\sum_{k = 1}^N f(x,\zeta_h^k)$$
The following result gives a root-mean-squared error formulation for the estimator defined above.
\begin{proposition}[Monte Carlo SAA]
\label{proposition:optimal_gap_mc}
    Suppose assumptions \ref{lip_assumption},\ref{bias_assmpution} and \ref{variance_assumption} holds. The for $N\in \mathbb{N}$ and $\hat{x} \in \mathcal{X}$,
\[
\norm{\mathfrak{G}^h_N(\hat{x}) - \mathfrak{G}(\hat{x})}_2 \leq 2c_1h^{\alpha} + \mathfrak{c}_2 \frac{\sigma}{\sqrt{N}} 
\]
\end{proposition}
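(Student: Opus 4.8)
The plan is to reduce the estimator error to two pieces that can each be controlled by already-established tools: a \emph{candidate-point} term measuring how well the sample average at the fixed point $\hat{x}$ approximates $F(\hat{x})$, and an \emph{optimal-value} term measuring how well the sample minimum approximates $\mathfrak{p}^*$. Writing $\mathfrak{p}^{*,h} = \min_{x\in\mathcal{X}} F_h(x)$ and expanding the definitions gives
\[
\mathfrak{G}^h_N(\hat{x}) - \mathfrak{G}(\hat{x}) = \left[\frac{1}{N}\sum_{k=1}^N f(\hat{x},\zeta_h^k) - F(\hat{x})\right] - \left[\min_{x\in\mathcal{X}}\frac{1}{N}\sum_{k=1}^N f(x,\zeta_h^k) - \mathfrak{p}^*\right].
\]
First I would apply the triangle inequality for $\norm{\cdot}_2$ to split the $L^2$ norm of this difference into the $L^2$ norms of the two bracketed terms.

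The second bracket is precisely the object bounded in the Monte Carlo optimal-value RMSE theorem established just above, so it contributes at most $c_1 h^{\alpha} + \mathfrak{c}_3 \sigma/\sqrt{N}$ with no further work. For the first bracket I would insert the intermediate quantity $F_h(\hat{x}) = \expectation{[f(\hat{x},\zeta_h)]}$ and split once more into a bias part $F_h(\hat{x}) - F(\hat{x})$ and a statistical part $\tfrac{1}{N}\sum_{k} f(\hat{x},\zeta_h^k) - F_h(\hat{x})$. The bias part is bounded by $c_1 h^{\alpha}$ directly from Assumption \ref{bias_assmpution}, since the supremum over $x$ appearing there dominates the value at the fixed $\hat{x}$.

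The key simplification is that the statistical part is evaluated at the single fixed point $\hat{x}$, so it needs no empirical-process machinery: it is an average of $N$ i.i.d.\ centred copies of $f(\hat{x},\zeta_h)$, whence $\norm{\tfrac{1}{N}\sum_{k} f(\hat{x},\zeta_h^k) - F_h(\hat{x})}_2 = \sqrt{\mathbb{V}[f(\hat{x},\zeta_h)]}/\sqrt{N} \le \sigma/\sqrt{N}$, using that Assumption \ref{variance_assumption} bounds $\expectation{[f(\hat{x},\zeta_h)^2]}$ by $\sigma^2$. Collecting the three contributions yields $2c_1 h^{\alpha} + (1+\mathfrak{c}_3)\sigma/\sqrt{N}$, and reassigning $\mathfrak{c}_2 := 1+\mathfrak{c}_3$ gives the claim. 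The only point demanding care is bookkeeping of the two separate sources of the bias term so that the coefficient on $h^{\alpha}$ emerges as exactly $2c_1$; the statistical estimates themselves are routine.
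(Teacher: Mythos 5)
Your proposal is correct and follows essentially the same route as the paper: both arguments reduce the error to the same four elementary pieces (a fixed-point bias and statistical term at $\hat{x}$, plus the optimal-value bias and statistical term already controlled in the RMSE theorem), differing only in the order in which the triangle inequality groups them and in that you bound the fixed-point statistical term by a plain i.i.d.\ variance computation where the paper invokes an auxiliary moment lemma. The bookkeeping giving the coefficient $2c_1$ on $h^{\alpha}$ matches the paper's.
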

\begin{proof}
   See Appendix \ref{appendix:B}.
\end{proof}
In the multilevel paradigm, the optimal gap estimator is defined as
\begin{align*}
\mathfrak{G}^L(\hat{x}) &= \expectation{[f(\hat{x},\zeta^L)]} - \min_{x \in \mathcal{X}}\expectation{[f(x,\zeta^L)]}\\
& = \sum_{\ell = 0}^L \expectation{[f(\hat{x},\zeta^{\ell}) - f(\hat{x},\zeta^{\ell-1})]} - \min_{x \in \mathcal{X}}\sum_{\ell = 0}^L\expectation{[f(x,\zeta^{\ell}) - f(x,\zeta^{\ell-1})]}
\end{align*}
with the Monte Carlo approximation being defined as,
\[
\hat{\mathfrak{G}}^L(\hat{x}) = \sum_{\ell = 0}^L \frac{1}{N_{\ell}}\sum_{k = 1}^{N_{\ell}}(f(\hat{x},\zeta_{\ell}^k)-f(\hat{x},\zeta_{\ell-1}^k)) - \min_{x \in \mathcal{X}}\left( \sum_{\ell = 0}^L \frac{1}{N_{\ell}}\sum_{k = 1}^{N_{\ell}}(f(x,\zeta_{\ell}^k)-f(x,\zeta_{\ell-1}^k))\right)  
\]
For the sake of notational convenience let $g(x,\bar{\zeta}_{\ell}):= f(x,\zeta_{\ell}) - f(x,\zeta_{\ell-1})$, $\mathfrak{p}^{*,L}: = \min_{x\in \mathcal{}X}\sum_{\ell =0}^L \expectation{[g(x,\bar{\zeta}_{\ell})]}$ and let $\hat{\mathfrak{p}}^{*,L}$ be its monte carlo approximation. Then,
\[
\hat{\mathfrak{G}}^L(\hat{x}) = \sum_{\ell = 0}^L \frac{1}{N_{\ell}}\sum_{k = 1}^{N_{\ell}}g(\hat{x},\bar{\zeta}^k_{\ell}) - \min_{x \in \mathcal{X}}\sum_{\ell = 0}^L \frac{1}{N_{\ell}}\sum_{k = 1}^{N_{\ell}}g(x,\bar{\zeta}^k_{\ell})
\]
\begin{proposition}[MLMC-SAA]
\label{proposition:optimal_gap_mlmc}
    Suppose assumptions \ref{lip_assumption},\ref{assumption_mlmc_bias} and \ref{assumption_mlmc_var} holds. Then for any $0<a<\infty$, $L\geq2$ and any $\hat{x}\in\mathcal{X}$,
    \[
    \norm{\hat{\mathfrak{G}}^L(\hat{x}) - \mathfrak{G}(\hat{x})}_2 \leq  2c_1h_{L}^{\alpha} +\mathfrak{c}_3 \sum_{\ell = 0}^L\frac{ h_{\ell}^{\bar{\beta}/2}}{\sqrt{N_{\ell}}}
    \]
    where $\bar{\beta} = \beta\frac{1}{1+a}$.
\end{proposition}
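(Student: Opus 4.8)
The plan is to reduce the bound on $\hat{\mathfrak{G}}^L(\hat{x})$ to two pieces that are essentially already controlled: the RMSE of the multilevel estimator of $F(\hat{x})$ at the \emph{fixed} candidate point $\hat{x}$, and the RMSE of the multilevel SAA optimal value $\hat{\mathfrak{p}}^{*,L}$. Writing $\hat{F}_L(\hat{x}) := \sum_{\ell = 0}^L \frac{1}{N_{\ell}}\sum_{k=1}^{N_{\ell}} g(\hat{x},\bar{\zeta}_{\ell}^k)$, the estimator is $\hat{\mathfrak{G}}^L(\hat{x}) = \hat{F}_L(\hat{x}) - \hat{\mathfrak{p}}^{*,L}$ while the true gap is $\mathfrak{G}(\hat{x}) = F(\hat{x}) - \mathfrak{p}^*$, so that
\[
\hat{\mathfrak{G}}^L(\hat{x}) - \mathfrak{G}(\hat{x}) = \big(\hat{F}_L(\hat{x}) - F(\hat{x})\big) - \big(\hat{\mathfrak{p}}^{*,L} - \mathfrak{p}^*\big).
\]
A single application of the triangle inequality in $\norm{\cdot}_2$ then splits the target into $\norm{\hat{F}_L(\hat{x}) - F(\hat{x})}_2 + \norm{\hat{\mathfrak{p}}^{*,L} - \mathfrak{p}^*}_2$.

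For the second summand I would simply invoke the preceding multilevel RMSE theorem, which already yields $\norm{\hat{\mathfrak{p}}^{*,L} - \mathfrak{p}^*}_2 \leq c_1 h_L^{\alpha} + 2c_2\bar{\mathfrak{c}}\sum_{\ell=0}^L h_{\ell}^{\bar{\beta}/2}/\sqrt{N_{\ell}}$. It is here, and only here, that the reduced exponent $\bar{\beta} = \beta/(1+a)$ genuinely enters, inherited from the empirical-process estimates (Lemmas 3--5) and the logarithmic-to-power trade-off established at the start of this section.

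For the fixed-point summand I would split $\hat{F}_L(\hat{x}) - F(\hat{x})$ into a bias part and a statistical part. The bias part $\expectation{[f(\hat{x},\zeta_L)]} - F(\hat{x})$ is deterministic, and since the telescoping identity gives $\sum_{\ell=0}^L \expectation{[g(\hat{x},\bar{\zeta}_{\ell})]} = \expectation{[f(\hat{x},\zeta_L)]}$, it is dominated by the supremum in Assumption \ref{assumption_mlmc_bias}, contributing $c_1 h_L^{\alpha}$. For the statistical part $\sum_{\ell=0}^L \frac{1}{N_{\ell}}\sum_{k=1}^{N_{\ell}}\big(g(\hat{x},\bar{\zeta}_{\ell}^k) - \expectation{[g(\hat{x},\bar{\zeta}_{\ell})]}\big)$ there is no supremum over $x$, so no empirical-process machinery is required: a triangle inequality over levels together with the per-level identity $\norm{\frac{1}{N_{\ell}}\sum_{k}(g(\hat{x},\bar{\zeta}_{\ell}^k) - \expectation{[g(\hat{x},\bar{\zeta}_{\ell})]})}_2 = \sqrt{\mathbb{V}[g(\hat{x},\bar{\zeta}_{\ell})]/N_{\ell}}$ and the pointwise variance bound $\mathbb{V}[g(\hat{x},\bar{\zeta}_{\ell})] \leq \expectation{[g(\hat{x},\bar{\zeta}_{\ell})^2]} \leq V_{\ell} \leq c_2 h_{\ell}^{\beta}$ from Assumption \ref{assumption_mlmc_var} yields $\sqrt{c_2}\sum_{\ell=0}^L h_{\ell}^{\beta/2}/\sqrt{N_{\ell}}$.

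Adding the two summands produces the bias term $2c_1 h_L^{\alpha}$ together with two statistical sums, one in $h_{\ell}^{\beta/2}$ and one in $h_{\ell}^{\bar{\beta}/2}$. The step requiring care is the final reconciliation under a single exponent: since $\bar{\beta} \leq \beta$ and $h_{\ell} \leq 1$, one has $h_{\ell}^{\beta/2} \leq h_{\ell}^{\bar{\beta}/2}$ termwise, so the $\beta$-sum is dominated by the $\bar{\beta}$-sum and both constants collapse into a single $\mathfrak{c}_3$. The only genuine obstacle is bookkeeping --- verifying that the candidate point $\hat{x}$ is matched to the MLMC estimator of $F(\hat{x})$ so that the telescoping target is exactly what Assumption \ref{assumption_mlmc_bias} controls, and that $V_{\ell}$, with its built-in supremum over $x$, legitimately dominates the pointwise second moment at $\hat{x}$. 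Beyond that, the argument follows the template of Proposition \ref{proposition:optimal_gap_mc} and the preceding multilevel RMSE theorem.
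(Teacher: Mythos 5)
Your proposal is correct and follows essentially the same route as the paper's proof: both decompose the error into the bias of the level-$L$ gap (giving $2c_1h_L^{\alpha}$), a fixed-point statistical term controlled level-by-level via Assumption \ref{assumption_mlmc_var} (giving a $\beta$-sum), and the optimal-value error from the preceding multilevel RMSE theorem (giving the $\bar{\beta}$-sum), then absorb the $\beta$-sum into the $\bar{\beta}$-sum since $\bar{\beta}\leq\beta$. The only cosmetic differences are that the paper routes the triangle inequality through the intermediate quantity $\mathfrak{G}^L(\hat{x})$ rather than through $\hat{F}_L(\hat{x})-F(\hat{x})$, and cites an external moment lemma where you use the exact i.i.d.\ variance identity; neither changes the substance.
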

\begin{proof}
See Appendix \ref{appendix:B}.
\end{proof}

\section{Numerical Illustration}
\label{section:numerical_illustration}
In this section, we undertake numerical experimentation to illustrate the impact of biased approximation of $\zeta$ on sample average approximation. To this end, we consider two minimization problems in the coherent risk measure paradigms,\textit{i.e.}, Conditional Value at Risk or CVaR.  Recall that,
\begin{align*}
    \text{CVaR}_{\theta}(\zeta) = \min_{x \in \mathbb{R}}\left\{F(x) = \expectation{\left[x + \frac{1}{1-\theta}(\zeta - x)_{+}\right]}\right\}
\end{align*}
where $\theta$ is the confidence level. For our first example, we consider a portfolio consisting of a single put option where the asset price is driven by a geometric Brownian Motion(gBm) given as,
\begin{equation}
    dX_t = rX_tdt + \sigma X_tdW_t
\end{equation}
where $r$  and $\sigma$ denote the risk-free rate of return and the volatility, and $W_t$ denotes the standard Brownian Motion. As for the second example, we look into the scenario simulation paradigm developed by in \cite{gordy2010nested} leading to a nested simulation framework. Before we dwell on the numerical simulation, we make certain observations on the function ${f(x,\zeta):=x + \frac{1}{1-\theta}(\zeta - x)_{+}} $, and discuss the underlying algorithm in order to undergo numerical experimentation. To begin with, let $\zeta_{\ell}$ and $\zeta_{\ell'}$ be two approximation of the random variable $\zeta$, then,
\begin{equation}
    \abs{f(x,\zeta_{\ell}) - f(x,\zeta_{\ell'})}\leq \frac{1}{1-\theta}\abs{\zeta_{\ell}-\zeta_{\ell'}}
\end{equation}
for all $x \in \mathcal{X}$. Therefore, 
\begin{equation}
    \sup_{x \in \mathcal{X}}\abs{f(x,\zeta_{\ell}) - f(x,\zeta_{\ell'})}\leq \frac{1}{1-\theta}\abs{\zeta_{\ell}-\zeta_{\ell'}}.
\end{equation}
Consequently, we have,
\begin{align}
    \sup_{x\in\mathcal{X}}\expectation{\left[ \abs{f(x,\zeta_{\ell}) - f(x,\zeta_{\ell'})}\right]}\leq \frac{1}{1-\theta}\expectation{\abs{\zeta_{\ell}-\zeta_{\ell'}}}\\
    \expectation{\left[\sup_{x\in\mathcal{X}}\abs{f(x,\zeta_{\ell}) - f(x,\zeta_{\ell'})}^2\right]}\leq \frac{1}{(1-\theta)^2}\expectation{[\abs{\zeta_{\ell}-\zeta_{\ell'}}^2]}.
\end{align}
The above two inequalities help us determine $\alpha$ and $\beta$, \textit{i.e.} the bias and variance convergence rate, essential for our multilevel simulation. The next step in our simulation is determining the number of samples to achieve a root-mean-squared error of $\epsilon$. In the multilevel paradigm, the theoretical analysis undertaken in the previous section suggests that taking,
\[
N_{\ell} = \left \lceil\frac{16}{\epsilon^2}(c_2\bar{\mathfrak{c}})^2 h_{\ell}^{\frac{\Bar{\beta}+2}{3}}\left(\sum_{\ell = 0}^L h_{\ell}^{\frac{\Bar{\beta}-1}{3}}\right)^2\right \rceil \text{ and } L = \left\lceil\frac{\log(2c_1h_0^{\alpha}\epsilon^{-1})}{\alpha\log(m)}\right\rceil
\]
would lead to root-mean-squared error of $\mathcal{O}(\epsilon)$. However, the formulation requires us to estimate various constants, which, given one is able to calculate, leads to very conservative numbers for $N_{\ell}$, whereas, in the application, we often do not have the computational budget to perform simulations based on these estimates. Then, how does one go about reaping the benefits of the multilevel approximation in the SAA paradigm? We resolve this question by roughly approximating the optimal solution by solving an SAA problem with very few samples, subsequently generating more samples based on this estimate for an accurate approximation. The idea is not particularly novel, with a similar approach being studied by in \cite{pasupathy2010choosing}, where they intend to progressively increase the sample size based on some performance criteria of the previous estimation, discussion about which is beyond the scope of this paper. Here, we consider a version of the procedure where instead of progressively increasing the sample size, we estimate the number of samples based on the formulas presented in Chapter 9 of \cite{pages2018numerical} both in the context of MLMC and Monte Carlo estimation. The formula and algorithm, both in the context of Monte Carlo SAA and MLMC-SAA, are presented in Appendix \ref{appendix:A}.

\subsection{Geometric Brownian Motion}
For our first example, we shall consider an investment consisting of a short position in a single put option, where the loss is defined as
\[
\zeta := (K-X_T)_+ - e^{rT}P_0,
\]
with $P_0$ being the initial price at which the option was sold. We assume the underlying stock $X_t$ follows gBm, \textit{i.e.},
\[
 dX_t = rX_tdt + \sigma X_tdW_t
\]
and further assume $X_0 = 100$, $r = 0.05$, $\sigma = 0.2$, $T = 1$, $P_0 = 10.7$ and (strike price) $K = 110$ for our simulation \cite{bardou2009computing}. In order to undergo our simulation, we discretize the gBm using Euler-Maruyama and Milstein numerical scheme, given as,
\begin{align}
X_{n+1} = X_n + rX_nh + \sigma X_n \Delta W_n  ~~~~~\text{(Euler-Marumaya)}\\
X_{n+1} = X_n + (r-\frac{1}{2}\sigma^2)hX_n + \sigma X_n \Delta W_n + \frac{1}{2}\sigma^2X_n(\Delta W_n)^2~~~~~~\text{(Milstein)}, 
\end{align}
where $h = T/m$ is the step size and $\Delta W_n = W_{n+1} - W_{n}$. Here, we take $m = 4$ as the refinement factor $h_0 = T$ for the purpose of our simulation. The value of $\alpha$ and $\beta$ can be estimated based on equation (27) and (28). Referring to the analysis in \cite{giles2019analysis}, we have, $\alpha = 1$ and $\beta = 1$ for Euler-Maruyama scheme and $\alpha = 1$ and $\beta = 2$ for Milstein Scheme. Also, we take $a = 10^{-3}$. Based on these parameters, we undertake our simulation where we perform a $100$ independent run of algorithm \ref{algo:mc_saa} and \ref{algo:mlmc_saa} (see Appendix \ref{appendix:A}), performing minimization over the interval $\mathcal{X} = [23,25]$ for $\theta = 0.95$. Finally, we estimate the optimal value as the average of all the independent simulations. We also estimate the RMSE and the $\probability{(\abs{\hat{\mathfrak{p}}^{*,\cdot} - \mathfrak{p}^*}>\epsilon)}$, for a given $\epsilon$. For $\theta = 0.95$, the value of CVaR($\zeta$) is approximately equal to $30.347$. The results from these simulations are tabulated in Table \ref{tab:mlmc_eu},\ref{tab:mc_eu},\ref{tab:mc_mil} and \ref{tab:mlmc_mil}. Figure \ref{fig:rmse_sde} depicts the graphical representation of the tabulated results where we observe the computational saving achieved by MLMC-SAA.

\begin{table}[H]
\begin{tabular}{|c|c|c|c|c|c|c|c|}
\hline
$\epsilon$ & $h_0$   & Bias       & Variance   & RMSE   & $\probability{(\abs{\hat{\mathfrak{p}}^{*,\cdot} - \mathfrak{p}^*}>\epsilon)}$ & Cost       & Value  \\ \hline
0.5000     & 1.00000 & 2.7188e-01 & 3.0176e-01 & 0.6129 & 0.420                                                                          & 1.8381e+05 & 30.619 \\ \hline
0.2500     & 1.00000 & 5.5760e-02 & 7.9016e-02 & 0.2866 & 0.420                                                                          & 1.2872e+06 & 30.403 \\ \hline
0.1250     & 1.00000 & 4.9798e-02 & 2.0944e-02 & 0.1530 & 0.420                                                                          & 9.2293e+06 & 30.397 \\ \hline
0.0625     & 1.00000 & 1.1595e-02 & 5.8147e-03 & 0.0771 & 0.440                                                                          & 7.5802e+07 & 30.359 \\ \hline
0.0312     & 1.00000 & 2.2947e-02 & 1.3838e-03 & 0.0437 & 0.420                                                                          & 4.8290e+08 & 30.370 \\ \hline
\end{tabular}
\caption{MLMC SAA estimation of CVaR - Euler Maruyama Scheme}
\label{tab:mlmc_eu}
\end{table}

\begin{table}[H]
\begin{tabular}{|c|c|c|c|c|c|c|c|}
\hline
$\epsilon$ & $h_0$   & Bias       & Variance   & RMSE   & $\probability{(\abs{\hat{\mathfrak{p}}^{*,\cdot} - \mathfrak{p}^*}>\epsilon)}$ & Cost       & Value  \\ \hline
0.5000     & 0.50000 & 2.3638e+00 & 5.0417e-01 & 2.4682 & 0.990                                                                          & 1.0498e+04 & 32.711 \\ \hline
0.2500     & 0.25000 & 1.1381e+00 & 5.8845e-02 & 1.1637 & 1.000                                                                          & 8.7483e+04 & 31.485 \\ \hline
0.1250     & 0.12500 & 5.4815e-01 & 1.1016e-02 & 0.5581 & 1.000                                                                          & 6.6049e+05 & 30.895 \\ \hline
0.0625     & 0.06250 & 2.8821e-01 & 2.9261e-03 & 0.2932 & 1.000                                                                          & 5.6448e+06 & 30.635 \\ \hline
0.0312     & 0.03125 & 1.5071e-01 & 7.7410e-04 & 0.1533 & 1.000                                                                          & 4.4189e+07 & 30.498 \\ \hline
\end{tabular}
\caption{Monte Carlo SAA estimation of CVaR - Euler Maruyama Scheme}
\label{tab:mc_eu}
\end{table}

\begin{table}[H]
\begin{tabular}{|c|c|c|c|c|c|c|c|}
\hline
$\epsilon$ & $h_0$   & Bias       & Variance   & RMSE   & $\probability{(\abs{\hat{\mathfrak{p}}^{*,\cdot} - \mathfrak{p}^*}>\epsilon)}$ & Cost       & Value  \\ \hline
0.5000     & 0.50000 & 9.4734e-01 & 1.3697e-01 & 1.0171 & 0.900                                                                          & 1.7462e+04 & 29.400 \\ \hline
0.2500     & 0.25000 & 3.8208e-01 & 2.5882e-02 & 0.4146 & 0.820                                                                          & 1.3750e+05 & 29.965 \\ \hline
0.1250     & 0.12500 & 1.8202e-01 & 7.8050e-03 & 0.2023 & 0.690                                                                          & 1.0405e+06 & 30.165 \\ \hline
0.0625     & 0.06250 & 8.2207e-02 & 1.9756e-03 & 0.0935 & 0.690                                                                          & 8.4570e+06 & 30.265 \\ \hline
0.0312     & 0.03125 & 3.9240e-02 & 5.6320e-04 & 0.0459 & 0.610                                                                          & 7.3093e+07 & 30.308 \\ \hline
\end{tabular}
\caption{Monte Carlo SAA estimation of CVaR - Milstein Scheme}
\label{tab:mc_mil}
\end{table}

\begin{table}[H]
\begin{tabular}{|c|c|c|c|c|c|c|c|}
\hline
$\epsilon$ & $h_0$   & Bias       & Variance   & RMSE   & $\probability{(\abs{\hat{\mathfrak{p}}^{*,\cdot} - \mathfrak{p}^*}>\epsilon)}$ & Cost       & Value  \\ \hline
0.5000     & 1.00000 & 1.7175e-01 & 1.5618e-01 & 0.4309 & 0.280                                                                          & 2.0223e+04 & 30.175 \\ \hline
0.2500     & 1.00000 & 7.9033e-02 & 4.3766e-02 & 0.2236 & 0.310                                                                          & 8.7559e+04 & 30.268 \\ \hline
0.1250     & 1.00000 & 3.2663e-02 & 8.4547e-03 & 0.0976 & 0.230                                                                          & 6.1980e+05 & 30.314 \\ \hline
0.0625     & 1.00000 & 1.1434e-02 & 2.4114e-03 & 0.0504 & 0.200                                                                          & 2.6256e+06 & 30.336 \\ \hline
0.0312     & 1.00000 & 3.1578e-04 & 5.4172e-04 & 0.0233 & 0.200                                                                          & 1.9655e+07 & 30.347 \\ \hline
\end{tabular}
\caption{MLMC SAA estimation of CVaR - Milstein Scheme}
\label{tab:mlmc_mil}
\end{table}
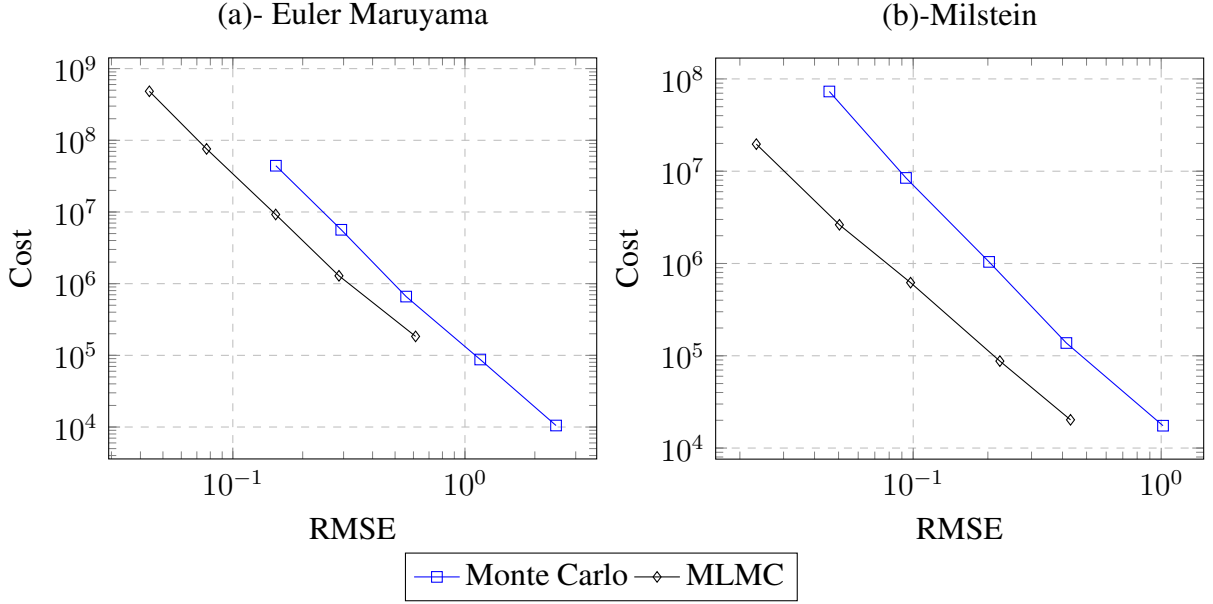
\begin{figure}
\pgfplotstableread{mc_saa_sde_eu.dat}{\mcrmse}
\pgfplotstableread{mlmc_saa_sde_eu.dat}{\mlmcrmse}
\pgfplotstableread{mc_saa_sde_ml.dat}{\mcrmseml}
\pgfplotstableread{mlmc_saa_sde_ml.dat}{\mlmcrmseml}  
\begin{center}
\begin{tikzpicture}[scale=1]
\begin{loglogaxis}[
legend columns=-1,
legend entries={Monte Carlo, MLMC},
legend to name=named,
minor tick num=1,
xlabel= {RMSE},ylabel = {Cost},
    ymajorgrids=true,
    xmajorgrids=true,
    grid style=dashed,
    title = {(a)- Euler Maruyama}]
\addplot [color = blue,mark  = square] table [x={RMSE}, y expr=\thisrow{Cost}] {\mcrmse};
\addplot [color = black,mark = diamond] table [x={RMSE}, y expr=\thisrow{Cost}] {\mlmcrmse};
\end{loglogaxis}
\end{tikzpicture}
\begin{tikzpicture}[scale=1]
\begin{loglogaxis}[
minor tick num=1,
xlabel= {RMSE},ylabel = {Cost},
    ymajorgrids=true,
    xmajorgrids=true,
    grid style=dashed,title = {(b)-Milstein}]
\addplot [color = blue,mark  = square] table [x={RMSE}, y expr=\thisrow{Cost}] {\mcrmseml};
\addplot [color = black,mark = diamond] table [x={RMSE}, y expr=\thisrow{Cost}] {\mlmcrmseml};
\end{loglogaxis}
\end{tikzpicture}
\ref{named}
\end{center}
\caption{(a) Euler-Maruyama Approximation- Computational Cost as a function of RMSE log-log scale. (b) Milstein Approximation- Computational Cost as function of RMSE log-log sclae}
\label{fig:rmse_sde}
\end{figure}

\subsection{Nested Simulation}
For the second example, we refer to the research carried out in \cite{gordy2010nested,giles2019multilevel}, where the authors formulated the estimation of CVaR as a nested expectation problem. Consequently, for our simulation, we define $\zeta$ as follows,
\begin{equation}
    \zeta:= -1 - \expectation{[\phi(Y,Z)|Y]}
\end{equation}
where, $\displaystyle{\phi(y,z): = -\tau y^2 -2\sqrt{\tau(1-\tau)}yz - (1-\tau)z^2}$ and $y,z\in\mathbb{R}$. Also, $Y,Z$ are independent following normal distribution $\mathcal{N}(0,1)$. The above formulation considers an option with payoff $-W_T^2$ at time $T = 1$. The value of the option at a time $t$ is given by  $\displaystyle{P(t,y): = \expectation[-W_T^2|W_t = y]}$ and the loss  $\zeta$ is given as, $\zeta := P(0,0) - P(\tau,W_{\tau})$, where $\tau \in (0,1)$ is the time horizon.
For further information on the above formulation and the analytical calculations, refer to \cite{giles2019multilevel,crepey2023multilevel}. For now, observe that the above formulation leads to a conditional optimization problem. Therefore, the Monte Carlo approximation of $F(x)$ is based on generating inner and outer samples. Let the bias parameter $h = 1/M$ where $M$ is the number of inner samples and let $N$ denote the number of outer samples; then,
\begin{equation}
F^h_N(x) = \frac{1}{N}\sum_{k = 1}^N\left(x + \frac{1}{1-\theta}\left(-1-\frac{1}{M}\sum_{j = 1}^M\phi(Z_j,Y_{k})-x\right)_+\right)
\end{equation}
gives the Monte Carlo approximation of $F(x)$. We take $\tau = 0.5$ and $\theta = 0.975$ for our numerical simulation. For the practical implementation, we minimize over the interval $[1,4]$. For the multilevel approximation, let
\[
\hat{E}_{M_{\ell}}(Y_k) = -1-\frac{1}{M_{\ell}}\sum_{j = 1}^{M_{\ell}}\phi(Z_j,Y_{k})
\] and further define,
\[
f(x,\hat{E}_{M_{\ell}}(Y_k)):= x + \frac{1}{1-\theta}\left(\hat{E}_{M_{\ell}}(Y_k)-x\right)_+ .
\] Then $\hat{F}_{L}(x)$, is given as
\begin{equation}
\hat{F}_{L}(x):=\sum_{\ell = 0}^L \frac{1}{N_{\ell}}\sum_{k = 1}^{N_{\ell}}\left(f(x,\hat{E}_{M_{\ell}}(Y_k)) - f(x,\hat{E}_{M_{\ell-1}}(Y_k))\right).
\end{equation}
To this end, observe that for a given $Y$, we have,
\begin{align*}
    \sup_{x \in [1,4]}\babs{f(x,\hat{E}_{M_{\ell}}(Y))- f(x,\hat{E}_{M_{\ell-1}}(Y_k))}^2 \leq \frac{1}{(1-\theta)^{2}}\babs{\hat{E}_{M_{\ell}}(Y) - \hat{E}_{M_{\ell-1}}(Y)}^2
    \end{align*}
where the right-hand side of the above inequality is independent of $x$. Therefore, we have
\begin{align*}
    \expectation{ \left[\sup_{x \in [1,4]}\babs{f(x,\hat{E}_{M_{\ell}}(Y))- f(x,\hat{E}_{M_{\ell-1}}(Y_k))}^2\right]}\leq \frac{1}{(1-\theta)^{2}}\expectation{\left[\babs{\hat{E}_{M_{\ell}}(Y) - \hat{E}_{M_{\ell-1}}(Y)}^2\right]}.
\end{align*}
Now, as a consequence of Proposition 9.2 (a) in \cite{pages2018numerical}, Proposition 3 in \cite{gordy2010nested} and equation (27) and (28), we have $\beta = 1$ and $\alpha = 1$ albeit under some regularity assumptions. For multilevel simulation, we take $h_0 = 1/64$, \textit{i.e.}, $M = 64$ and take $a = 10^{-3}$. As before, we perform $100$ independent simulation and estimate the RMSE and $\probability{(\abs{\hat{\mathfrak{p}}^{*,\cdot} - \mathfrak{p}^*}>\epsilon)}$. Table \ref{tab:mlmc_nes} and \ref{tab:mc_nes} tabulate the results obtained through MLMC-SAA and Monte Carlo SAA, respectively. Finally, in figure \ref{fig:rmse_nested}, we provide a graphical representation depicting the computational saving achieved by MLMC-SAA in the nested simulation framework.
\begin{table}[H]
\begin{tabular}{|c|c|c|c|c|c|c|c|}
\hline
$\epsilon$ & $h_0$   & Bias       & Variance   & RMSE   & $\probability{(\abs{\hat{\mathfrak{p}}^{*,\cdot} - \mathfrak{p}^*}>\epsilon)}$ & Cost       & Value \\ \hline
0.5000     & 0.01562 & 3.3295e-02 & 1.5839e-01 & 0.3994 & 0.210                                                                          & 4.6912e+04 & 2.877 \\ \hline
0.2500     & 0.01562 & 7.5835e-03 & 5.0063e-02 & 0.2239 & 0.290                                                                          & 1.5027e+05 & 2.918 \\ \hline
0.1250     & 0.01562 & 9.4439e-03 & 1.1780e-02 & 0.1089 & 0.210                                                                          & 5.9360e+05 & 2.901 \\ \hline
0.0625     & 0.01562 & 2.2478e-03 & 3.1485e-03 & 0.0562 & 0.260                                                                          & 3.7132e+06 & 2.912 \\ \hline
0.0312     & 0.01562 & 4.3228e-04 & 6.5439e-04 & 0.0256 & 0.250                                                                          & 2.0315e+07 & 2.910 \\ \hline
\end{tabular}
\caption{MLMC SAA estimation of CVaR - Nested}
\label{tab:mlmc_nes}
\end{table}
\begin{table}[H]
\begin{tabular}{|c|c|c|c|c|c|c|c|}
\hline
$\epsilon$ & $h_0$   & Bias       & Variance   & RMSE   & $\probability{(\abs{\hat{\mathfrak{p}}^{*,\cdot} - \mathfrak{p}^*}>\epsilon)}$ & Cost       & Value \\ \hline
0.5000     & 0.01562 & 1.7884e-02 & 2.5430e-01 & 0.5046 & 0.320                                                                          & 2.1908e+04 & 2.892 \\ \hline
0.2500     & 0.00781 & 1.1132e-02 & 6.4929e-02 & 0.2551 & 0.290                                                                          & 1.6616e+05 & 2.899 \\ \hline
0.1250     & 0.00391 & 2.6002e-03 & 1.9075e-02 & 0.1381 & 0.320                                                                          & 1.2272e+06 & 2.913 \\ \hline
0.0625     & 0.00195 & 1.3558e-03 & 4.6695e-03 & 0.0683 & 0.350                                                                          & 9.5507e+06 & 2.911 \\ \hline
0.0312     & 0.00098 & 9.7929e-03 & 1.3391e-03 & 0.0379 & 0.390                                                                          & 8.1160e+07 & 2.900 \\ \hline
\end{tabular}
\caption{Monte Carlo SAA estimation of CVaR - Nested}
\label{tab:mc_nes}
\end{table}
\pgfplotstableread{mc_saa_nested.dat}{\mcrmsenes}
\pgfplotstableread{mlmc_saa_nested.dat}{\mlmcrmsenes}
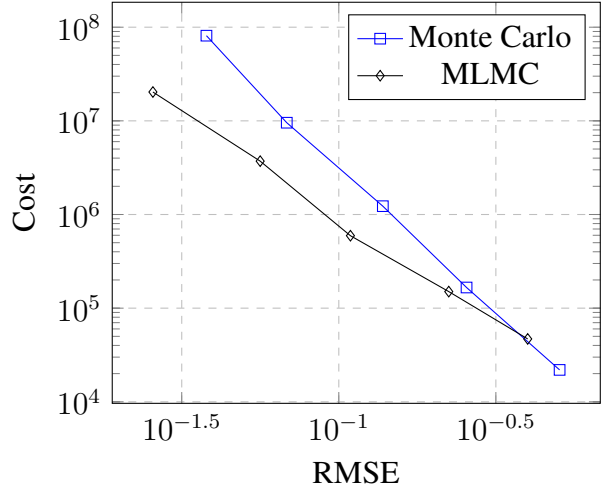
\begin{figure}
\begin{center}

\begin{tikzpicture}[scale=1]
\begin{loglogaxis}[minor tick num=1,
xlabel= {RMSE},ylabel = {$\text{Cost}$}, legend pos=north east,
    ymajorgrids=true,
    xmajorgrids=true,
    grid style=dashed]
\addplot [color = blue,mark  = square] table [x={RMSE}, y expr=\thisrow{Cost}] {\mcrmsenes};
\addplot [color = black,mark = diamond] table [x={RMSE}, y expr=\thisrow{Cost}] {\mlmcrmsenes};
 \legend{Monte Carlo,MLMC}
\end{loglogaxis}
\end{tikzpicture}
\end{center}
\caption{Nested Simulation-Computational Cost as a function of RMSE. log-log scale.}
\label{fig:rmse_nested}
\end{figure}

\section{Discussion and Conclusion}
In this paper, we looked into a discussion about the Sample average approximation problem,  where the random variable $\zeta$ is sampled from an approximate distribution, introducing bias in the Monte Carlo estimation of the expectation. We extended the conventional SAA setup to the multilevel framework to enhance the computational cost associated with performing the optimization procedure. Following the traditional analysis, we undertook the uniform convergence analysis, establishing the rate of convergence and the sampling complexity results both in the standard and multilevel context. We further analysed RMSE and derived the sample complexity results to achieve $\epsilon-$RMSE. Finally, we demonstrated the benefits of incorporating MLMC via a series of numerical examples. 

Although the results in the section \ref{section:numerical_illustration} do illustrate the advantages of MLMC incorporation, the underlying algorithm presented (Appendix \ref{appendix:A}) is heuristic and does require an extensive study. In this regard, one can look into the Retrospective Approximation studied in \cite{pasupathy2010choosing}, developing a more rigorous mathematical foundation for our algorithm. Also, the presented study provides a framework for further enhancements of the procedure in future works. For instance, Multilevel Richardson-Romberg extrapolation (ML2R), developed in \cite{pages2007multi} can be incorporated to improve the computational complexity associated with $\beta(\bar{\beta})\leq 1$. One can further enhance the cost by incorporating variance reduction techniques in the studied framework, such as Antithetic sampling \cite{giles2013antithetic}, importance sampling \cite{kebaier2018coupling,alaya2016improved}. These are various topics that we will explore in future research.

\bibliographystyle{elsarticle-num}
\bibliography{reference.bib}

\begin{thebibliography}{10}
\expandafter\ifx\csname url\endcsname\relax
  \def\url#1{\texttt{#1}}\fi
\expandafter\ifx\csname urlprefix\endcsname\relax\def\urlprefix{URL }\fi
\expandafter\ifx\csname href\endcsname\relax
  \def\href#1#2{#2} \def\path#1{#1}\fi

\bibitem{kleywegt2002sample}
A.~J. Kleywegt, A.~Shapiro, T.~Homem-de Mello, The sample average approximation
  method for stochastic discrete optimization, SIAM Journal on optimization
  12~(2) (2002) 479--502.

\bibitem{shapiro2021lectures}
A.~Shapiro, D.~Dentcheva, A.~Ruszczynski, Lectures on stochastic programming:
  modeling and theory, SIAM, 2021.

\bibitem{hong2017kernel}
L.~J. Hong, S.~Juneja, G.~Liu, Kernel smoothing for nested estimation with
  application to portfolio risk measurement, Operations Research 65~(3) (2017)
  657--673.

\bibitem{hu2020sample}
Y.~Hu, X.~Chen, N.~He, Sample complexity of sample average approximation for
  conditional stochastic optimization, SIAM Journal on Optimization 30~(3)
  (2020) 2103--2133.

\bibitem{wang2017stochastic}
M.~Wang, E.~X. Fang, H.~Liu, Stochastic compositional gradient descent:
  algorithms for minimizing compositions of expected-value functions,
  Mathematical Programming 161 (2017) 419--449.

\bibitem{yermol1971general}
Y.~M. Yermol'yev, A general stochastic programming problem (1971).

\bibitem{ermoliev2013sample}
Y.~M. Ermoliev, V.~I. Norkin, Sample average approximation method for compound
  stochastic optimization problems, SIAM Journal on Optimization 23~(4) (2013)
  2231--2263.

\bibitem{dentcheva2017statistical}
D.~Dentcheva, S.~Penev, A.~Ruszczy{\'n}ski, Statistical estimation of composite
  risk functionals and risk optimization problems, Annals of the Institute of
  Statistical Mathematics 69 (2017) 737--760.

\bibitem{rockafellar2000optimization}
R.~T. Rockafellar, S.~Uryasev, et~al., Optimization of conditional
  value-at-risk, Journal of risk 2 (2000) 21--42.

\bibitem{ruszczynski2006optimization}
A.~Ruszczy{\'n}ski, A.~Shapiro, Optimization of convex risk functions,
  Mathematics of operations research 31~(3) (2006) 433--452.

\bibitem{gordy2010nested}
M.~B. Gordy, S.~Juneja, Nested simulation in portfolio risk measurement,
  Management Science 56~(10) (2010) 1833--1848.

\bibitem{broadie2011efficient}
M.~Broadie, Y.~Du, C.~C. Moallemi, Efficient risk estimation via nested
  sequential simulation, Management Science 57~(6) (2011) 1172--1194.

\bibitem{giles2008multilevel}
M.~B. Giles, Multilevel monte carlo path simulation, Operations research 56~(3)
  (2008) 607--617.

\bibitem{giles2008improved}
M.~Giles, Improved multilevel monte carlo convergence using the milstein
  scheme, in: Monte Carlo and Quasi-Monte Carlo Methods 2006, Springer, 2008,
  pp. 343--358.

\bibitem{giles2013antithetic}
M.~B. Giles, L.~Szpruch, Antithetic multilevel monte carlo estimation for
  multidimensional sdes, in: Monte Carlo and Quasi-Monte Carlo Methods 2012,
  Springer, 2013, pp. 367--384.

\bibitem{pages2018numerical}
G.~Pag{\`e}s, Numerical probability, Universitext, Springer (2018).

\bibitem{ruszczynski2003stochastic}
A.~Ruszczy{\'n}ski, A.~Shapiro, Stochastic programming models, Handbooks in
  operations research and management science 10 (2003) 1--64.

\bibitem{wellner2013weak}
J.~Wellner, et~al., Weak convergence and empirical processes: with applications
  to statistics, Springer Science \& Business Media, 2013.

\bibitem{lam2018bounding}
H.~Lam, H.~Qian, Bounding optimality gap in stochastic optimization via
  bagging: Statistical efficiency and stability, arXiv preprint
  arXiv:1810.02905 (2018).

\bibitem{heinrich2001multilevel}
S.~Heinrich, Multilevel monte carlo methods, in: International Conference on
  Large-Scale Scientific Computing, Springer, 2001, pp. 58--67.

\bibitem{giles2019multilevel}
M.~B. Giles, A.-L. Haji-Ali, Multilevel nested simulation for efficient risk
  estimation, SIAM/ASA Journal on Uncertainty Quantification 7~(2) (2019)
  497--525.

\bibitem{haji2022adaptive}
A.-L. Haji-Ali, J.~Spence, A.~L. Teckentrup, Adaptive multilevel monte carlo
  for probabilities, SIAM Journal on Numerical Analysis 60~(4) (2022)
  2125--2149.

\bibitem{dereich2021general}
S.~Dereich, General multilevel adaptations for stochastic approximation
  algorithms ii: Clts, Stochastic Processes and their Applications 132 (2021)
  226--260.

\bibitem{frikha2016multi}
N.~Frikha, Multi-level stochastic approximation algorithms (2016).

\bibitem{crepey2023multilevel}
S.~Cr{\'e}pey, N.~Frikha, A.~Louzi, A multilevel stochastic approximation
  algorithm for value-at-risk and expected shortfall estimation, arXiv preprint
  arXiv:2304.01207 (2023).

\bibitem{banholzer2022rates}
D.~Banholzer, J.~Fliege, R.~Werner, On rates of convergence for sample average
  approximations in the almost sure sense and in mean, Mathematical Programming
   1--39.

\bibitem{Mak1999Feb}
W.~K. Mak, D.~P. Morton, R.~K. Wood, {Monte Carlo bounding techniques for
  determining solution quality in stochastic programs}, Oper. Res. Lett. 24~(1)
  (1999) 47--56.
\newblock \href {https://doi.org/10.1016/S0167-6377(98)00054-6}
  {\path{doi:10.1016/S0167-6377(98)00054-6}}.

\bibitem{Homem-de-Mello2014Jan}
T.~Homem-de Mello, G.~Bayraksan, {Monte Carlo sampling-based methods for
  stochastic optimization}, Surveys in Operations Research and Management
  Science 19~(1) (2014) 56--85.
\newblock \href {https://doi.org/10.1016/j.sorms.2014.05.001}
  {\path{doi:10.1016/j.sorms.2014.05.001}}.

\bibitem{pasupathy2010choosing}
R.~Pasupathy, On choosing parameters in retrospective-approximation algorithms
  for stochastic root finding and simulation optimization, Operations Research
  58~(4-part-1) (2010) 889--901.

\bibitem{bardou2009computing}
O.~Bardou, N.~Frikha, G.~Pages, Computing var and cvar using stochastic
  approximation and adaptive unconstrained importance sampling (2009).

\bibitem{giles2019analysis}
M.~B. Giles, K.~Debrabant, A.~R{\"o}ssler, Analysis of multilevel monte carlo
  path simulation using the milstein discretisation, Discrete \& Continuous
  Dynamical Systems-B 24~(8) (2019) 3881.

\bibitem{pages2007multi}
G.~Pag{\`e}s, Multi-step richardson-romberg extrapolation: remarks on variance
  control and complexity (2007).

\bibitem{kebaier2018coupling}
A.~Kebaier, J.~Lelong, Coupling importance sampling and multilevel monte carlo
  using sample average approximation, Methodology and Computing in Applied
  Probability 20~(2) (2018) 611--641.

\bibitem{alaya2016improved}
M.~B. Alaya, K.~Hajji, A.~Kebaier, Improved adaptive multilevel monte carlo and
  applications to finance, arXiv preprint arXiv:1603.02959 (2016).

\end{thebibliography}
\newpage
\begin{appendices}
\section{Appendix}
\label{appendix:A}
\begin{algorithm}
    \caption{Monte Carlo SAA}
    \label{algo:mc_saa}
    \textbf{Input}: $\epsilon,\hat{N}_0,\alpha$\\
    \textit{Step 1}: $\displaystyle{h_0 = \mathcal{O}(\epsilon^{\frac{1}{\alpha}})}$\\ 
    \textit{Step 2}: Generate $\zeta^1_{h_0},\zeta^2_{h_0},\dots,\zeta^{\hat{N}_0}_{h_0}$ independent and identically distributed sample of the random variable $\zeta_{h_0}$.\\
    \textit{Step 3}: Estimate $\displaystyle{\hat{x} = \arg\min_{x\in \mathcal{X}}\frac{1}{\hat{N}_0}\sum_{k = 1}^{\hat{N}_0} f(x,\zeta^{k}_{h_0})}$.\\
    \textit{Step 4}: Estimate $\displaystyle{\sigma_{\hat{x}}^2 = \mathbb{V}\left[\frac{1}{\hat{N}_0}\sum_{k = 1}^{\hat{N}_0} f(\hat{x},\zeta^{k}_{h_0})\right]}$.\\
    \textit{Step 5}: Estimate $\displaystyle{N = \left\lceil\left(1+\frac{1}{2\alpha}\right) \frac{\sigma_{\hat{x}}^2}{\epsilon^2}\right\rceil}$.\\
    \textit{Step 6}: If $N>\hat{N}_0$, generate $N-\hat{N}_0$ extra samples of the random variable $\zeta_{h_0}$.\\
    \textit{Step 7}: Solve the optimization problem,\\
    \[
    \hat{\mathfrak{p}}^{*,h_0} = \min_{x\in\mathcal{X}}\frac{1}{N}\sum_{k = 1}^{N} f(x,\zeta^{k}_{h_0}).
    \]
 \textbf{Return}:  $\displaystyle{\hat{\mathfrak{p}}^{*,h_0}}.$
\end{algorithm}
\begin{algorithm}
    \caption{MLMC SAA}
    \label{algo:mlmc_saa}
    \textbf{Input}: $\epsilon,h_0,m,\hat{N}_0,\alpha,\beta$\\ 
    \textit{Step 1}: Generate $\zeta^1_{h_0},\zeta^2_{h_0},\dots,\zeta^{\hat{N}_0}_{h_0}$ independent and identically distributed sample of the random variable $\zeta_{h_0}$.\\
    \textit{Step 2}: Estimate $\displaystyle{\hat{x} = \arg\min_{x\in \mathcal{X}}\frac{1}{\hat{N}_0}\sum_{k = 1}^{\hat{N}_0} f(x,\zeta^{k}_{h_0})}$.\\
    \textit{Step 3}: Estimate $\displaystyle{h, L, \{N_{\ell}\}_{0\leq \ell \leq L}}$ using the formulas in Table \ref{table:mlmc_paramter}.\\
    \textit{Step 4}: For $\ell = 0,1,\dots,L$, generate $(\zeta^1_{\ell},\zeta^1_{\ell-1}),(\zeta^2_{\ell},\zeta^2_{\ell-1}),\dots,(\zeta^{N_{\ell}}_{\ell},\zeta^{N_{\ell}}_{\ell-1})$ independent and identically distributed samples of $(\zeta_{\ell},\zeta_{\ell-1})$.\\
    \textit{Step 5}: Solve the optimization problem,\\
    \[
    \hat{\mathfrak{p}}^{*,L} = \min_{x\in\mathcal{X}}\sum_{\ell = 0}^L\frac{1}{N_{\ell}}\sum_{k = 1}^{N_{\ell}} \left(f(x,\zeta^{k}_{\ell}) - f(x,\zeta^{k}_{\ell-1})\right).
    \]
 \textbf{Return}:  $\displaystyle{\hat{\mathfrak{p}}^{*,L}}.$
\end{algorithm}
\begin{table}[H]
\centering
\begin{tabular}{|c|>{\centering\arraybackslash}m{0.8\textwidth}|}
\hline
$m$ & $m_{\ell} = m^{\ell}, \; \ell = 0, \ldots, L $ \\
\hline
$L = L^*(\epsilon)$ & 
$\displaystyle{1 + \left\lceil \displaystyle{\frac{\log \left( (1+2\alpha)^{\frac{1}{2\alpha}} \left( \frac{\left| c_1 \right|} {\epsilon} \right)^{\frac{1}{\alpha}} h_0 \right)}{\log(m)}} \right\rceil}$ \\
\hline
$h = h^*(\epsilon)$ & 
$\displaystyle{\frac{h_0} {\left\lceil h_0 \left( 1 + 2\alpha \right)^{\frac{1}{2\alpha}} \left( \frac{\left| c_1 \right|} {\epsilon} \right)^{\frac{1}{\alpha}} m^{-L} \right\rceil }}$ \\
\hline
$q = q^*(\epsilon)$ & 
$\displaystyle{q_0(\epsilon) = \frac{1}{q_{\epsilon}^{\dagger}}, \quad q_{\ell}(\epsilon) = \frac{\lambda h^{\frac{\beta}{2}} \left( m_{\ell-1}^{-1} - m_{\ell}^{-1} \right)^{\frac{\beta}{2}}}{q_{\epsilon}^{\dagger} \sqrt{m_{\ell-1}+ m_{\ell}}}, \quad \ell = 1, \ldots, L,}$ \\
& \text{with } $\displaystyle{q_{\epsilon}^{\dagger} \text{ s.t. } \sum_{\ell=0}^{L} q_{\ell}(\epsilon) = 1 \text{ and } \lambda = \frac{V_1(\hat{x})}{V_h(\hat{x})}.}$ \\
\hline

$N = N^*(\epsilon)$ & 
$\displaystyle{\left \lceil \frac{\left( 1 + \frac{1}{2\alpha} \right) V_h(\hat{x}) q_{\epsilon}^{\dagger} \left( 1 + \lambda h^{\frac{\beta}{2}} \sum_{\ell=1}^{L} \left( m_{\ell-1}^{-1} - m_{\ell}^{-1} \right)^{\frac{\beta}{2}} \sqrt{m_{\ell-1}+ m_{\ell}} \right)}{\epsilon^2 }\right \rceil}$ \\
\hline
$N_{\ell} = N_{\ell}^*(\epsilon)$ & $\lceil N^*(\epsilon)q_{\ell}(\epsilon)\rceil$\\
\hline
\end{tabular}
\caption{Parameters for MLMC-SAA}
\label{table:mlmc_paramter}
\end{table}
The reader may refer to the Practitioner's Corner of section 9.5.2 in \cite{pages2018numerical} for a discussion on the calibration of the parameters $V_1(\hat{x})$,$V_h(\hat{x})$ and $c_1$. Remember that we do all the calculations after estimating $\hat{x}$ in step 2 in algorithm \ref{algo:mlmc_saa}.

\section{Appendix}
\label{appendix:B}
\begin{proof}[Proof (Proposition \ref{proposition:optimal_gap_mc})]
    Observe that as a consequence of the triangle inequality we have,
\[
\norm{\mathfrak{G}^h_N(\hat{x}) - \mathfrak{G}(\hat{x})}_2 = \norm{\mathfrak{G}^h_N(\hat{x}) -\mathfrak{G}^h(\hat{x})+\mathfrak{G}^h(\hat{x})- \mathfrak{G}(\hat{x})}_2 \leq \norm{\mathfrak{G}^h_N(\hat{x}) -\mathfrak{G}^h(\hat{x})}_2 + \norm{\mathfrak{G}^h(\hat{x})- \mathfrak{G}(\hat{x})}_2.
\]
The second term in the above inequality is bounded as,
\[
\norm{\mathfrak{G}^h(\hat{x})- \mathfrak{G}(\hat{x})}_2 \leq \norm{F_h(\hat{x}) - F(\hat{x})}_2 + \norm{\mathfrak{p}^{*,h} - \mathfrak{p}^{*}}_2
\]
Now $\displaystyle{\norm{F_h(\hat{x}) - F(\hat{x})}_2 \leq \sup_{x \in \mathcal{X}} \abs{\mathbb{E}\left(f(x, \zeta_h) - f(x, \zeta)\right)}\leq c_1h^{\alpha}}$, and 
\[\displaystyle{\norm{\mathfrak{p}^{*,h} - \mathfrak{p}^{*}}_2\leq \left(\expectation{\left[\sup_{x\in\mathcal{X}}|\expectation{[f(x,\zeta_h)]}-\expectation{[f(x,\zeta)]}|^2\right]}\right)^{1/2}}\leq c_1h^{\alpha},\]
therefore, $\displaystyle{\norm{\mathfrak{G}^h(\hat{x})- \mathfrak{G}(\hat{x})}_2\leq 2c_1h^{\alpha}}$. As for the first term \textit{i.e.},$\displaystyle{\norm{\mathfrak{G}^h_N(\hat{x}) -\mathfrak{G}^h(\hat{x})}_2}$ observe that,
\[
\norm{\mathfrak{G}^h_N(\hat{x}) -\mathfrak{G}^h(\hat{x})}_2 \leq \bnorm{\frac{1}{N}\sum_{k = 1}^N f(\hat{x},\zeta_h^k) - \expectation[f(\hat{x},\zeta_h)]}_2 + \bnorm{\min_{x \in \mathcal{X}}\frac{1}{N}\sum_{k = 1}^N f(x,\zeta_h^k) -\min_{x \in \mathcal{X}}\expectation[f(x,\zeta_h)]}_2,
\]
For the first term in the above equation, we have,
\begin{align}
    \bnorm{\frac{1}{N}\sum_{k = 1}^N f(x,\zeta_h^k) - \expectation[f(x,\zeta_h)]}_2 &= \left(\expectation{\left[\left(\frac{1}{N}\sum_{k = 1}^N f(\hat{x},\zeta_h^k) - \expectation[f(\hat{x},\zeta_h)\right)^2\right]}\right)^{\frac{1}{2}}\nonumber\\
    &=\left(\expectation{\left[\left(\frac{1}{N}\sum_{k = 1}^N \left(f(\hat{x},\zeta_h^k) - \expectation[f(\hat{x},\zeta_h)\right)\right)^2\right]}\right)^{\frac{1}{2}}\nonumber\\
    &=\left(\expectation{\left[\left(\frac{1}{N}\sum_{k = 1}^NZ_k^h(\hat{x})\right)^2\right]}\right)^{\frac{1}{2}}\nonumber\\
    &\leq \frac{\mathfrak{c}_2}{\sqrt{N}}\left(\expectation{[(f(\hat{x},\zeta_h)-\expectation[f(\hat{x},\zeta_h)])^2]}\right)^{1/2}\nonumber\\
    &\leq \frac{\mathfrak{c}_2}{\sqrt{N}}\sigma \nonumber
\end{align}
where the last two inequalities are the consequence of \cite{giles2019multilevel}(Lemma 2.5) and assumption \ref{variance_assumption}. Further, from the calculations in Theorem 5, we have,
\[
\bnorm{\min_{x \in \mathcal{X}}\frac{1}{N}\sum_{k = 1}^N f(x,\zeta_h^k) -\min_{x \in \mathcal{X}}\expectation[f(x,\zeta_h)]}_2 \leq  \mathfrak{c}_3\frac{\sigma}{\sqrt{N}}.
\]
By collating everything together and reassigning constants, we get the desired result.
\end{proof}
\begin{proof}[Proof (Proposition \ref{proposition:optimal_gap_mlmc})]
    To begin with, observe that from triangle inequality, we have,
\begin{align*}
\norm{\hat{\mathfrak{G}}^L(\hat{x}) - \mathfrak{G}(\hat{x})}_2 = \norm{\hat{\mathfrak{G}}^L(\hat{x}) - \mathfrak{G}^L(\hat{x})+\mathfrak{G}^L(\hat{x}) - \mathfrak{G}(\hat{x})}_2\leq \norm{\hat{\mathfrak{G}}^L(\hat{x}) - \mathfrak{G}^L(\hat{x})}_2 + \norm{\mathfrak{G}^L(\hat{x}) - \mathfrak{G}(\hat{x})}_2
\end{align*}
As before $\displaystyle{\norm{\mathfrak{G}^L(\hat{x}) - \mathfrak{G}(\hat{x})}_2 \leq 2c_1h_{L}^{\alpha}}$, therefore, we intend to study, $\displaystyle{\norm{\hat{\mathfrak{G}}^L(\hat{x}) - \mathfrak{G}^L(\hat{x})}_2}$. 
\begin{align*}
\bnorm{\hat{\mathfrak{G}}^L(\hat{x}) - \mathfrak{G}^L(\hat{x})}_2 &= \bnorm{\sum_{\ell = 0}^L \frac{1}{N_{\ell}}\sum_{k = 1}^{N_{\ell}}g(\hat{x},\bar{\zeta}^k_{\ell}) - \sum_{\ell = 0}^L \expectation{[g(\hat{x},\bar{\zeta}_{\ell})]} + \mathfrak{p}^{*,L} - \hat{\mathfrak{p}}^{*,L}}_2\\ &\leq \bnorm{\sum_{\ell = 0}^L \frac{1}{N_{\ell}}\sum_{k = 1}^{N_{\ell}}g(\hat{x},\bar{\zeta}^k_{\ell}) - \sum_{\ell = 0}^L \expectation{[g(\hat{x},\bar{\zeta}_{\ell})]}}_2 + \bnorm{\mathfrak{p}^{*,L} - \hat{\mathfrak{p}}^{*,L}}_2 
\end{align*}
Now,
\begin{align*}
\bnorm{\sum_{\ell = 0}^L \frac{1}{N_{\ell}}\sum_{k = 1}^{N_{\ell}}g(\hat{x},\bar{\zeta}^k_{\ell}) - \sum_{\ell = 0}^L \expectation{[g(\hat{x},\bar{\zeta}_{\ell})]}}_2 &\leq \bnorm{\frac{1}{N_0}\sum_{k = 1}^{N_0}\left(f(x,\zeta_0^k) - \expectation{f(x,\zeta_0)}\right)}_2\\
&+\bnorm{\sum_{\ell = 1}^L \frac{1}{N_{\ell}}\sum_{k = 1}^{N_{\ell}}\left(f(x,\zeta_{\ell}^k) - f(x,\zeta_{\ell-1}^k) - \expectation{[f(x,\zeta_{\ell}) - f(x,\zeta_{\ell-1})]}\right)}_2.
\end{align*}
Let 
\[
\mathcal{Z}_{\ell}(x) = \begin{cases}
    f(x,\zeta_0) - \expectation{f(x,\zeta_0)},\ell = 0\\
    f(x,\zeta_{\ell}) - f(x,\zeta_{\ell-1}) - \expectation{[f(x,\zeta_{\ell}) - f(x,\zeta_{\ell-1})]},\ell = 1,\dots,L
\end{cases}
\] then it is easy to $\mathcal{Z}_{\ell}(x)$ is a zero mean random variable, therefore by Lemma 2.5 in \cite{giles2019multilevel}, we have,
\begin{align*}
    \bnorm{\frac{1}{N_0}\sum_{k = 1}^{N_0}\left(f(x,\zeta_0^k) - \expectation{f(x,\zeta_0)}\right)}_2 &\leq \frac{\mathfrak{c}_2}{\sqrt{N_0}}\left(\expectation{[\mathcal{Z}_{0}(x)]^2}\right)^{1/2}\\
    \bnorm{\sum_{\ell = 1}^L \frac{1}{N_{\ell}}\sum_{k = 1}^{N_{\ell}}\left(f(x,\zeta_{\ell}^k) - f(x,\zeta_{\ell-1}^k) - \expectation{[f(x,\zeta_{\ell}) - f(x,\zeta_{\ell-1})]}\right)}_2 &\leq \mathfrak{c}_2\sum_{\ell = 0}^L\frac{1} {\sqrt{N_{\ell}}}\left(\expectation{[\mathcal{Z}_{\ell}(x)]^2}\right)^{1/2}.
\end{align*}
Consequently, by assumption \ref{assumption_mlmc_var}, we get,
\[
\bnorm{\sum_{\ell = 0}^L \frac{1}{N_{\ell}}\sum_{k = 1}^{N_{\ell}}g(\hat{x},\bar{\zeta}^k_{\ell}) - \sum_{\ell = 0}^L \expectation{[g(\hat{x},\bar{\zeta}_{\ell})]}}_2 \leq \sum_{\ell = 0}^L\frac{\mathfrak{c}_2}{\sqrt{N_{\ell}}}\left(\expectation{[\mathcal{Z}_{\ell}(x)]^2}\right)^{1/2} \leq \sum_{\ell = 0}^L\frac{c_2\mathfrak{c}_2}{\sqrt{N_{\ell}}}h_{\ell}^{\beta/2}.
\]
As for $\displaystyle{\bnorm{\mathfrak{p}^{*,L} - \hat{\mathfrak{p}}^{*,L}}_2 }$, we have from calculations in Theorem 6, that,
\[
\bnorm{\mathfrak{p}^{*,L} - \hat{\mathfrak{p}}^{*,L}}_2\leq 2c_2\bar{\mathfrak{c}}\sum_{\ell = 0}^{L}\frac{h_{\ell}^{\bar{\beta}/2}}{\sqrt{N_{\ell}}}
\]
where $\bar{\beta} = \beta\frac{1}{1+a}$. Clearly, for any $a>0$, $\bar{\beta}<\beta$, therefore we have,
\[
\bnorm{\hat{\mathfrak{G}}^L(\hat{x}) - \mathfrak{G}^L(\hat{x})}_2 \leq \mathfrak{c}_3\sum_{\ell = 0}^{L}\frac{h_{\ell}^{\bar{\beta}/2}}{\sqrt{N_{\ell}}}
\] for some constant $\mathfrak{c}_3$. Hence, the result follows.
\end{proof}
\end{appendices}
\end{document}